\DeclarePairedDelimiter{\ceil}{\lceil}{\rceil}
\newcommand{\Interest}{Interesting\ }
\newcommand{\interest}{interesting\ }
\newcommand{\heap}{\textsf{Heap}}
\newcommand{\sg}{\mathsf{sg}}
\newcommand{\cost}{\mathsf{Cost}}
\newcommand{\da}{\downarrow}
\newcommand{\bsni}{\bigskip\noindent}
\newcommand\restr[2]{{
  \left.\kern-\nulldelimiterspace 
  #1 
  \vphantom{\big|} 
  \right|_{#2} 
  }}
\newcommand{\eps}{\varepsilon}
\def\moverlay{\mathpalette\mov@rlay}
\def\mov@rlay#1#2{\leavevmode\vtop{%
   \baselineskip\z@skip \lineskiplimit-\maxdimen
   \ialign{\hfil$\m@th#1##$\hfil\cr#2\crcr}}}
\newcommand{\charfusion}[3][\mathord]{
    #1{\ifx#1\mathop\vphantom{#2}\fi
        \mathpalette\mov@rlay{#2\cr#3}
      }
    \ifx#1\mathop\expandafter\displaylimits\fi}
\newcommand{\size}[1]{\mathrm{size}}
\newcommand{\set}[2][ ]{\{#2 \ifthenelse{\equal{#1}{ }}{ }{~|~#1}\}}
\newcommand{\comment}[1]{}
\newcommand{\seepage}[2][See]{
    \marginnote{
        \scriptsize {#1} p.~\pageref{#2}
    }
}
\newcommand{\reuse}[1]{
	\expandafter\stepcounter{#1_help}
    \expandafter\label{#1_app}
    \csname#1\endcsname*
}
\newcommand{\danupon}[1]{{\bf \color{green} DANUPON: #1}}
\newcommand{\danupon}[1]{}
\newcommand{\Tres}{T_{rsp}}
\newcommand{\Tcut}{T_{cut}}
\newcommand{\cut}{{\sf cut}}
\title{Weighted Min-Cut: Sequential, Cut-Query and Streaming Algorithms}
\author{Sagnik Mukhopadhyay \thanks{KTH Royal Institute of technology, Sweden, \texttt{sagnik@kth.se}} \and Danupon Nanongkai \thanks{KTH Royal Institute of technology, Sweden, \texttt{danupon@kth.se}}}
\date{}
\begin{document}

\begin{titlepage}
	\maketitle
	\pagenumbering{roman}


\begin{abstract}

Consider the following {\em 2-respecting min-cut} problem. Given a weighted graph $G$ and its spanning tree $T$, find the minimum cut among the cuts that contain at most two edges in $T$. This problem is an important subroutine in Karger's celebrated randomized near-linear-time min-cut algorithm [STOC'96]. We present a new approach for this problem which can be easily implemented in many settings, leading to the following randomized min-cut algorithms for weighted graphs.

\begin{itemize}
	\item An $O\left(m\frac{\log^2 n}{\log\log n} + n\log^6 n\right)$-time sequential algorithm: This improves Karger's long-standing $O(m \log^3 n)$ and $O\left(m\frac{(\log^2 n)\log (n^2/m)}{\log\log n} + n\log^6 n\right)$ bounds when the input graph is not extremely sparse or dense. Improvements over Karger's bounds were previously known only under a rather strong assumption that the input graph is {\em simple} (unweighted without parallel edges) [Henzinger, Rao, Wang, SODA'17; Ghaffari, Nowicki, Thorup, SODA'20]. For unweighted graphs  (possibly with parallel edges) and using bit operations, our bound can be further improved to $O\left(m\frac{\log^{1.5} n}{\log\log n} + n\log^6 n\right)$. 
	\danupon{SAGNIK: Check}
	
	\item An algorithm that requires $\tilde O(n)$ {\em cut queries} to compute the min-cut of a weighted graph: This answers an open problem by Rubinstein, Schramm, and Weinberg [ITCS'18], who obtained a similar bound for simple graphs. Our bound is tight up to polylogarithmic factors. 
	
	\item A {\em streaming} algorithm that requires $\tilde O(n)$ space and $O(\log n)$ passes to compute the min-cut: The only previous non-trivial exact min-cut algorithm in this setting is the 2-pass $\tilde O(n)$-space algorithm on simple graphs [Rubinstein~et~al., ITCS'18] (observed by Assadi, Chen, and Khanna [STOC'19]). 
\end{itemize}

\danupon{Below was changed after discussing with Bryce Sandlund}

Our approach exploits some cute structural properties so that it only needs to compute the values of $\tilde O(n)$ cuts corresponding to removing $\tilde O(n)$ pairs of tree edges, an operation that can be done quickly in many settings.  This is in contrast to the techniques used by Karger and Lovett-Sandlund to solve 2-respecting min-cut where information about many more cuts is computed, stored in and accessed from sophisticated data-structures.
\end{abstract}

	\newpage
	\setcounter{tocdepth}{2}
	\tableofcontents
\end{titlepage}

\newpage
\pagenumbering{arabic}

\section{Introduction} \label{sec:intro}

\paragraph{Min-Cut.} Given a weighted graph $G$, a {\em cut} is a set of edges whose removal disconnects the graph. The {\em minimum cut} or {\em min-cut} is the cut with minimum total edge weight. The min-cut problem---finding the min-cut---is a classic graph optimization problem with countless applications. 
A long line of work spanning over many decades in the last century was concluded by the STOC'95 $O(m\log^3 n)$-time randomized algorithm of Karger \cite{Kar00}. (Throughout, we $n$ and $m$ denote the number of nodes and edges respectively.) Karger's approach remains the only approach to solve the problem in near-linear time for general graphs.  With more tricks, he improved the running time further to $O(m(\log^2 n)\log (n^2/m)/\log\log n + n\log^6 n)$), suggesting the possibility to improve the logarithmic factors further. He also suggested a few approaches that might improve the running time by a $\log n$ factor. Nevertheless, after more than two decades no further improvement was known.   


The only improvements over Karger's bound known to date are under a rather strong assumption that the input graph is {\em simple}, i.e. it is unweighted and contains no parallel edges. The first such bound was by the $O(m \log^2 n\log\log^2 n)$-time deterministic algorithm of Henzinger, Rao and Wang \cite{HenzingerRW17} (following the breakthrough $O(m\log^{12} n)$-time deterministic algorithm of Kawarabayashi and Thorup \cite{KawarabayashiT19}). More recently Ghaffari, Nowicki and Thorup \cite{GhaffariNT20} improved this bound further to $O(m \log n)$ and $O(m+n \log^3 n)$ with randomized algorithms. 

By restricting to simple graphs, progress has been also made in other settings: (I) In the {\em cut-query} setting, we are allowed to query for the value of a cut specified by a set of nodes. The goal is to compute the minimum cut value. Na\"ively, one can make $O(n^2)$ cut queries to reconstruct the graph itself. 
Rubinstein, Schramm and Weinberg \cite{RubinsteinSW18} showed a randomized algorithm that only needs $\tilde O(n)$ queries, but their algorithm works only on simple graphs; here $\tilde O$ hides polylogarithmic factors. 
(II) In the {\em multi-pass semi-streaming} setting, an algorithm with $\tilde O(n)$ space reads the input graph in multiple passes, where in each pass it reads one edge at a time in an adversarial order. 
%
Assadi, Chen and Khanna \cite{AssadiCK19} observed that the algorithm of Rubinstein \textit{et al.}~can be adapted to solve the problem in two passes.
When it comes to non-simple graphs, no non-trivial algorithms were known in both settings.

\paragraph{2-Respecting Min-Cut.} The main bottleneck in obtaining similar results on non-simple graphs is the lack of efficient algorithms for the {\em 2-respecting cut problem}. In this problem, we are given a spanning tree $T$ of $G$ and have to find the minimum cut in $G$ among the cuts that contain at most two edges in $T$ (we say that such cuts {\em 2-respect} $T$). 

Solving this problem is a core subroutine and the bottleneck in Karger's algorithm. Using tree packing, Karger proved that if $\Tres(m, n)$ is the time needed to find the 2-respecting min-cut, then the min-cut can be found in 
\begin{align}
\Tcut(m,n)=O(\Tres(m, n) \log n + m + n \log^3 n) \mbox{ and }\nonumber\\ \Tcut(m,n)=O\left(\Tres(m, n) \frac{\log n}{\delta\log\log n} + n \log^6 n + m \log^{1 + \delta} n\right)\label{eq:KargerCutFromTree}
\end{align}
time for any small $\delta > 0$.
His $O(m \log^3 n) $ bound for min-cut was obtained by using a sophisticated dynamic programming algorithm to find 2-respecting min-cut in $\Tres(m, n)=O(m\log^2 n)$ time. (A simplification of this algorithm is presented in \cite{LovettS19}.) With more tricks he showed that 
$\Tres(m,n)=O(m(\log n)\log (n^2/m))$, leading to the bound of  $\Tcut(m,n)=O\left(m\frac{\log^2 n\log (n^2/m)}{\delta\log\log n} + n\log^6 n + m \log^{1+ \delta} n\right)$. \footnote{It is to be noted that Karger's result states $\Tcut(m,n)=O(\Tres(m, n) \log n /\log\log n + n \log^6 n + m \log^2 n)$. But, with a more careful analysis, the $m \log^2 n$ factor in the $\Tcut$ expression can be made $m \log^{1 + \delta} n$ for arbitrarily small positive $\delta$ without severely affecting other terms of the expression. To achieve this, simply set the threshold for $\alpha$ to be $1/(4\log^\delta n)$ in the proof of Lemma 9.1 of \cite{Kar00}. It can be checked that the tree-packing can be done in time $O(n\log^6 n)$ with these new parameters.} 
%
%
Using efficient cut sparsifier algorithms that exist in the cut-query and streaming setting (e.g. \cite{AhnGM12-pods,AhnGM12-soda,GoelKP12,RubinsteinSW18}), it is also not hard to adapt Karger's arguments to the cut-query and streaming settings, leaving the 2-respecting min-cut as the main bottleneck.

\paragraph{Our Results.} 
We show a new way to solve the 2-respecting min-cut problem which can be turned into efficient algorithms in many settings. In the sequential setting, our algorithm takes $\Tres(m, n)=O(m\log n+n\log^4 n)$ time and is correct with high probability. For unweighted graphs (possibly with parallel edges), our bound can be improved to
$\Tres(m, n)=O(m\sqrt{\log n}+n\log^4 n)$ when bit operations are allowed. Plugging this in \Cref{eq:KargerCutFromTree}, we achieve the first improvement over Karger's bounds on general graphs. 


\begin{theorem}
	\label{thm:intro:seq}
A min-cut can be found w.h.p. in $O(m\frac{ \log^2 n}{\log \log n} + n \log^6 n)$ time. For unweighted graphs (possibly with parallel edges) and with bit operations, the bound can be improved to  $O(m\frac{ \log^{3/2} n}{\log \log n} + n \log^6 n)$.\footnote{Recall that with high probability (w.h.p.) means with probability at least $1-1/n^c$ for an arbitrarily large constant $c$.}
\end{theorem}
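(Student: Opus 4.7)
The end-to-end strategy is to plug an improved 2-respecting min-cut subroutine into Karger's reduction \Cref{eq:KargerCutFromTree}. Concretely, I will aim to show $\Tres(m,n)=O(m\log n+n\log^4 n)$ for general weighted graphs and $\Tres(m,n)=O(m\sqrt{\log n}+n\log^4 n)$ for unweighted graphs under a word-RAM model that allows bit-packed operations. Choosing $\delta=\Theta(1)$ (say $\delta=1/2$, which is folded into the $\log\log n$ factor) and substituting into \Cref{eq:KargerCutFromTree} then gives
\[
\Tcut(m,n)=O\!\left(\frac{(m\log n+n\log^4 n)\log n}{\log\log n}+n\log^6 n+m\log^{1+\delta}n\right)=O\!\left(\frac{m\log^2 n}{\log\log n}+n\log^6 n\right),
\]
and analogously $O(m\log^{3/2}n/\log\log n+n\log^6 n)$ in the unweighted bit-operation setting. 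So almost all of the work lies in the 2-respecting subroutine; the reduction itself is black-box.

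\textbf{Building the 2-respecting subroutine.} The plan, following the abstract's ``cute structural properties,'' is to avoid Karger's heavy dynamic-programming machinery altogether and instead isolate a list of $\tilde O(n)$ candidate pairs $(e,f)$ of tree edges with the guarantee that the optimum 2-respecting pair lies in the list. The list would be built from the tree $T$ itself using standard tree decomposition primitives: decompose $T$ into heavy paths, and for each node or each heavy path produce $O(\log n)$ ``interesting'' partner edges (e.g. via bough/top-tree style recursion, or via ``interested'' partners defined by the off-tree weight crossing subtrees). The total number of candidate pairs should then be $\tilde O(n)$, while the preprocessing to build the list costs $\tilde O(m+n\log^c n)$ — this is what produces the $n\log^4 n$ additive term.

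\textbf{Evaluating candidate pairs efficiently.} Once the candidates are fixed, the remaining task is to compute, for each candidate pair $(e,f)$, the cut value $\cost(e,f)$, and return the minimum. Using a Link-Cut/Euler-tour representation of $T$ and the standard identity that $\cost(e,f)$ is an additive combination of subtree-sums of edge weights with a correction for edges whose LCA sits in a specified relation to $e$ and $f$, each pair can be evaluated in amortized $O(\log n)$ time per crossing non-tree edge after a single pass that charges each edge of $G$ to $O(\log n)$ candidate evaluations — giving $O(m\log n)$ total. This yields the weighted bound. For the unweighted case, weights are implicit multiplicities and the per-edge contributions become $0/1$ tallies; packing $\Theta(\log n)$ bits per word lets one batch-process $\sqrt{\log n}$ candidate pairs per word-operation (or equivalently, amortize the $\log n$ factor down by $\sqrt{\log n}$ using a four-Russians-style table), giving $O(m\sqrt{\log n})$.

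\textbf{Main obstacle.} The substantive content is structural: proving that $\tilde O(n)$ cleverly chosen pairs truly suffice, i.e. that the optimum $(e^\star,f^\star)$ is always captured. The natural approach is a case split on whether $e^\star,f^\star$ lie on the same heavy path (a ``1-dimensional'' case amenable to range-minimum arguments over a single path) or on different heavy paths (where one uses the fact that each node has only $O(\log n)$ ancestor heavy paths and invokes a matching argument that pairs each heavy path with at most $O(\log n)$ ``interested'' partner paths). Making the second case run in near-linear total time — and in particular, arguing that the recursion that produces the candidate list does not blow up beyond $\tilde O(n)$ pairs — is the delicate step, and is the part I would expect to consume most of the technical work before the running-time accounting and the unweighted bit-parallel speed-up fall out routinely.
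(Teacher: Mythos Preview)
Your high-level plan is right and matches the paper: plug a faster 2-respecting subroutine into \Cref{eq:KargerCutFromTree}, and build that subroutine by (i) heavy-light decomposition, (ii) the ``interested edge'' structure to cut the candidate space to $\tilde O(n)$ pairs, and (iii) fast evaluation of $\cut(e,f)$ for those pairs. The paper also uses a Monge-matrix divide-and-conquer for the same-path case (column minima positions are monotone), not a range-minimum primitive; your one-liner there undersells the work but is not fatal.

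Where you diverge from the paper, and where there is a real gap, is in step (iii) and in the unweighted speedup. The paper does \emph{not} evaluate candidates by ``charging each edge of $G$ to $O(\log n)$ candidate evaluations'' via Link-Cut trees. Instead, it maps vertices to integers by a DFS order on $T$, maps each graph edge to a 2-d point, and builds a 2-d orthogonal range \emph{counting/searching} structure once. Then each $\cut(e,f)$ is answered by $O(1)$ rectangle queries. The $m\log n$ term arises entirely from the $O(m\log m)$ \emph{preprocessing} of Chazelle's semigroup range structure; the $\tilde O(n)$ queries contribute only to the $n\log^4 n$ term. Your charging argument would need to explain why a generic non-tree edge participates in only $O(\log n)$ of the $\tilde O(n)$ candidate cuts, which is not true in general; without the range data structure you are essentially back to Karger's $O(m\log^2 n)$. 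Relatedly, the paper also needs to \emph{find} which heavy paths each tree edge is interested in; it does this with a 2-d range \emph{sampling} structure (sample $\Theta(\log n)$ edges out of $\cC(u^\da)$ and follow them), which you do not address.

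For the unweighted bound the discrepancy is sharper. The paper's $m\sqrt{\log n}$ comes from replacing Chazelle's preprocessing by the Chan--P\u{a}tra\c{s}cu range-counting structure, whose build time is $O(m\sqrt{\log m})$ (with $O(\log m/\log\log m)$ per query). The savings are in preprocessing, not in query batching. Your proposed four-Russians/bit-packing of candidate evaluations would shave query time, but query time was never the $m$-bottleneck, so it would not move the $m$ term below $m\log n$.
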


\setlength{\tabcolsep}{18pt}
\renewcommand{\arraystretch}{1.4}


\begin{table}[t]
\centering
    \begin{tabular}{ |p{3cm}| c | c|  }
 \hline
 \textbf{Reference} & \textbf{Complexity} & \textbf{Remark}\\
 \hline
 \multicolumn{3}{|c|}{$ m\ge n \log^4 n$}\\ 
 \hline
 \cite{Kar00} & $O(m\frac{ \log^2 n \log (n^2/m)}{\log \log n} + n \log^6 n)$ 
 & Old record\\
  Here,\cite{GMW19} & $O(m\frac{ \log^2 n}{\log \log n} + n \log^6 n)$ 
  & New record\\
  Here & $O(m \frac{\log^{3/2} n}{\log \log n} + n \log^6 n)$
  & Unweighted\\
 \hline
 \multicolumn{3}{|c|}{$m \leq n \log^4 n$} \\
 \hline
 \cite{Kar00} & $O(m\log^3 n)$ & Old record\\
\cite{GMW19} & $O(m \log^2 n)$ & New record\\
 \hline
  \multicolumn{3}{|c|}{Simple graphs (unweighted, no parallel edges)} \\
 \hline
\cite{GhaffariNT20} & $O(\min(m+n\log^3 n, m\log n))$ & Old record \\
\cite{GMW19} & $O(\min(m+n\log^2 n, m\log n))$ & New record\\
\hline
 \end{tabular}
\caption{Results for sequential min-cut} \label{tab:results-seq}
\end{table}

\begin{table}[t]
\centering
    \begin{tabular}{ |p{3.5cm}| c | c|  }
 \hline
 \textbf{Reference} & \textbf{Complexity} & \textbf{Type of graphs}\\
 \hline
 \multicolumn{3}{|c|}{Cut-query algorithm} \\
 \hline
 \cite{RubinsteinSW18} & $\tilde O(n)$ & Simple  \\
 Here & $\tilde O(n)$ & Weighted \\
 \hline
 \multicolumn{3}{|c|}{Dynamic streaming algorithm} \\
 \hline
 \cite{RubinsteinSW18, AssadiCK19} & $\tO(n)$-space, 2-pass & Simple \\
 Here & $\tilde O(n)$-space, $O(\log n)$-pass & Weighted \\
 \hline
\end{tabular}
\caption{Our results in cut-query and streaming model and comparison with other works} \label{tab:results-cut-stream}
\end{table}

\noindent{\em Independent work and comparisons:} The recent independent work of  Gawrychowski, Mozes, and Weimann \cite{GMW19} shows that the 2-respecting min-cut and global min-cut can be solved in $O(m \log^2 n)$ time. This also improves the $O(m+n\log^3 n)$ bound of Ghaffari et~al. \cite{GhaffariNT20} to $O(m+n\log^2 n)$. 
\Cref{tab:results-seq} compares Gawrychowski et~al.'s result with ours and Karger's. For most values of $m$, their and our results are similar and improve Karger's result. For unweighted graphs, our bound can be improved further slightly. 
When the graph is very sparse, Gawrychowski et~al.'s bound improves Karger's (our result does not). When $m=\Omega(n^2)$, both results do not improve Karger's result, except for unweighted graphs where we provide a $\sqrt{\log n}$ improvement. 
%

\danupon{SAGNIK: CHECK!} 




\medskip
It is quite easy to implement our algorithm to find a 2-respecting min-cut using $\tilde O(n)$ cut queries in the cut-query model, and $O(\log n)$ passes with $\tO(n)$ internal memory in the semi-streaming settings respectively. Extending these results to find a min-cut (using Karger's tree packing approach) does not increase any additional overhead. Note that our streaming algorithm works even in the {\em dynamic streaming} setting where there are both edge insertions and deletions. 


\begin{theorem}
	\label{thm:intro:query}
A min-cut can be found w.h.p. using $\tilde O(n)$ cut queries.
\end{theorem}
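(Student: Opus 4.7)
The plan is to follow Karger's reduction from global min-cut to 2-respecting min-cut and to implement every step of that reduction in the cut-query model using $\tilde O(n)$ queries in total. Recall that Karger's framework produces a collection of $O(\log n)$ spanning trees (a tree packing) such that w.h.p.\ the true min-cut 2-respects at least one of them. We then solve the 2-respecting min-cut problem on each tree, and return the minimum over all trees.

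The first step is to build the tree packing using only $\tilde O(n)$ queries. Since computing a tree packing of Karger's type is equivalent to working with edge weights that have been scaled based on an approximate min-cut value, we first construct a cut sparsifier $H$ of $G$ with $\tilde O(n)$ edges, using the cut-query sparsifier of Rubinstein, Schramm, and Weinberg \cite{RubinsteinSW18}, which itself uses $\tilde O(n)$ cut queries. The sparsifier preserves every cut within a $(1\pm\eps)$ factor, which is enough to carry out Karger's sampling-based construction of the $O(\log n)$ tree packing entirely as an internal computation, without any further queries to $G$.

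The second step is to solve 2-respecting min-cut on each of the $O(\log n)$ trees of the packing using cut queries to the original weighted graph $G$ (so that the final min-cut value is exact). Here we invoke the key structural feature emphasized in the abstract: our new approach pre-identifies $\tilde O(n)$ pairs of tree edges and needs only the value of the cut obtained by removing each such pair. Removing two edges from a spanning tree $T$ partitions $V$ into two or three components, and the candidate 2-respecting cuts associated with this pair are bipartitions of $V$ formed by grouping these components; each such bipartition is the input to a single cut query. Thus the 2-respecting min-cut on a single tree can be computed with $\tilde O(n)$ cut queries, and the overall budget is $\tilde O(n) \cdot O(\log n) = \tilde O(n)$.

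The main conceptual obstacle is precisely the one the paper is designed to overcome: Karger's and Lovett--Sandlund's 2-respecting algorithms manipulate many implicit cut values through sophisticated data structures, and it is not obvious how to realize such manipulations with few cut queries. Our algorithm sidesteps this by identifying a small set of ``interesting'' edge pairs up front, so that the only queries ever made are evaluations of bipartition cuts in $G$. The remaining verifications are routine: one must check that the sparsifier's $(1\pm\eps)$ error is tolerated by the tree-packing step as in \cite{Kar00}, that the ``interesting'' pairs can be computed without any further access to $G$ (they depend only on $T$ and on internal bookkeeping), and that the success probabilities over all randomized steps combine to the claimed w.h.p.\ bound via a union bound over the $O(\log n)$ trees.
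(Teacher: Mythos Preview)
Your high-level plan (build an RSW sparsifier, pack trees locally, then solve 2-respecting min-cut on each tree with $\tilde O(n)$ cut queries) matches the paper. But there is a real gap in the step you call routine.

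You assert that ``the `interesting' pairs can be computed without any further access to $G$ (they depend only on $T$ and on internal bookkeeping).'' This is false, and it is precisely the part where the paper does nontrivial work. The notion of ``$e$ is interested in $e'$'' is $C(u^\da,v^\da)>\deg(u^\da)/2$ for the relevant subtrees; these are cut values in $G$, not properties of $T$. The paper handles this by first using the stored sparsifier $H$ to find a \emph{relaxed} set of $H$-interesting edges (with threshold $\deg_H(u^\da)/3$), shows these lie on at most two root-to-leaf paths hence intersect $O(\log n)$ heavy-light paths, and then spends $O(\log n)$ genuine cut queries to $G$ per tree edge to verify which of those paths are actually interesting (Lemma~\ref{lem:q-pairing}). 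None of this comes for free from $T$.

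More fundamentally, even after the interesting path pairs are known, the $\tilde O(n)$ pairs of tree edges are \emph{not} pre-identified. The core subroutine (the \textsc{Bipartite-interval} divide-and-conquer, Algorithm~\ref{alg:bipartite-int}) is adaptive: it reads a middle column of the cost matrix via cut queries to $G$, locates the column minimum, and only then decides which submatrices to recurse on. The $\tilde O(n)$ query bound comes from this adaptivity together with a monotonicity property of the cost matrix (Claims~\ref{clm:delta-increasing} and~\ref{clm:delta-min}); there is no known way to name the $\tilde O(n)$ pairs in advance using $T$ and $H$ alone. So your sketch, as written, is missing both the mechanism for finding interesting paths and the adaptive nature of the probing that makes the bound hold.
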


\begin{theorem}
	\label{thm:intro:stream}
A min-cut can be found w.h.p. by an $O(\log n)$-pass $\tilde O(n)$-space dynamic streaming algorithm.
\end{theorem}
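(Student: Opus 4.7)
\textbf{Proof proposal for Theorem~\ref{thm:intro:stream}.}
The plan is to port the sequential algorithm behind Theorem~\ref{thm:intro:seq} to the dynamic semi-streaming model, keeping total space $\tilde O(n)$ and converting every interaction with $G$ into $O(1)$ passes over the edge-update stream. First I would set up Karger's outer skeleton in streaming. Using standard dynamic linear sketches (à la Ahn--Guha--McGregor, Goel--Kapralov--Peng, and Rubinstein--Schramm--Weinberg), one can maintain a sketch of total size $\tilde O(n)$ from which a $(1+\eps)$-cut-sparsifier $H$ of $G$ is recovered in post-processing. With $H$ in hand I would compute, offline on the $\tilde O(n)$-size sparsifier, a tree packing of $O(\log n)$ spanning trees $T_1,\dots,T_k$ such that, with high probability, the true min-cut of $G$ $2$-respects at least one $T_i$; this step costs no extra passes. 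The problem is thus reduced to solving the $2$-respecting min-cut on the \emph{exact} weights of $G$ for each of $O(\log n)$ trees.

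Next I would invoke the paper's new $2$-respecting min-cut procedure, whose structural feature is that it only needs exact cut values $\cost_G(S_{e,f})$ for a collection of $\tilde O(n)$ pairs of tree edges $(e,f)$, where $S_{e,f}$ denotes the vertex set determined by removing $e,f$ from $T_i$. The observation I would leverage is that for any fixed subset $S\subseteq V$ the cut value $\cost_G(S)$ is a \emph{linear} function of the edge-indicator vector of the stream; hence in a single streaming pass we can maintain, in parallel, one counter per query $S_{e,f}$ and exactly recover every one of those cut values using only $O(1)$ words per query, giving $\tilde O(n)$ words total whenever the list of query sets is known before the pass begins.

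The only remaining question is adaptivity: the new $2$-respecting algorithm is designed so that it chooses its pairs $(e,f)$ in only $O(\log n)$ adaptive rounds of $\tilde O(n)$ non-adaptive queries each, which is the same structural recursion that drives the $\tilde O(n)$-query bound of Theorem~\ref{thm:intro:query}. I would execute each round with a single streaming pass over $G$ that evaluates all $\tilde O(n)$ cut values of that round in parallel, feed the results back to the algorithm, and proceed to the next round. The $k=O(\log n)$ trees in the packing can be processed in the same $O(\log n)$ passes by pipelining their round-$r$ queries into the round-$r$ pass, increasing space only by an $O(\log n)$ factor, which is still $\tilde O(n)$.

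The main obstacle I anticipate is arguing that the $\tilde O(n)$ cut queries produced by the new $2$-respecting algorithm really do split cleanly into $O(\log n)$ non-adaptive batches; this is the structural property that will have to be established when the paper analyzes the sequential algorithm (and it is implicit in the $\tilde O(n)$ cut-query bound, where a matching $O(\log n)$ round complexity is the natural refinement). Once that adaptivity bound is in hand, the streaming theorem follows as a ``for free'' corollary via the linearity of $\cost_G(S)$ in the edge-weight vector and the $\tilde O(n)$-space sparsifier sketch used to perform the tree-packing step.
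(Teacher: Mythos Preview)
Your proposal is correct and matches the paper's approach essentially line for line: one pass builds a sparsifier (Theorem~\ref{thm:s-sparsifier}) on which tree packing is done locally (Claim~\ref{clm:s-reduction}); each batch of $\tilde O(n)$ cut values is read off in a single pass via linear counters (Observation~\ref{obs:s-cut-value}); and the $2$-respecting procedure indeed decomposes into $O(\log n)$ adaptive batches because the \textsc{Bipartite-interval} recursion has depth $O(\log n)$ with $O(n)$ probes per level (Claim~\ref{clm:s-interval}), while the ``interesting paths'' step is resolved locally from the stored sparsifier plus one verification pass (Claim~\ref{clm:s-pairing}). The obstacle you flag is exactly what Lemma~\ref{lem:s-tworespect} establishes, and your pipelining of the $O(\log n)$ trees is how the paper achieves the w.h.p.\ guarantee without extra passes.
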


Rubinstein \textit{et al.}~\cite{RubinsteinSW18} asked whether it is possible to compute the min-cut on weighted graphs with $o(n^2)$ cut queries. (They showed an $\tilde O(n)$ bound for simple graphs (unweighted without parallel edges).) Our cut-query bound answers their open problem positively. Using a connection to communication complexity pointed out by Rubinstein \textit{et al.}~\cite{RubinsteinSW18}, we can prove an $\tilde \Omega(n)$ lower bound for the number of cut queries, making our result in \Cref{thm:intro:query} tight up to polylogarithmic factors.\danupon{Sketch in appendix?}
The $\tilde O(n)$ space bound in \Cref{thm:intro:stream} is also tight \cite{FeigenbaumKMSZ05}. As mentioned earlier, the only non-trivial algorithm in the streaming setting (even with insertions only) is the $\tilde O(n)$-space $2$-pass algorithm \cite{RubinsteinSW18,AssadiCK19}. Note that there are some $(1+\epsilon)$-approximation $\tilde O(n)$-space  $1$-pass algorithms in the literature (e.g.~\cite{KLMMS17,AhnGM12-soda,AhnGM12-pods}). Computing min-cut exactly in a single pass with $\tilde O(n)$ space is however impossible \cite{Zelke11}. 

\paragraph{Updates since our work.} A recent result \cite{gawrychowski2021note} shows a deterministic implementation of our schematic algorithm for 2-respecting cut with running time $\Tres(m, n) = O(m \log n + n \log^2 n)$ in the sequential model. By slightly altering the \textit{range search} data-structure that we use in this paper (explained in the Section \ref{sec:tech_overview} and \ref{sec:range-count}), they also obtain a runtime of $\Tres(m, n) = O(\frac{m}{\eps}  + \frac{n^{1+2\eps}\log n}{\eps^2}+n\log^2 n)$.\footnote{This needs a small change from \cite{gawrychowski2021note}, where we need to store $T_S$ in \cite{gawrychowski2021note} as a compacted trie. We thank Pawe{\l} Gawrychowski for the clarification.} Below we summarize the state of the art in the sequential model when this result is taken into account. 

\begin{itemize}
    \item When $m=O(n\log^2 n)$, we have $O(m\log^2 n)$ runtime \cite{GMW19}. 
    \item When $m=\omega(n\log^2 n)$, we have $O(\frac{m \log n}{\epsilon}  + \frac{n^{1+2\eps}\log^2 n}{\eps^2}+n\log^3 n)$ runtime (this paper together with \cite{gawrychowski2021note}).\footnote{When $m=n\log^2(n)g(n)$ for some growing function $g$, we have $\frac{m \log n}{\epsilon}  + \frac{n^{1+2\eps}\log^2 n}{\eps^2}+n\log^3 n< m \log^2 n$ by setting $\epsilon$ to $c(\log\log(g(n))/\log n$ for some constant $c$.} 
\end{itemize}

Our framework has also been implemented in the distributed and parallel settings, leading to a near-optimal distributed algorithm in the CONGEST model \cite{DoryEMN21} and the first work-optimal parallel algorithm \cite{lopezmn21}. 

\subsection{Technical Overview}  \label{sec:tech_overview}
Our approach exploits some cute structural properties so that we can limit the number of pairs of tree edges we include in our search space. 
%
This is in contrast to the techniques used by Karger \cite{Kar00} and Lovett-Sandlund \cite{LovettS19} to solve 2-respecting min-cut where information about many more cuts is computed---either using dynamic programming or using heavy-light decomposition---and stored in and accessed from sophisticated data-structures (Lovett-Sandlund additionally show that augmented binary search trees suffice, at the cost of an extra $O(\log n)$ factor).

%
We now explain the main ideas behind our algorithm. Let $G$ and $T$ be the input graph and spanning tree respectively. 
For any pair of edges $e$ and $f$ in $T$, let \cut($e$,$f$) be the number of edges $(u, v)$ where the unique $uv$-path in $T$ contains {\em exactly one} of $e$ and $f$. It is well-known that if a 2-respecting min-cut $C$ contains $e$ and $f$, then it has weight $\cut(e,f)$.
This is because the end-vertices of $e$ has to be in different connected components when we remove edges in $C$; otherwise, we can get a smaller cut by removing $e$ from $C$. (The same applies for $f$.) 

%
%


Our algorithm exploits the fact that \cut($e$, $f$), for a pair of tree-edges $(e,f)$, can be computed efficiently in many settings. For example, in the streaming setting, we only need to make a pass over all edges using $O(\log n)$-bit space (in addition to the space to keep $T$).  We also require only one cut query since we know how nodes are partitioned (e.g. end-vertices of $e$ should be in the different connected components after the graph is cut).
In the sequential setting, we can use the {\em 2-d orthogonal range counting} data structure: We assign nodes with integers based on when they are visited in a post-order traversal on $T$, and map each edge to a two-dimensional point based on the numbers assigned to their end-vertices. It is not hard to see that we can compute \cut($e$, $f$) if we know the number of points in a few rectangles. This information can be provided by the 2-d orthogonal range counting data structure. Using Chazelle's semigroup range search data structure \cite{Chaz88}, this takes $O(m\log m)$ preprocessing time and $O(\log n)$ amortized time to compute each \cut($e$, $f$). With bit operations, the pre-processing time can be improved to $O(m\sqrt{\log m})$ using, e.g., Chan and Patrascu's range counting data structure \cite{CP10}.


By the above facts, we can na\"ively find the 2-respecting min-cut by computing  $\cut(e,f)$ for all $n^2$ tree-edge pairs $(e, f)$. 
By exploiting some structural properties explained below, we can reduce the number of pairs in our search space to $\tilde O(n)$.  Below we focus on the number of {\em probes}---the number of tree-edge pairs $(e, f)$ that we have to compute $\cut(e,f)$ for. 
%
Since all other steps can be performed efficiently in all settings we consider, we obtain our results.


\paragraph{When $T$ is a path.}
To explain the structural properties that we use, let us consider some extreme cases. The first case is when $T$ is a path. It is already very unclear how to solve this case efficiently. (In fact, the starting point of this work was the belief that this case requires $\tilde \Omega(n^2)$ probes, which is now proven wrong.) 
The key insight is the structure of the {\em special case} of the problem, where we assume that the tree-edge pair $(e,f)$ that minimizes \cut($e$, $f$) is on the {\em different sides} of a given node $r$. 
More precisely, assume that, in addition to the path $T$, we are given a node $r$. For convenience, we use $e'_i$ and $e_i$ to denote the $i^{th}$ edges of $T$ to the left and right of $r$ respectively (see the graph in \Cref{fig:bipart-cost-edge} for an example). Now consider computing  $\min_{(e'_i, e_j)} \cut(e'_i, e_j)$. To do this, define a matrix $M$ whose entry at the $i^{th}$ row and $j^{th}$ column is $M[i,j]=\cut(e'_i, e_j)$ (see the matrix in \Cref{fig:bipart-cost-edge} for an example).  Computing  $\min_{(e'_i, e_j)} \cut(e'_i, e_j)$ becomes computing $\min_{i,j} M[i,j]$. The key to solve this problem is the following property.

\begin{figure}[h]
	\centering
	\includegraphics[scale=0.8]{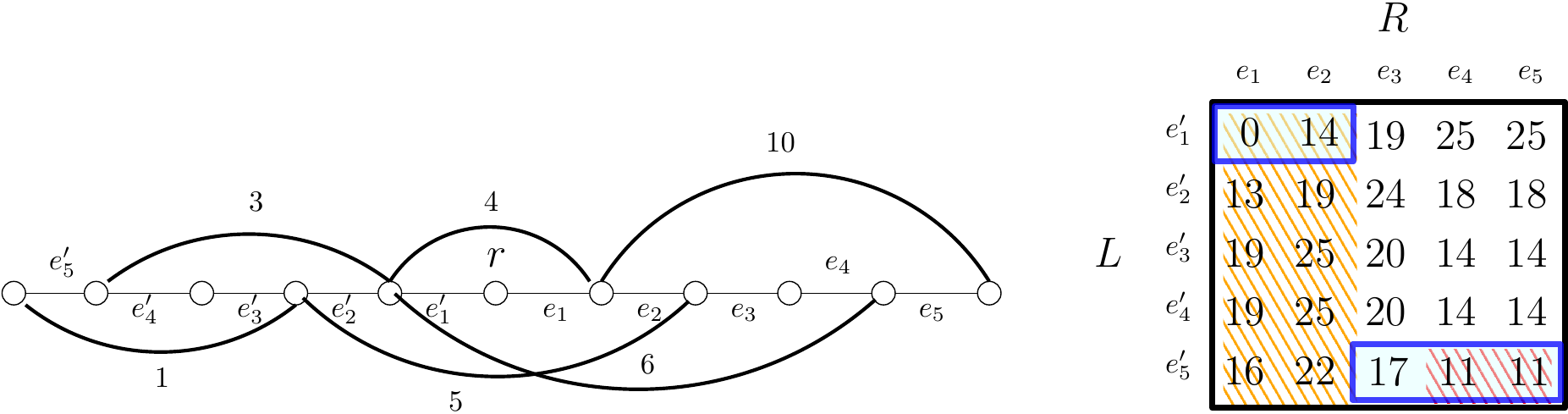}
	\caption{Example of when $T$ is a path. In the graph, $e_1, \ldots, e_5, e'_1, \ldots, e'_5$ are edges in $T$ and have weight $0$. Numbers on other edges indicate the edge weights.}
	\label{fig:bipart-cost-edge}
\end{figure}

\begin{observation}\label{thm:intro:pathStructure}
Assume for simplicity that the minimum entry in each column of $M$ is unique. For any $j$, let $r^*(j)=\arg\min_i M[i, j]$. Then, $r^*(j)\geq r^*(j-1)$ for any $j$. That is, $r^*$ is a non-decreasing function. 
\end{observation}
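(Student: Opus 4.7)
The plan is to recognize this as an instance of the standard phenomenon that a matrix satisfying the (anti-)Monge inequality has monotone row/column minimizers, and to verify that $M$ in fact enjoys such an inequality through a direct edge-accounting argument.

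\textbf{Step 1 (Monge inequality).} I will prove that for all $i_1<i_2$ and $j_1<j_2$,
\[
M[i_1,j_1]+M[i_2,j_2] \;\le\; M[i_1,j_2]+M[i_2,j_1].
\]
For this, use the interpretation noted in the body text: removing the two tree edges $e'_i$ and $e_j$ splits the path $T$ into three contiguous node-sets $L_i$ (left of $e'_i$), $M_{ij}$ (between the two edges), and $R_j$ (right of $e_j$), and the 2-respecting cut associated with $(e'_i,e_j)$ has one side equal to $M_{ij}$, so $M[i,j]=w(M_{ij},\,L_i\cup R_j)$, where $w(X,Y)$ denotes the total weight of $G$-edges with one endpoint in $X$ and one in $Y$.

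\textbf{Step 2 (Five-part refinement and bookkeeping).} Given the four indices, refine the path vertices into five consecutive groups $A,B,C,D,E$ cut off by the four tree edges in the order $e'_{i_2},e'_{i_1},e_{j_1},e_{j_2}$ (so $L_{i_2}=A$, $L_{i_1}=A\cup B$, $R_{j_1}=D\cup E$, $R_{j_2}=E$, $M_{i_1,j_1}=C$, $M_{i_2,j_2}=B\cup C\cup D$, etc.). For each unordered pair $\{X,Y\}\subseteq\{A,B,C,D,E\}$ I will check how many of the four cuts $M[i_a,j_b]$ include $w(X,Y)$. The accounting shows that every $w(X,Y)$ appears equally often on the two sides of the Monge inequality \emph{except} $w(B,D)$: a non-tree edge with one endpoint in $B$ and the other in $D$ has both endpoints in $M_{i_2,j_2}$ for the index pair $(i_2,j_2)$ and both endpoints in $L_{i_1}\cup R_{j_1}$ for $(i_1,j_1)$ (so it is not cut by either of the diagonal pairs), whereas it separates $M$ from $L\cup R$ for both off-diagonal pairs $(i_1,j_2)$ and $(i_2,j_1)$. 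Hence
\[
\bigl(M[i_1,j_2]+M[i_2,j_1]\bigr)-\bigl(M[i_1,j_1]+M[i_2,j_2]\bigr)\;=\;2\,w(B,D)\;\ge\;0,
\]
which is exactly the Monge inequality.

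\textbf{Step 3 (Monotonicity from Monge).} Suppose for contradiction that $r^*$ is not non-decreasing: there exist $j_1<j_2$ with $i_1:=r^*(j_2)<r^*(j_1)=:i_2$. By the uniqueness hypothesis on column minima, $M[i_1,j_2]<M[i_2,j_2]$ and $M[i_2,j_1]<M[i_1,j_1]$, so adding yields $M[i_1,j_2]+M[i_2,j_1]<M[i_1,j_1]+M[i_2,j_2]$, contradicting Step 1.

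\textbf{Main obstacle.} The only nontrivial part is the combinatorial bookkeeping in Step 2: one has to be careful that the refinement $A,B,C,D,E$ is assigned in the correct order along the path (in particular that $i_1<i_2$ places $e'_{i_2}$ farther from $r$ than $e'_{i_1}$, so that $L_{i_1}\supset L_{i_2}$), and to track each $w(X,Y)$ term through all four cuts to confirm that only the $B$–$D$ interaction survives. Everything else is a short formal deduction.
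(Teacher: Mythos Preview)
Your proof is correct and follows the same high-level route as the paper: establish that $M$ is a Monge matrix, then deduce monotonicity of the column minimizers by the standard contradiction argument. The paper phrases the Monge property as monotonicity of the difference vectors $\Delta_j(i)=M(i,j)-M(i,j+1)$ and proves it by an incremental argument---adding the intervals one at a time and checking three cases ($L$-interval, $R$-interval, crossing interval)---whereas you prove the four-point inequality directly via the five-block partition $A,B,C,D,E$ and a single edge-accounting pass isolating the $2\,w(B,D)$ term. Your Step~3 is then exactly the paper's deduction of monotone minimizers from the Monge property. The two arguments are equivalent in content; your direct partition bookkeeping is a bit more self-contained, while the paper's incremental case analysis makes the per-edge contribution more explicit.
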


For example, the minimum entries of each column of the matrix in \Cref{fig:bipart-cost-edge} are in solid (blue) rectangles. Observe that the positions of these entries do not move up when we scan the columns from left to right. 

We postpone proving Observation \ref{thm:intro:pathStructure} to later. We now show how to exploit it to develop a divide-and-conquer algorithm. Our algorithm probes all entries in the $\ell$-th column, where $\ell$ denote the index of the middle column of $M$ ($\ell=3$ in \Cref{fig:bipart-cost-edge}). It can then compute $r^*(\ell)$. By Observation \ref{thm:intro:pathStructure}, it suffices to recurse the problem on the sub-matrices $M_1=M[1, \ldots, r^*(\ell); 1, \ldots, \ell-1]$ and $M_2=M[r^*(\ell), \ldots ; \ell+1, \ldots]$; i.e., entries of $M_1$ are those in rows $i\leq r^*(\ell)$ and columns $j<\ell$ and entries of $M_2$ are those in rows $i\geq r^*(\ell)$ and columns $j>\ell$ (for example, the shaded (orange) areas in \Cref{fig:bipart-cost-edge}). It is not hard to see that this algorithm only probes $\tO(n)$ values in $M$.   

The above algorithm can be easily extended to the case where we do not have $r$: Pick a middle node of the path $T$ as $r$ and run the algorithm above. Then, make a recursion on all edges on the left of $r$ and another on all edges on the right side of $r$. It is not hard to see that this algorithm requires $\tO(n)$ probes. 

\paragraph{When $T$ is a star-graph.} Now we consider another extreme case, where $T$ is a star-graph; i.e., the spanning tree $T$ has $n-1$ many disjoint edges of the form $e_i = (r, i)$ where $r$ is the root of the tree. 
In this case, we exploit the fact that a cut that contains only one edge of $T$ is a candidate for the 2-respecting min-cut too. Let  $\deg(u)$ denote the degree of node $u$ and $C(u,v)$ denote the weight of edge $(u,v)$.

\begin{observation}\label{thm:intro:starStructure}
Assume that no 2-respecting min-cut contains only one tree-edge.
Let $(e_i, e_j)=\arg\min_{(e_{i'}, e_{j'})}$ $\cut(e_{i'}, e_{j'})$; i.e. the 2-respecting min-cuts have weight $\cut(e_i, e_j)$.
Then, $d(i)<2C(i,j)$ and  $d(j)<2C(i,j)$. 
\end{observation}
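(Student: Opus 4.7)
The plan is to derive an explicit, closed-form expression for $\cut(e_i, e_j)$ in the star and then apply the assumption about one-edge cuts.

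Because $T$ is a star rooted at $r$, the unique $T$-path between two leaves $u \neq v$ goes through $r$ and uses exactly the two tree edges $e_u, e_v$; the $T$-path from a leaf $u$ to $r$ itself uses only $e_u$. So for a graph edge $(u,v)$ with weight $C(u,v)$, the path in $T$ contains exactly one of $\{e_i, e_j\}$ precisely when exactly one endpoint of $(u,v)$ is in $\{i,j\}$ (treating the center $r$ as not contributing a tree edge to the path from itself). Summing contributions, every edge incident to $i$ other than the edge $(i,j)$ contributes $C(i,\cdot)$; symmetrically for $j$; and the edge $(i,j)$ itself contributes $0$ because its $T$-path uses both $e_i$ and $e_j$. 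Thus
\[
\cut(e_i, e_j) \;=\; \bigl(\deg(i) - C(i,j)\bigr) + \bigl(\deg(j) - C(i,j)\bigr) \;=\; \deg(i) + \deg(j) - 2\,C(i,j).
\]

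Next, cutting only the tree edge $e_k$ isolates the leaf $k$ from the rest of $G$, so it produces a 1-respecting cut of value exactly $\deg(k)$. By choice of $(e_i, e_j)$, the 2-respecting min-cut value is $\cut(e_i, e_j)$, and by the hypothesis that no 2-respecting min-cut contains only a single tree edge, every one-edge cut is strictly worse, i.e.\ $\deg(k) > \cut(e_i, e_j)$ for all $k$. Specializing to $k = i$ and using the formula above gives
\[
\deg(i) \;>\; \deg(i) + \deg(j) - 2\,C(i,j),
\]
which simplifies to $\deg(j) < 2\,C(i,j)$. Specializing instead to $k = j$ gives $\deg(i) < 2\,C(i,j)$, which is the claim.

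The only conceptual step is the formula for $\cut(e_i, e_j)$ on a star; once it is in hand the conclusion is a single line of algebra, so there is no real obstacle. The one small subtlety to take care of is the edge $(i,j)$ itself, which must be excluded from the count because its $T$-path uses both $e_i$ and $e_j$ rather than exactly one; this is precisely where the $-2C(i,j)$ term comes from and, correspondingly, is what makes the final strict inequalities nontrivial (and what gives the threshold $2C(i,j)$ rather than $C(i,j)$).
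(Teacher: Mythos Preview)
Your proof is correct and follows essentially the same approach as the paper: derive the identity $\cut(e_i,e_j)=\deg(i)+\deg(j)-2C(i,j)$, then compare with the one-edge cuts $\deg(i)$ and $\deg(j)$ using the hypothesis to obtain the two strict inequalities. The only difference is that you spell out the derivation of the formula and the role of the edge $(i,j)$ in more detail, whereas the paper simply states the identity; also, you assert $\deg(k)>\cut(e_i,e_j)$ for all $k$, which is true but stronger than needed, since only $k\in\{i,j\}$ is used.
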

\begin{proof}
	Note that $\cut(e_i, e_j)=\deg(i)+\deg(j)-2C(i,j)$. 
	Consider the cut that separates node $i$ from other nodes. This cut has weight $\deg(i)$. Since this is not a 2-respecting min-cut (by the assumption), $\deg(i)> \cut(e_i, e_j)=\deg(i)+\deg(j)-2C(i,j)$; thus, $d(j)<2C(i,j)$. Similarly, $d(i)<2C(i,j)$. 
\end{proof}



Observation \ref{thm:intro:starStructure} motivates the following notion. We say that a tree-edge $e_i$ is {\em interested in} a tree-edge $e_j$ if $\deg(i)<2C(i,j)$. Note that each tree-edge $e_i$ can be interested in at most one tree-edge: if $e_i$ is interested in two distinct edges $e_j$ and $e_{j}'$, then the degree of $e_i$ is at least $C(i,j)+C(i,j')>(\deg(i)/2)+(\deg(i)/2)$, a contradiction. It is not hard to efficiently find the tree-edge that $e_i$ is interested in, for every tree-edge $e_i$. (We only need to consider the maximum-weight edge incident to each node.) Our algorithm is now the following. For each edge $e_i$, let $e_j$ be the edge that it is interested in. Compute $\cut(e_i,e_j)$. Output the minimum value among (I) the values of all the computed 2-respecting cuts and (II) the minimum degree.
By Observation \ref{thm:intro:starStructure}, if (II)  does not give the 2-respecting min-cut value, then (I) does. 
See \Cref{fig:star-interesting} for an example.

\begin{figure}[h]
	\centering
	\includegraphics[scale=0.5]{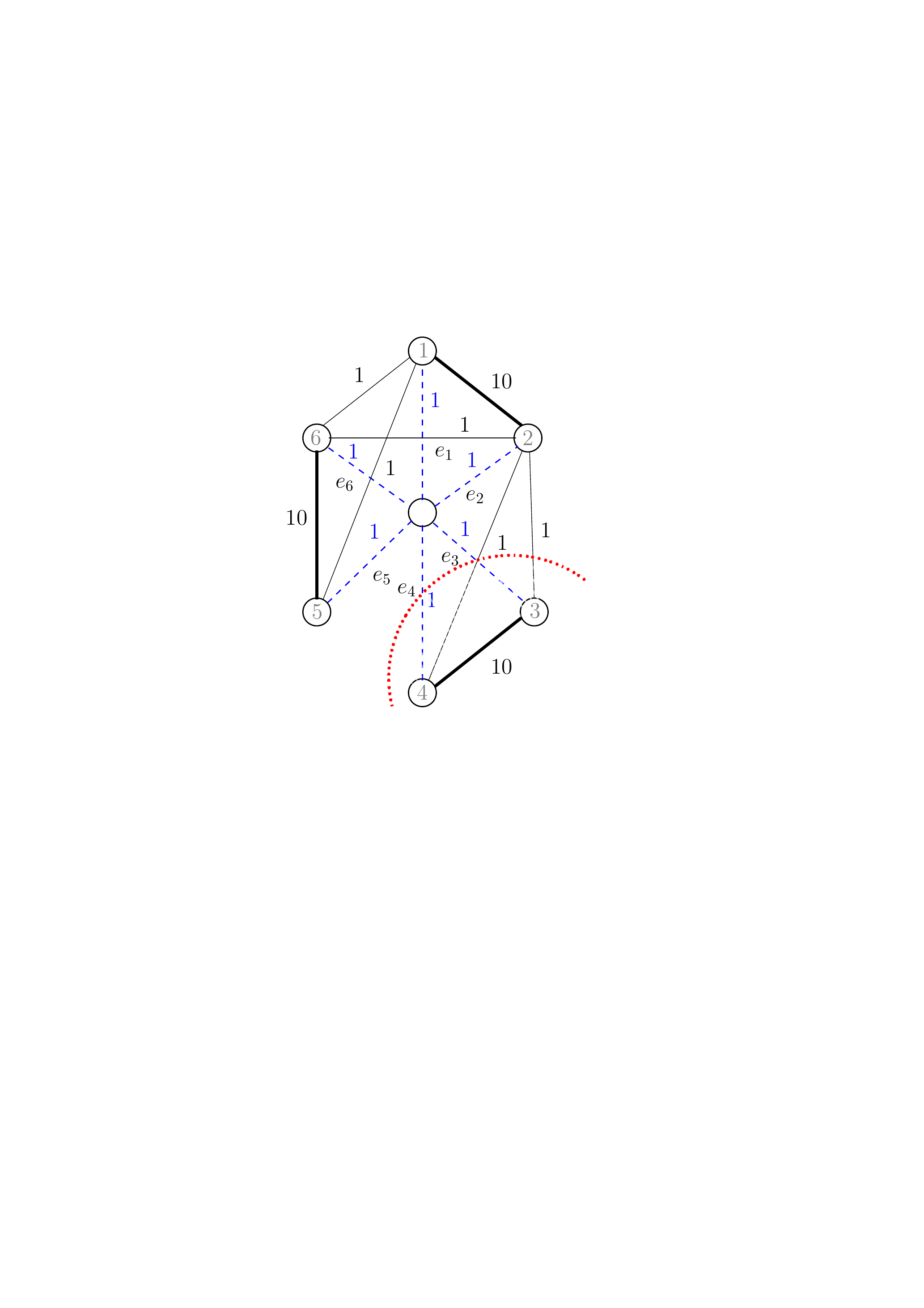}
	\caption{An example where $T$ is a star-graph. Dashed edges (in blue) are tree edges. Numbers on edges indicate their weights. (All but bold edges have weight 1.) The dashed curve (in red) indicates the min-cut of the graph. This min-cut contains two tree-edges, namely $e_3$ and $e_4$. Observe that the following pairs of tree-edges are interested in each other $(e_1, e_2)$, $(e_3,e_4)$ and $(e_5,e_6)$. Our algorithm outputs the minimum among $\min_i(\deg(i))$, $\cut(e_1, e_2)$, $\cut(e_3,e_4)$ and $\cut(e_5,e_6)$.}
	\label{fig:star-interesting}
\end{figure}

\paragraph{Handling general tree $T$.} Since it is easy to find a 2-respecting min-cut if it  contains only one edge in $T$, we assume from now that all 2-respecting min-cut contains exactly two edges in $T$. Let $(e^*, f^*)=\arg\min_{(e, f)} \cut(e, f)$; i.e. $e^*$ and $f^*$ are tree-edges such that the 2-respecting min-cuts have weight $\cut(e^*, f^*)$. For simplicity, let us assume further that $e^*$ and $f^*$ are {\em orthogonal} in that $e^*$ does not lie in the unique path in $T$ between $f^*$ and the root of $T$, and vice versa. 

The ideas behind Observation \ref{thm:intro:starStructure} can be naturally extended to a general tree $T$ as follows. For any node $u$, let $u^\da$ denote the set of nodes in the sub-tree of $T$ rooted at $u$. Let $\deg(u^\da)$ be the total weight of edges between $u^\da$ and $V(G)\setminus u^\da$. For any nodes $u$ and $v$, let $C(u^\da, v^\da)$ be the total weight of edges between $u^\da$ and $v^\da$. Consider any tree-edges $(x, y)$ and $(x', y')$, where $x$ (and $x'$) is the parent of $y$ (and $y'$ respectively) in $T$. We say that $(x,y)$ is {\em interested in} $(x',y')$ if  
\begin{align}\label{eq:intro:interestingEdgeDef}
\deg(y^\da)<2C(y^\da,y'^\da).
\end{align}

It is not hard to show that the tree-edges $e^*$ and $f^*$ defined above are interested in each other. Thus, like in the case of star-graphs, we can find the 2-respecting min-cut by computing $\cut(e,f)$ for all pairs of $e$ and $f$ that are interested in each other.  
However, this does not imply that we need to compute only $\tilde O(n)$ many values of $\cut(e,f)$, since an edge can be interested in {\em many} other edges. An additional helpful property is this:

\begin{observation}\label{thm:intro:interestingPath}
For any tree-edge $e=(x,y)$, edges that $e$ is interested in form a path in $T$ between the root and some node $v$. 
\end{observation}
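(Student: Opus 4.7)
The plan is to identify each tree-edge $(x',y')$ with its lower endpoint $y'$, so that the observation becomes: the set $S\subseteq V(T)$ of non-root nodes $y'$ such that $(x,y)$ is interested in $(x',y')$ coincides with the vertex set of a root-to-$v$ path for some $v$. I will focus on the relevant case where each candidate edge $(x',y')$ is orthogonal to $e$ in the sense that $(y')^\da\cap y^\da=\emptyset$; this is the regime we already reduced to above, and it is exactly what makes the disjointness-based counting argument work.

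Step 1: I would show that $S$ is totally ordered by the ancestor-descendant relation on $T$. Suppose for contradiction that $y'_1,y'_2\in S$ are incomparable. Then $(y'_1)^\da$ and $(y'_2)^\da$ are disjoint subtrees, both disjoint from $y^\da$ by orthogonality. Hence the edges contributing to $C(y^\da,(y'_1)^\da)$ and those contributing to $C(y^\da,(y'_2)^\da)$ form two disjoint sets, each consisting of edges that have exactly one endpoint in $y^\da$. Consequently,
\[
\deg(y^\da)\;\ge\;C(y^\da,(y'_1)^\da)\;+\;C(y^\da,(y'_2)^\da).
\]
But $y'_1,y'_2\in S$ gives $C(y^\da,(y'_i)^\da)>\deg(y^\da)/2$ for $i=1,2$ by \eqref{eq:intro:interestingEdgeDef}, and summing yields $\deg(y^\da)>\deg(y^\da)$, a contradiction.

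Step 2: I would show that $S$ is upward-closed along the tree. If $y'_1\in S$ and $y'_2$ is a non-root ancestor of $y'_1$ (whose edge is still orthogonal to $e$), then $(y'_1)^\da\subseteq(y'_2)^\da$, so
\[
C(y^\da,(y'_2)^\da)\;\ge\;C(y^\da,(y'_1)^\da)\;>\;\tfrac{1}{2}\deg(y^\da),
\]
which gives $y'_2\in S$. Combining Steps 1 and 2, $S$ is a totally ordered, upward-closed set in the tree order, hence exactly the vertex set of a root-to-$v$ path where $v$ is the deepest element of $S$; the corresponding tree-edges are precisely the edges along this path, as required.

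The main obstacle is bookkeeping around the orthogonality side-condition: the inequality in \eqref{eq:intro:interestingEdgeDef} does not by itself force the three subtrees $y^\da$, $(y'_1)^\da$, $(y'_2)^\da$ to be pairwise disjoint, yet this disjointness is precisely what powers the key inequality $\deg(y^\da)\ge C(y^\da,(y'_1)^\da)+C(y^\da,(y'_2)^\da)$ in Step 1. Since in the overall algorithm we have already reduced to searching over pairs $(e,f)$ orthogonal to each other (so that $e^*$ and $f^*$ are assumed orthogonal), restricting attention to this regime is both legitimate and sufficient, and all remaining cases reduce to routine special handling of edges on the root-path to $y$ or inside $y^\da$.
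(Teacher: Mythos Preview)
Your proof is correct and follows essentially the same approach as the paper: the paper also proves (i) upward-closure (if $e$ is interested in $e''$ then it is interested in any $e'$ on the root-to-$e''$ path, since $C(y^\da,y'^\da)\ge C(y^\da,y''^\da)$) and (ii) that $e$ cannot be interested in two orthogonal edges $e',e''$ (since the two edge sets $\cC(y^\da,y'^\da)$ and $\cC(y^\da,y''^\da)$ are disjoint and both contribute to $\deg(y^\da)$), only in the opposite order from yours. Your concern about the orthogonality side-condition is handled identically in the paper: the assumption that $e$ and the candidate $f$ are orthogonal is made explicitly in the paragraph preceding the observation, so $(y')^\da\cap y^\da=\emptyset$ is part of the ambient setup rather than something you need to justify separately.
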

\begin{proof}
Throughout the proof, let $e'=(x',y')$ and $e''=(x'',y'')$ be tree-edges where $x'$ (and $x''$) is the parent of $y'$  (and $y''$ respectively) in $T$. 
If $e'$ lies in the path between the root and $e''$ and $e$ is interested in $e''$, then $e$ is also interested in any $e'$ because $C(y^\da,y'^\da)\geq C(y^\da,y''^\da)$. 
%
%
%
Now we show that if $e'$ and $e''$ are orthogonal (i.e. $y'\notin y''^\da$ and vice versa), then $e$ cannot be interested in both of them.
Since $e'$ and $e''$ are orthogonal, the set of edges between $y^\da$ and $y'^\da$ is disjoint from the set of edges between $y^\da$ and $y''^\da$. So, $\deg(y^\da)\ge C(y, y'^\da)+C(y, y''^\da)$. Now, if  $e$ is interested in both $e'$ and $e''$, we have $C(y, y'^\da)+C(y, y''^\da)>\deg(y^\da)$, a contradiction. 
\end{proof}


\danupon{Below was changed after discussing with Bryce Sandlund}

Our last idea is to exploit the above property by decomposing a tree into paths where each root-to-leaf path contains $O(log n)$ decomposed paths. A tree decomposition of this type was used in the sequential setting in Karger's algorithm. 
Later Lovett and Sandlund used the well-known heavy-light decomposition to simplify Karger's algorithm. (The heavy-light decomposition was also internally used in some data structures used by Karger.) 
Both decompositions work equally well for our algorithms. To the best of our knowledge, decompositions of this type have not been used before in query complexity and streaming algorithms.


Roughly, the heavy-light decomposition partitions the tree-edges into a family $\cP$ of paths, such that every root-to-leaf path $P$ in the tree $T$ shares edges with at most $O(\log n)$ many paths in $\cP$. Thus, Observation \ref{thm:intro:interestingPath} implies that a tree-edge $e$ is interested in edges lying in $O(\log n)$ many paths in $\cP$. 
This motivates the following algorithm: We say that a tree-edge $e$ is {\em interested in a path $P$} in $\cP$ if it is interested in some edge in $P$. For any two distinct paths $P$ and $Q$ in $\cP$, do the following. Imagine that we contract all-tree edges except those in $P$ that are interested in $Q$ and those in $Q$ that are interested in $P$. Suppose that we are left with a path $P'$. We execute the path algorithm on $P'$. (Note that in reality we might be left with the case where every node has degree at most two except one node whose degree is three. This can be handled similarly. Also, when we implement this algorithm in different settings we do not actually have to contract edges. We only have to simulate the path algorithm on the edges in $P'$.)


Recall that  when we run the path algorithm on each path $P'$ above we need to compute $\cut(e,f)$ for $\tilde O(|E(P')|)$ many pairs of tree-edges $(e,f)$, where $|E(P')|$ is the number of edges in $P'$.  
%
Moreover, each tree-edge participates in $O(\log n)$ such paths since it is interested in $O(\log n)$ many paths in $\cP$. Thus, the total  number of cut values $\cut(e,f)$ that we need to compute for the above algorithm is $\tilde O(n)$ in total. To see why the algorithm finds a 2-respecting min-cut, consider when $P$ and $Q$ contains $e^*$ and $f^*$ respectively. After the contractions $e^*$ and $f^*$ remain in $P'$ and thus the path algorithm executed on $P'$ finds $\cut(e^*,f^*)$.

\paragraph{Finding paths that an edge is interested in.} Most steps described above can be implemented quite easily in all settings. The step that is sometimes tricky is finding paths in the heavy-light decomposition $\cP$ that a tree edge $e$ is interested in. In the cut-query and streaming settings, we can compute this from a {\em cut sparsifier} which can be computed efficiently  (e.g. \cite{AhnGM12-pods,AhnGM12-soda,GoelKP12,RubinsteinSW18}). Since a cut sparsifier preserves all cuts approximately and since we are looking for $(y, y')$ such that $C(y^\da, y'^\da)$ is large compared to $\deg(y^\da)$ (as in \Cref{eq:intro:interestingEdgeDef}), we can use the cut sparsifier to identify $(y, y')$ that ``might'' satisfy \Cref{eq:intro:interestingEdgeDef}. This means that for each edge $e$, we can identify a set of ``potential'' edges such that some of these edges will actually interest $e$. These edges might not form a path as in Observation \ref{thm:intro:interestingPath}, but we can show that they form only $O(1)$ paths.


For our sequential algorithm, let us assume for simplicity that the input graph is unweighted (possibly with parallel edges). For any node $y$ we find $y'$ that {\em might} satisfy  \Cref{eq:intro:interestingEdgeDef} by sampling $\Theta(\log n)$ edges among edges between $y^\da$ and $V(G)\setminus y^\da$. For $y'$ that satisfies \Cref{eq:intro:interestingEdgeDef} there is a sampled edge between $y^\da$ and $y'^\da$ with high probability because $C(y^\da,y'^\da)>\deg(y^\da)/2$. Thus, it suffices to check, for every $O(\log n)$ sampled edge $e'$ and every path $P$ in $\cP$ that overlaps with the tree-path from the root to $e$, whether $e$ is interested in $P$ or not. (Checking this requires computing $\cut(u,v)$ for $O(1)$ many pairs of nodes $(a,b)$.) To sample the edges, we use the {\em range sampling} data structure in a way similar to how we use the range counting data structure to compute the cut size as outlined above. We build a range sampling data structure using the {\em range reporting} data structure by Overmars and Chazelle \cite{Overmars88,Chaz88}. We need $O(m\log m)$ pre-processing time and $O(\log n)$ amortized time to report or sample each point (edge). (With bit operations, the preprocessing time becomes $O(m\sqrt{\log m})$ using \cite{BGKS15,CP10,MunroNV16}.) 





\subsection{Organization}
First, we provide the preliminaries required in Section \ref{sec:prelim}. In Section \ref{sec:algo-prim}, we provide a schematic algorithm for the minimum 2-respecting cut problem. We first provide the schematic algorithm when the spanning tree is a path in Section \ref{sec:interval}. Later, we extend it the general case in Section \ref{sec:tree-algo} with the help of notions developed in Section \ref{sec:interesting}. We combine this algorithm with Karger's greedy tree packing algorithm to give a weighted min-cut algorithm schematic in Section \ref{sec:karger}. In Section \ref{sec:model-implement}, we provide the implementation of this algorithm in two models: In Section \ref{sec:q-implement}, we provide a graph cut-query implementation (and prove Theorem \ref{thm:intro:query}), and in Section \ref{sec:s-implement} we provide a semi-streaming implementation (and prove Theorem \ref{thm:intro:stream}). Finally, in Section \ref{sec:seq}, we detail the sequential implementation of this algorithm and prove Theorem \ref{thm:intro:seq}.

\section{Preliminaries} \label{sec:prelim}

\paragraph*{Notation.} We denote a spanning tree of a graph $G$ by $T$. We also generally assume that $T$ is rooted, except when $T$ is a simple path. For a rooted tree $T$, we denote the subtree rooted at vertex $u$ as $u^\da$. If the edge $e$ is the parent edge of $u$, \textit{i.e.,} between two vertices of $e$, $u$ is the farthest from the root node, then we sometimes denote the subtree rooted as $u$ as $e^\da$. If two vertices $u$ and $v$ do not belong to the same root-to-leaf path of $T$, we denote them as $u \bot v$. Similarly, if two edges $e_1$ and $e_2$ of $T$ does not belong to the same root-to-leaf path, we denote them as $e_1 \bot e_2$. In a graph $G$, for two disjoint sets of vertices $S$ and $T$, $C_G(S,T)$ denoted the total weight of the edges each of whose one edge point belongs to $S$ and the other endpoint belongs to $T$. By $\cC_G(S,T)$, we denote the set of these edges, \textit{i.e.,} the set of edges going across from $S$ to $T$. When the graph $G$ is clear from the context, we drop the subscript $G$ and denote is as $C(S,T)$ and $\cC(S,T)$ respectively. We use $\deg(S)$ to denote the total weight of the edges whose only one end-point belongs to $S$. To denote the set of such edges, we use the notation $\cC(S)$.

\paragraph*{Reservoir sampling \cite{Vit85}.} We will look at a special randomized sampling technique that will be used in this work, named the \textit{reservoir sampling}. This sampling technique aims to answer the following question:

\begin{question}
Suppose we see a(n infinite) sequence of items $\{a_1, \cdots \}$ and we want to keep $\ell$ many items in the memory with the following invariant: At any point $i$ of the sequence, the $\ell$ items in the memory are sampled uniformly at random from the set $\{a_1, \cdots, a_i\}$. What is a sampling technique that achieves this?
\end{question}

The answer to this question is to use the following sampling method. \begin{itemize}
    \item Store the first $\ell$ items in memory.
    \item From $\ell$-th time period on wards 
    for every $i > \ell$:
        \begin{itemize}
            \item Select the $i$-th item with probability $\ell/i$, and replace a random item from the memory with this new item.
            \item Reject the $i$-th item with probability $(1 - \ell/i)$.
        \end{itemize}
    This makes sure that every item in memory is chosen with probability $\ell/i$.
\end{itemize}

\section{A schematic algorithm for 2-respecting min-cut} \label{sec:algo-prim}
%





In what follows, we first provide a schematic algorithm for finding a 2-respecting weighted min-cut of a graph $G$ which is oblivious to the model of implementation. We then proceed to complete the algorithm for finding min-cut on a weighted graph with an application of cut-sparsification and greedy tree packing. In the subsequent sections, we discuss the complexity of this algorithm when we implement it in different models of computation. Of course, those models need to compute a cut-sparsifier of a given weighted graph efficiently as well---we will show that this is indeed true. We start with a restricted case, where the underlying tree $T$ is a path, and devise an algorithm which handles such graphs. Subsequently we show how we can use this algorithm as a subroutine to achieve an algorithm where there is no assumption of the structure of the underlying spanning tree $T$.

\subsection{When spanning tree is a path} \label{sec:interval}

We look at a slightly different formulation of the problem of finding a 2-respecting weighted min-cut on a graph $G$ where the underlying spanning tree $T$ is a path. We denote it as the \textit{Interval problem}, which is defined below.

\paragraph*{Interval problem.}  Consider $n-1$ points $\{1, \cdots, (n-1)\}$, and order them on a line from left to right. An interval $I = (s,t)$ in $\cI$ (where $s \leq t$) is said to cover a point $i$ if $s \leq v \leq t$ in the path ordering. Given the set of intervals $\cI$, the cost of a pair of point $(i,j)$, denoted as $\cost(i,j)$, is the number of intervals in $\cI$ which covers either $i$ or $j$, but not both. The goal is to find a pair of point $(i,j)$ such that $\cost(i,j)$ is minimized.

\begin{claim} \label{clm:interval-path-eqv}
The Interval problem is equivalent to the problem of finding a 2-respecting min-cut of $G$ where the underlying spanning tree $T$ is a path of length $n$.
\end{claim}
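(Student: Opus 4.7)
The plan is to exhibit a direct correspondence between graph edges and intervals under which the two cost functions coincide; the equivalence then drops out of matching definitions. First, I will fix notation for the path by writing $T = v_1 - v_2 - \cdots - v_n$ with tree edges $e_i = (v_i, v_{i+1})$ for $i \in \{1, \ldots, n-1\}$, and identify the $i$th point of the Interval instance with $e_i$. For each graph edge $(v_a, v_b) \in E(G)$ with $a < b$, I associate the interval $[a, b-1]$, which is precisely the set of tree edges on the unique $v_a v_b$ path in $T$. For weighted $G$ the interval inherits the edge weight, so $\cost(i,j)$ should be read as the total weight of intervals covering exactly one of $i, j$.

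Second, I would verify the cost equality $\cut(e_i, e_j) = \cost(i, j)$ for every pair $i \neq j$. By the definition recalled in the technical overview, $\cut(e_i, e_j)$ sums weights of graph edges $(u,v)$ whose $T$-path contains exactly one of $e_i, e_j$, and under the mapping above these are precisely the intervals containing exactly one of the points $i, j$; summing gives $\cost(i,j)$ directly. For the reverse direction, starting from an Interval instance on $n-1$ points with interval set $\cI$, build the path on $v_1, \ldots, v_n$ and, for each $[s, t] \in \cI$, add an edge $(v_s, v_{t+1})$ with the appropriate weight; the same correspondence then matches costs and establishes the equivalence in both directions.

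A small caveat to record is that a 2-respecting cut may use only a single tree edge, while minimizing $\cost(i,j)$ over pairs $i \neq j$ only captures cuts using two distinct tree edges (setting $i = j$ trivially gives $\cost(i,i) = 0$). A 1-respecting cut using $e_i$ has value equal to the total weight of intervals covering $i$, which can be computed once for every $i$ in a trivial pre-pass, so the full 2-respecting min-cut value is the minimum of this quantity over all $i$ and the optimum of the Interval problem. I do not expect any real obstacle: the proof is a bookkeeping verification that a tree edge, a position on the path, and a graph edge whose $T$-path crosses a given tree edge are three views of the same combinatorial object. The only care needed is consistent indexing (points $1, \ldots, n-1$ versus tree edges $e_1, \ldots, e_{n-1}$) and handling degenerate cases such as self-loops in $G$, which map to empty intervals and contribute nothing to either cost.
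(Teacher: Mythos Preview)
Your proposal is correct and follows essentially the same approach as the paper: identify tree edge $e_\ell$ with point $\ell$, map each non-tree edge $(v_a,v_b)$ to the interval of tree edges on its $T$-path, and verify by a case check that an edge crosses the cut determined by $\{e_i,e_j\}$ if and only if its interval covers exactly one of $i,j$. Your write-up is slightly more explicit about the reverse direction and about the 1-respecting caveat, but the argument is the same bookkeeping correspondence.
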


\begin{figure}[h]
\centering
    \includegraphics[scale=0.7]{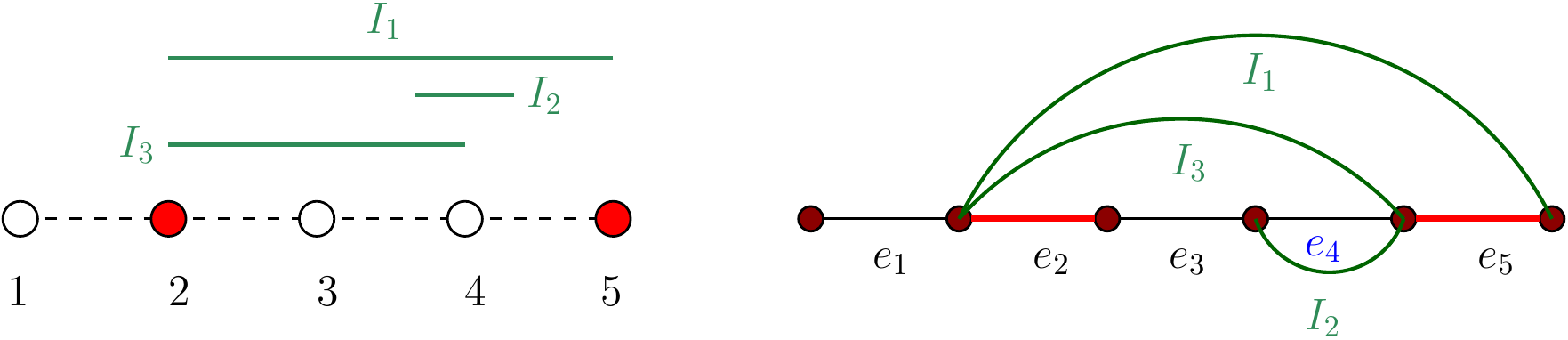}
    \caption{Equivalence between Interval problem and 2-respecting min-cut when the underlying tree is a path of length $n = 6$. There are $5$ points in the Interval problem, and $5$ corresponding edges in the path. The green edges on the RHS corresponds to intervals in the LHS. We are interested in $\cost(2,5)$ (marked in \textcolor{red}{red}).}
        \label{fig:interval}
\end{figure}

\bsni
We defer the proof of this claim to Appendix \ref{app:interval-path-eqv}.

\paragraph*{Cost matrix.} The cost matrix $M_G$ of $G$ with respect to $T$ is defined as a matrix of dimension $(n-1) \times (n-1)$ where the $(i,j)$-th entry of $M_G$ is the weight of a 2-respecting cut of $G$ which respects the $i$-th and the $j$-th edge of $T$. For the Interval problem, $M_G(i,j) = \cost(i,j)$. To reiterate, the goal of the Interval problem is to find the smallest entry in the cost matrix $M_G$.

\begin{figure}[h]
\centering
    \includegraphics[scale=0.7]{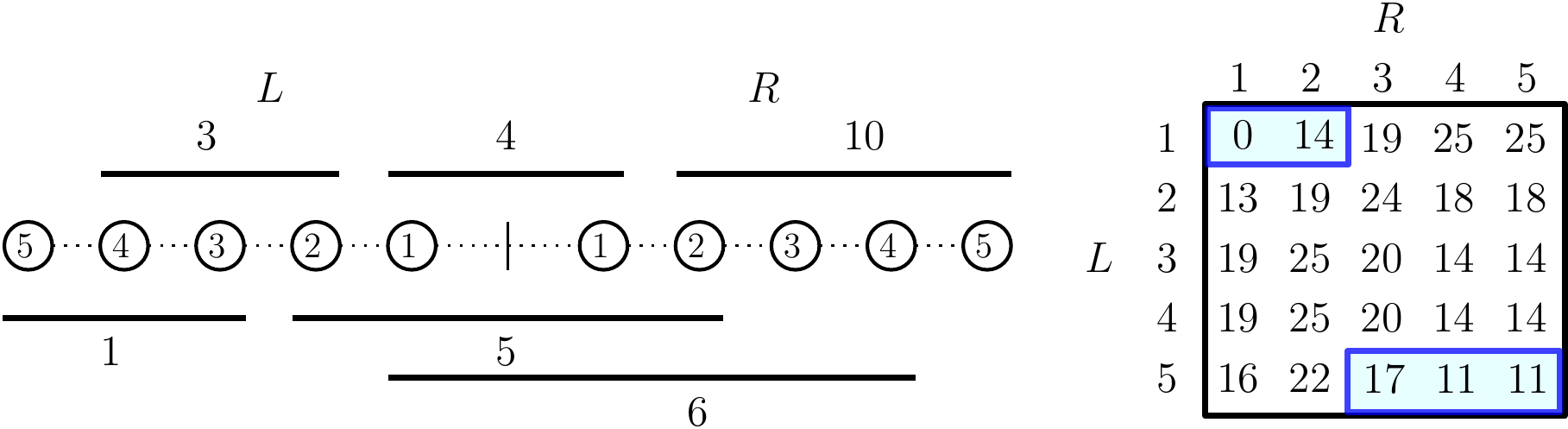}
    \caption{Bipartite Interval problem and corresponding cost-matrix. The \textcolor{blue}{blue boxes} show the positions of the minimum entries in each column.}
        \label{fig:bipart-cost-matrix}
\end{figure}

To this end, we formulate a restricted version of the Interval problem, denoted as \textit{Bipartite interval problem}, as follows: 

\paragraph*{Bipartite Interval problem.} Divide $[n-1]$ in two equal parts $L = \{1, \cdots, \ceil{\frac{n-1} 2}\}$ and $R = \{\ceil{\frac{n-1} 2}+1, \cdots, n-1\}$. We are interested in finding the pair $(i,j) \in L \times R$ such that $\cost(u,v)$ is minimum among such pairs. For solving this problem, we provide a schematic algorithm as follows. The model specific implementations are discussed in subsequent sections. In the algorithm, we assume a slightly unusual ordering on $L$ for simplicity of exposition: Restrict the cost matrix $M$ to $L \times R$ (Denoted by $M_{L \times R}$). The first row corresponds to $\ceil{\frac{n-1} 2}$-th point in $L$, the second row is $(\ceil{\frac{n-1} 2} -1)$-th point, and so on till the $\ceil{\frac{n-1} 2}\}$-th row which is the first point of $L$. The columns are ordered normally in an increasing order from the set $\{\ceil{\frac{n-1} 2}+1, \cdots, n-1\}$. The algorithm is recursive: it is instructive to view the execution of the algorithm as a binary tree of depth $O(\log n)$ where (i) each nodes denotes a recursive call, and (ii) a parent node makes two calls corresponding to two of its children on disjoint sub-problems (or submatrices). In any node $v$, the algorithm reads the middle column ({\sf mid}) of the associated sub-matrix $M_{L_v \times R_v}$. The minimum value of this column can occur at many rows in this column. Let us denote by $i_s^v$ to be the row with the smallest row index where the minimum occurs, and $i_t^v$ to be the row with the largest row index where the minimum occurs. The node $v$ then issues two recursive calls: The left child $u$ is issued with the submatrix $M_{L_u \times V_u}$ and the right child $u'$ is issued with the submatrix $M_{L_{u'} \times V_{u'}}$. Here $L_{u}$ is a prefix of $L_v$ of size $i_s^v$ (i.e., the initial rows of $L_v$ including the row $i_s^v$) and $L_{u'}$ is the suffix of $L_v$ of size $|L_v| - i_t^v + 1$ (i.e., the last few rows of $L_v$ including the row $i_t^v$. Also, $R_{u}$ is a prefix of $R_v$ of size ${\sf mid} -1$ (i.e., the initial columns of $R_v$ not including the column $\sf mid$) and $R_{u'}$ is the suffix of $R_v$ of size ${\sf mid} -1$ (i.e., the last few columns of $R_v$ not including the column $\sf mid$. At the leaf nodes, when either $L_v$ (or $R_v$) is a singleton set, the algorithm read the whole row (or column), and outputs the minimum. Note that, in each depth of the recursion tree, the number of entries of the matrix $M_{L \times R}$ is read is at most $n+1$---this will be important in the analysis of the complexity of the algorithm.

\begin{center}
  \centering
  \begin{minipage}[H]{0.8\textwidth}
\begin{algorithm}[H]
\caption{Schematic algorithm for Bipartite interval problem}\label{alg:bipartite-int}
\begin{algorithmic}[1]
\Procedure{Bipartite-interval}{$L,R$}
\If{$|L| > 1$ and $|R| > 1$}
    \State Let ${\sf mid}$ be the middle column of $R$. 
    \State \textbf{Find} the $i_s, i_t \in L$ such that $\cost(i_s, {\sf mid})= \cost(i_t, {\sf mid})$ is minimum in column ${\sf mid}$, and $i_s$ is the first and $i_t$ is the last such row. \label{algline:min-find} \Statex\Comment{\textcolor{blue}{Read the {\sf mid} column of $M_{L \times R}$}}
    \State Let $L^1$ be the prefix of $L$ of length $i_s$, and $L^2$ be the suffix of $L$ of length $i_t -1$. 
    \State Let $R^1$ be the prefix of $R$ of length ${\sf mid} -1$, and $R^2$ be the suffix of $R$ of length ${\sf mid} - 1$.
    \State \textbf{Run} {\sc Bipartite-interval}$(L^1, R^1)$ and {\sc Bipartite-interval}$(L^2, R^2)$.
\Else
    \State \textbf{Read} the entire $M_{L \times R}$ and \textbf{record} the minimum. \Statex\Comment{\textcolor{blue}{Requires reading at most 1 column or 1 row of matrix $M$}}
\EndIf
\State \textbf{Read} all recorded minimums and \textbf{output} the smallest.
\EndProcedure
\end{algorithmic}
\end{algorithm}
\end{minipage}
\end{center}

Before providing correctness of Algorithm \ref{alg:bipartite-int}, we show how, given Algorithm \ref{alg:bipartite-int}, we can solve the Interval problem.

 \begin{center}
   \centering
  \begin{minipage}[H]{0.8\textwidth}
\begin{algorithm}[H]
\caption{Schematic algorithm for Interval problem}\label{alg:int}
\begin{algorithmic}[1]
\State Set $L = \{1, \cdots, \ceil{\frac{n-1} 2}\}, R = \{\ceil{\frac{n-1} 2}, \cdots, n-1\}$.
\Procedure{Interval}{$L,R$}
\If{$|L| > 1$ and $|R| > 1$}
    \State \textbf{Run} {\sc Bipartite-interval}$(L, R)$ and record the output.
    \Statex \Comment{\textcolor{blue}{Requires reading matrix $M$}}
    \State \textbf{Divide} $L = L^1 \cup L^2$ in two equal parts where $L^1$ is the prefix of $L$ and $L^2$ is the suffix of $L$, both of length $|L|/2$. Similarly, \textbf{divide} $R = R^1 \cup R^2$.
    \State \textbf{Run} {\sc Interval}$(L^1, R^1)$ and {\sc Interval}$(L^2, R^2)$.
\Else
    \State \textbf{Read} the entire $M_{L \times R}$ and \textbf{record} the minimum. \Statex\Comment{\textcolor{blue}{Requires reading constant many entries of $M$}} 
\EndIf
\State \textbf{Read} all recorded outputs and output the smallest.
\EndProcedure
\end{algorithmic}
\end{algorithm}
\end{minipage}
\end{center}

\paragraph*{Correctness of Algorithm \ref{alg:int}.} This follows follows from the observation that, for any $(i,j) \in [n-1] \times [n-1]$, there is a call to the {\sc Interval} procedure where $i \in L$ and $j \in R$. At the $\ell$-th level of the recursion tree, the division of $L = L^1 \cup L^2$ and $R = R^1 \cup R^2$ happens depending on the $\ell$-th most significant bit in the binary representation of $i$ and $j$. This means that $i$ and $j$ will be separated at the $\ell$-th level of recursion (\textit{i.e.,} depth $\ell$ in the associated recursion tree) if the most significant bit where $i$ and $j$ differs is the $\ell$-th bit.

\medskip\noindent
At this point, we make a general claim about the complexity of Algorithm \ref{alg:int}. This claim will be used later to compute the complexity of Algorithm \ref{alg:int} in different models of computation.

\begin{claim} \label{clm:interval-gen-comp}
Let the complexity of finding $i_s$ and $i_t$ in Line \ref{algline:min-find} of Algorithm \ref{alg:bipartite-int} on $n$ points be $\tO(n)$ in some measure of complexity. Then the complexity of Procedure {\sc Interval} is $\tO(n)$.
\end{claim}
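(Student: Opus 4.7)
My plan is to first bound the cost of a single invocation of {\sc Bipartite-interval}, and then compose this bound with the outer recursion of {\sc Interval} to get the overall $\tO(n)$ runtime.

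\emph{Step 1: cost of {\sc Bipartite-interval}.} I would model an invocation of {\sc Bipartite-interval}$(L,R)$ as a node $v$ in a binary recursion tree, associating to $v$ a submatrix of dimensions $\ell_v \times r_v$ with $\ell_v = |L_v|$ and $r_v = |R_v|$. An internal node's work is dominated by Line~\ref{algline:min-find}, which by hypothesis costs $\tO(\ell_v)$, after which the two children inherit respectively the prefix of $L_v$ of length $i_s^v$ and the suffix of $L_v$ of length $\ell_v - i_t^v + 1$. Since $i_s^v \le i_t^v$, these two row-ranges overlap in at most one row (namely when $i_s^v = i_t^v$), so $\ell_{v_1} + \ell_{v_2} \le \ell_v + 1$. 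Letting $S_d = \sum_{v \text{ at depth } d} \ell_v$, this gives $S_{d+1} \le S_d + (\text{\# nodes at depth }d) \le S_d + 2^d$; combined with $S_0 \le n$, induction yields $S_d \le n + 2^d$. The $R$-range is halved at every internal node, so the recursion depth is $D = O(\log n)$, and hence $S_d \le 2n$ for every relevant $d$. Summing the per-node cost over the $O(\log n)$ depths gives $\tO(n)$ total internal work. For the leaves, a leaf with $|L_v|=1$ reads a single row of $r_v$ entries; but since $R$ is halved each level, there are at most $2^d$ such leaves at depth $d$ each with $r_v \le n/2^d$, contributing $O(n)$ per depth and $\tO(n)$ overall. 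Leaves with $|R_v|=1$ are already charged to the $S_d$ bound. In total {\sc Bipartite-interval} runs in $\tO(n)$ time on an $n\times n$ instance.

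\emph{Step 2: outer recursion for {\sc Interval}.} Let $T(n)$ denote the cost of Procedure {\sc Interval} on inputs with $|L|+|R|=n$. Each call invokes {\sc Bipartite-interval}$(L,R)$ once on a sub-instance of size $\Theta(n)$, costing $\tO(n)$ by Step 1, and then recurses on {\sc Interval}$(L^1,R^1)$ and {\sc Interval}$(L^2,R^2)$ whose combined sizes are each $n/2$. This yields the recurrence
\[
T(n) \;\le\; 2T(n/2) + \tO(n),
\]
which solves to $T(n) = \tO(n)$ by the standard master-theorem argument.

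\emph{Main obstacle.} The only nonroutine ingredient is the one-row overlap bound $\ell_{v_1} + \ell_{v_2} \le \ell_v + 1$, which is what prevents the unbalanced row-split (recall the split position $i_s^v$ depends on where the column minimum is attained, and could be far from the middle) from blowing up the per-depth budget. Once this structural fact is pinned down, the rest is a clean geometric-sum calculation at each depth of each recursion, and composes cleanly with the balanced halving in the outer {\sc Interval} recursion to give $\tO(n)$.
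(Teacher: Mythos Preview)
Your proof is correct and follows essentially the same approach as the paper: the paper writes the recurrence $T(n,n)=\tO(n)+T(n_0,n/2)+T(n_1,n/2)$ with $n_0+n_1\le n+1$ for {\sc Bipartite-interval}, asserts it solves to $\tO(n)$, and then plugs this into $C(n)=T(n/2,n/2)+2C(n/2)$ for {\sc Interval}. You unroll the first recurrence explicitly via the depth-by-depth sum $S_d$ and the one-row-overlap bound, which is exactly the content of the paper's $n_0+n_1\le n+1$, so the two arguments coincide.
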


\begin{proof}
Let us denote the complexity of Procedure {\sc Bipartite-interval} on input $L$ and $R$ where $|L| = \ell, |R| = r$ be $ T(\ell, r)$. Then, we can write the following recursion:
\[
T(n,n) = \tO(n) + T(n_0, n/2) + T(n_1, n/2),
\]
where $n_0 + n_1 \leq n + 1$ and $T(1, r) = O(r)$ and $ T(\ell,1) = O(\ell)$. At the beginning, $n_0 = i_s$ and $n_1 = n - i_t +1$. Solving this recursion, we get $ T(n,n) = \tO(n)$.

Now let us denote the complexity of {\sc Interval} on $n$ points by $C(n)$. Then we have
\[
C(n) = T(n/2,n/2) + 2 C(n/2),
\]
with $C(1)= O(1)$. Solving this recursion and putting the value of $T(n/2,n/2)$, we get $C(n) = \tO(n)$.
\end{proof}




\subsubsection{Correctness of Algorithm \ref{alg:bipartite-int}}
\label{sec:bipart-interval}

First, we note an important property of the cost-matrix $M$ of the Interval problemon $n-1$ points. Given $M$, we define $n-2$ vectors $\Delta_1, \cdots, \Delta_{n-2}$, each of dimension $n-1$ as follows: $\Delta_j(i) = M(i,j) - M(i, j+1)$.

\begin{claim} \label{clm:delta-increasing}
For each $j$, the vector $\Delta_j$ is monotonically increasing.
\end{claim}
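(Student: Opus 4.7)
The monotonicity $\Delta_j(i{+}1) \ge \Delta_j(i)$ rearranges to the Monge-type inequality
\[
M(i, j) + M(i{+}1, j{+}1) \;\le\; M(i, j{+}1) + M(i{+}1, j).
\]
I will work in the indexing used by Algorithm~\ref{alg:bipartite-int}: row index $i$ corresponds to a point $p_L(i)$ that \emph{decreases} in $i$ (the first row of $M$ is the rightmost point of $L$), while column index $j$ corresponds to $p_R(j)$, which increases in $j$. Hence the four line positions attached to the quadruple $(i, i{+}1)\times(j, j{+}1)$ sit on the line in the order $a':=p_L(i{+}1)<a:=p_L(i)<b:=p_R(j)<b':=p_R(j{+}1)$, and the Monge inequality reduces to
\[
\cost(a, b) + \cost(a', b') \;\le\; \cost(a', b) + \cost(a, b').
\]

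Writing $\cost(x, y) = \sum_{I \in \mathcal{I}} w_I\, f_I(x, y)$, where $w_I$ is the weight of an interval $I = [s, t]$ and $f_I(x, y) \in \{0,1\}$ indicates that $I$ covers exactly one of $x, y$, I would reduce the task to a per-interval statement: for every interval $I = [s, t]$ and every four-point pattern $a' < a < b < b'$,
\[
f_I(a, b) + f_I(a', b') \;\le\; f_I(a', b) + f_I(a, b').
\]
Summing this over all $I$ with weights $w_I$ then yields the Monge inequality for $M$, and hence the claim.

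The main step is a finite case analysis on where the endpoints $s, t$ of $I$ lie relative to $\{a', a, b, b'\}$. I would partition the location of $s$ into six regions $\{s \le a'{-}1\}$, $\{s = a'\}$, $\{s = a\}$, $\{a{+}1 \le s \le b\}$, $\{s = b'\}$, $\{s \ge b'{+}1\}$, and likewise for $t$ (subject to $s \le t$), and read off the four values of $f_I$ from $x \in I \iff s \le x \le t$. In every feasible type the inequality holds with equality, except the single type $s = a$, $t = b$ (i.e.\ $I = [a, b]$), where the LHS is $0$ and the RHS is $2$, making the inequality strict. The main obstacle is purely bookkeeping: enumerating the roughly dozen feasible $(s, t)$-types without missing any. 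A handy simplification is the identity
\[
f_I(x, y) \;=\; \mathbbm{1}\bigl[s \in (x, y]\bigr] \;\oplus\; \mathbbm{1}\bigl[t \in [x, y)\bigr],
\]
which isolates how the left and right endpoints of $I$ each interact with the query pair and makes the enumeration considerably shorter.
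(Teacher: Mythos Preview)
Your proof is correct. Both your argument and the paper's reduce to a per-interval additivity check, but the decompositions differ. The paper adds intervals one at a time and classifies them by where they sit relative to the $L/R$ split (an $L$-interval, an $R$-interval, or a crossing interval), then shows each type preserves the monotonicity of $\Delta_j$. You instead rewrite the monotonicity as the Monge inequality $M(i,j)+M(i{+}1,j{+}1)\le M(i,j{+}1)+M(i{+}1,j)$ and classify each interval by which contiguous subset of the four specific points $a'<a<b<b'$ it covers. Your route makes the Monge structure explicit (which is also what drives Claim~\ref{clm:delta-min}) and is arguably cleaner, since the eleven contiguous-subset cases are exhaustive by the interval property and all but the $\{a,b\}$ case give equality. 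The paper's route, on the other hand, ties more directly to the picture in Figure~\ref{fig:bipart-interval} and may be easier to visualize. One small point: your six-region partition for $s$ tacitly uses that consecutive rows and columns of $M_{L\times R}$ correspond to \emph{adjacent} integer points (so $a'=a-1$ and $b'=b+1$); this is true here, but worth stating so that the enumeration is visibly complete.
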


Given Claim \ref{clm:delta-increasing}, we can observe the following property of the cost matrix.\footnote{In literature, such a matrix is referred to as a \textit{Monge matrix} \cite{BurkardKR96}. A few algorithms are known for finding column minima of a Monge matrix, most notably the SMAWK algorithm \cite{AggarwalKMSW87}. We use a divide and conquer approach because it gives better handle for adapting the schematic algorithm in different computational models.} Suppose we are interested in finding the minimum entry in each column. 

\begin{claim} \label{clm:delta-min}
Consider column $j$ and let $i_s$ be the first row and $i_t ( \geq i_s)$ be the last row where minimum occurs in the column $j$. Then, for any column $j' < j$, a minimum entry of column $j'$ will occur at rows in the set $\{1, \cdots, i_s\}$, and for any column $j'' > j$, a minimum of column $j''$ will occur at rows in the set $\{i_t, \cdots, n\}$.
\end{claim}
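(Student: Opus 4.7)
\textbf{Proof plan for Claim~\ref{clm:delta-min}.}
The plan is to leverage only the monotonicity property from Claim~\ref{clm:delta-increasing}: the vector $\Delta_j$ with $\Delta_j(i)=M(i,j)-M(i,j+1)$ is non-decreasing in $i$. I will first prove the statement for two adjacent columns $j$ and $j+1$, namely that some minimum of column $j+1$ occurs at a row $\geq i_t$; the claim for column $j-1$ (some minimum at a row $\leq i_s$) is entirely symmetric, using the reversed inequality. Once the adjacent-column statement is in hand, the full claim follows by a routine induction: applying the adjacent-column statement from column $j$ to column $j+1$ produces a last-minimum row $i^{\ast}_{j+1}\geq i_t$, then from $j+1$ to $j+2$ a last-minimum row $\geq i^{\ast}_{j+1}\geq i_t$, and so on.

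For the adjacent-column step I would pick $i^{\ast}$ to be the \emph{last} row at which the minimum of column $j+1$ is attained, and argue by contradiction, supposing $i^{\ast}<i_t$. Instantiating the monotonicity of $\Delta_j$ at the pair of rows $i^{\ast}<i_t$ yields
\[
M(i^{\ast},j)-M(i^{\ast},j+1)\;\leq\;M(i_t,j)-M(i_t,j+1),
\]
which rearranges to
\[
0\;\leq\;M(i_t,j+1)-M(i^{\ast},j+1)\;\leq\;M(i_t,j)-M(i^{\ast},j)\;\leq\;0,
\]
where the leftmost inequality uses that $i^{\ast}$ minimises column $j+1$ and the rightmost uses that $i_t$ minimises column $j$. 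Hence both differences vanish, and in particular $M(i_t,j+1)=M(i^{\ast},j+1)$, so $i_t$ is also a minimiser of column $j+1$. But $i_t>i^{\ast}$ then contradicts the choice of $i^{\ast}$ as the \emph{last} minimum row of column $j+1$. So $i^{\ast}\geq i_t$, as desired.

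For the symmetric statement about columns $j'<j$, I would instead take $i^{\ast\ast}$ to be the \emph{first} row minimising column $j-1$ and run the analogous argument using the pair of rows $i_s<i^{\ast\ast}$ (assumed for contradiction), deriving $M(i_s,j-1)=M(i^{\ast\ast},j-1)$ and contradicting the choice of $i^{\ast\ast}$ as the first such row.

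The only real subtlety is the handling of ties: the statement of the claim needs an \emph{existence} of a minimum row in the prescribed range, and if one naively picks an arbitrary minimum row of column $j+1$ the contradiction breaks because that row could still be a minimiser of column $j$ (hence lie in $[i_s,i_t]$, which is consistent with the hypotheses). The remedy, as above, is to commit to the \emph{last} (respectively \emph{first}) minimum row, which makes the contradiction extremal and clean. With this choice, no further case analysis is required, and the inductive extension to arbitrary $j''>j$ and $j'<j$ is immediate.
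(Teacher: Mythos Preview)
Your proof is correct and uses essentially the same idea as the paper: exploit the monotonicity of $\Delta_j$ together with the minimality conditions at the two candidate rows to force a contradiction. The only structural difference is that the paper avoids your induction step by observing that $\Delta_{j',j}(i):=M(i,j')-M(i,j)=\Delta_{j'}(i)+\cdots+\Delta_{j-1}(i)$ is itself monotone (sum of monotone sequences), so one can jump directly from column $j$ to any $j'<j$ with a single contradiction argument; your adjacent-column-plus-induction route is equivalent but slightly longer.
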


Given Claim \ref{clm:delta-min}, it is clear that Algorithm \ref{alg:bipartite-int} records minimum entry of each column of the cost matrix. In the end, the algorithm outputs the smallest among these minimum entries, which is the minimum entry of the whole cost-matrix. Next we prove Claim \ref{clm:delta-increasing} and \ref{clm:delta-min}.

\begin{proof}[Proof of Claim \ref{clm:delta-increasing}]
We denote any interval which goes from $L$ to $R$ as \textit{crossing interval}. We also denote any interval contained in $L$ as \textit{$L$-interval} and any interval contained in $R$ as \textit{$R$-interval}. We start with an empty cost-matrix (every entry is 0) and will introduce each interval one by one. Fix any $j$. To start with, the $\Delta_j$ is an all-0 vector and hence monotonically increasing. We show that $\Delta_j$ will maintain this property when we introduce any of the three kinds of intervals. Figure \ref{fig:bipart-interval} shows the contribution of different types of intervals in the cost-matrix.

\begin{figure}[h]
\centering
    \includegraphics[scale=0.7]{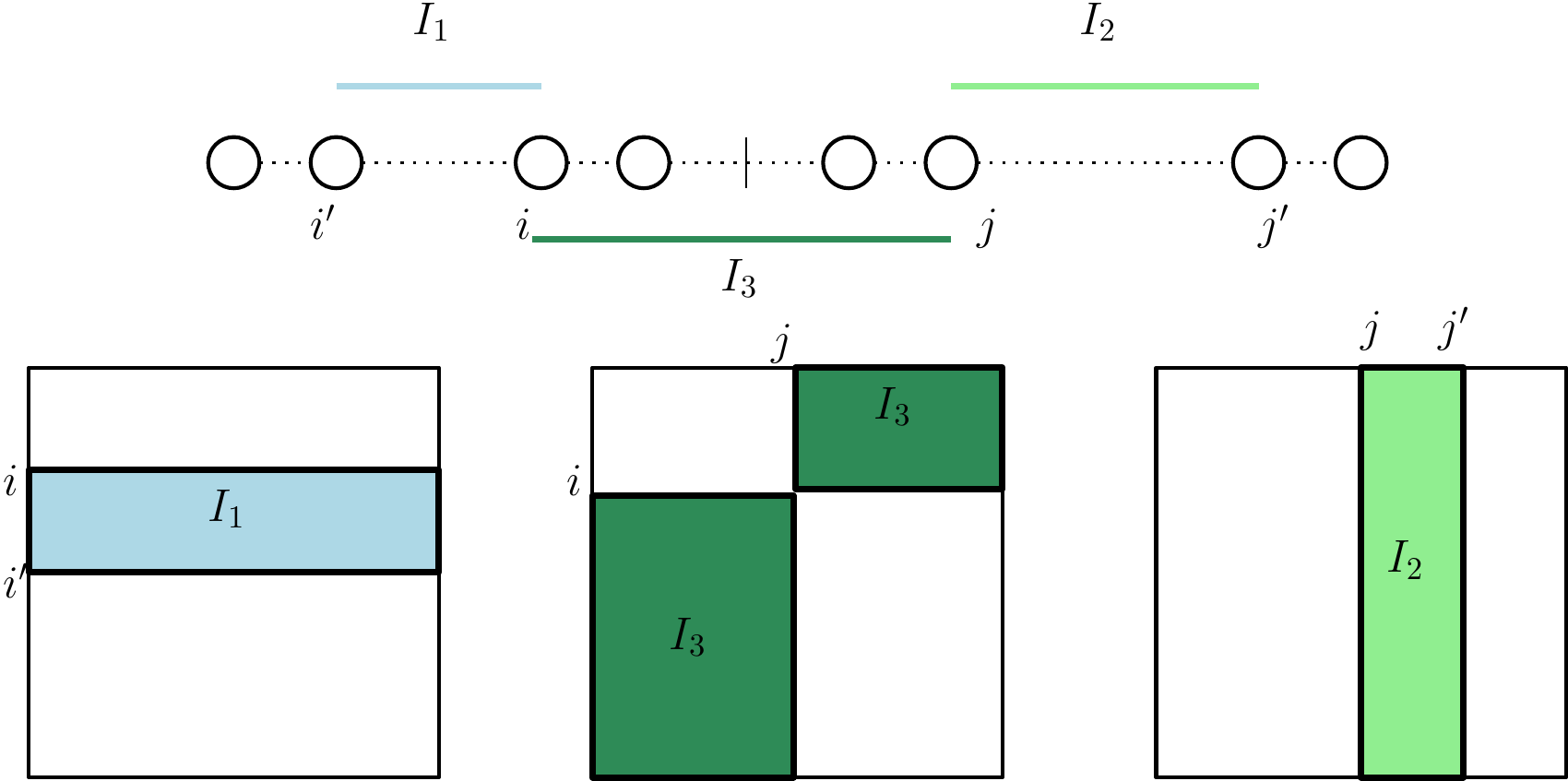}
    \caption{Contribution of each type of Interval ($L, R$, Crossing) in the cost matrix $M$.}
    \label{fig:bipart-interval}
\end{figure}

\paragraph*{$L$-interval.} This type of interval does not increase any entries of $\Delta_j$ as they increase both $M(i,j)$ and $M(i,j+1)$ by the same amount.

\paragraph*{$R$-interval.} If neither of $j$ and $j+1$ is covered by the interval, or both of them are covered by the interval, then $\Delta_j$ does not change. Else, assume that $j$ is covered by the interval. This means $M(i,j)$ is increased by the same value for all $i$. So, if $\Delta_j$ was monotonically increasing before introducing this interval, then $\Delta_j$ remains monotonically increasing. The case where $j+1$ is covered by the interval can be handled similarly.

\paragraph*{Crossing interval.} Let us assume that the interval starts from $p \in L$ and end on $q \in R$. If $q \neq j+1$, then this can be handled similar to $L$-interval on $R$-interval. When $q = j+1$, then $M(i,j+1)$ increases for all $i \leq p$, $M(i,j)$ increases for all $i \geq p$, and  all these increments are by the same amount (i.e., by the weight of the interval). This operation also does not violate the necessary property of $\Delta_j$.
\end{proof}

\begin{proof}[Proof of Claim \ref{clm:delta-min}]
First we show that, for any column $j' < j$, a minimum entry of column $j'$ will occur at rows up to $i_s$. The other case will follow by a similar argument. Let us denote $\Delta_{j',j}$ to be a vector where $\Delta_{j',j} = M(i,j') - M(i, j)$. It is not hard to see that $\Delta_{j',j}(i) = \Delta_{j'} + \cdots + \Delta_{j-1}$, and hence is also monotonically increasing (Claim \ref{clm:delta-increasing}). 

Now, for the sake of contradiction, assume that there is no minimum entry in the column $j'$ in any row in the set $\{1, \cdots, i_s\}$. Let the minimum entry at column $j'$ occurs at row $i > i_s$, \textit{i.e.}, $M(i,j') < M(i_s,j')$. We have $M(i, j') = M(i,j) + \Delta_{j',j}(i) \geq M(i_s,j) + \Delta_{j',j}(i) \geq M(i_s,j) + \Delta_{j',j}(i_s)$. Here the first inequality follows from the fact that, for column $j$, $M(i_s,j)$ is a minimum entry; and the second inequality follows from the fact that $\Delta_{j',j}$ is monotonically increasing. This gives us the necessary contradiction.
\end{proof}

\subsection{\Interest edges and paths} \label{sec:interesting}

This section deals with the notion of \textit{interesting} edges which we will need later to design an algorithm for the general case. From this point onward, we will make connections between edges of a spanning tree $T$ and vertices of $T$. We will follow the labeling below unless stated otherwise: For vertices $u, v, v'$, we denote the edges of $T$ which are parents of these vertices as $e, e', e''$ respectively.

\begin{definition}[Cross-\interest edge] \label{def:cross-int-vertex}
Given an edge $e \in T$, we denote an edge $e' \in T$ ($e \bot e'$) to be cross-\interest with respect to $e$ if $C(u^\da, v^\da) > \frac{\deg(u^\da)} 2$ where $e$ is the parent edge of $u$ and $e'$ is the parent edge of $v$.
\end{definition}

\begin{definition}[Down-\interest edge] \label{def:down-int-vertex}
Given an edge $e$, we denote an edge $e' \in u^\da$ to be down-\interest with respect to $e$ if $C(v^\da, V - u^\da) > \frac{\deg(u^\da)} 2$ where $e$ is the parent edge of $u$ and $e'$ is the parent edge of $v$.
\end{definition}


If two edges $e$ and $e'$ are cross-\interest to each other, or if $e$ is down-\interest to $e'$, then we denote the pair $(e,e')$ as \textit{candidate exact-2-respecting cut}. The reason for denoting so will be clear in the analysis of correctness of Algorithm \ref{alg:tree}. Next we make the following observation about cross and down-\interest edges which we justify subsequently.
\begin{observation} \label{obs:int-vbotv}
Given an edge $e$, there cannot be two edges $e'$ and $e''$ such that $e' \bot e''$ and both $e'$ and $e''$ are cross-\interest or down-\interest with respect to $e$.
\end{observation}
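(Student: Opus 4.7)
My plan is to parse the statement as two separate sub-cases --- (A) both $e'$ and $e''$ are cross-\interest to $e$, and (B) both are down-\interest to $e$ --- and handle each by the same ``half-the-budget, twice'' calculation already used for stars in Observation~\ref{thm:intro:starStructure}. In both sub-cases the two \interest-witnessing weights each strictly exceed $\deg(u^\da)/2$, yet I will show they must sum to at most $\deg(u^\da)$, i.e.\ the total weight of the cut $\cC(u^\da)$, which gives the contradiction. Throughout $u,v,v'$ denote the child endpoints of $e,e',e''$ respectively.

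For Case~(A), cross-\interest forces $e\bot e'$ and $e\bot e''$, so both $v^\da$ and $v'^\da$ lie inside $V\setminus u^\da$; the hypothesis $e'\bot e''$ then makes them disjoint. Consequently $\cC(u^\da,v^\da)$ and $\cC(u^\da,v'^\da)$ are disjoint sub-collections of $\cC(u^\da)$, and summing their weights gives
\[
C(u^\da,v^\da)+C(u^\da,v'^\da) \;\le\; \deg(u^\da),
\]
contradicting that each summand is strictly larger than $\deg(u^\da)/2$ by Definition~\ref{def:cross-int-vertex}.

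For Case~(B), down-\interest places $v,v'\in u^\da$, and $e'\bot e''$ again makes $v^\da$ and $v'^\da$ disjoint subtrees. The witness $C(v^\da,V-u^\da)$ counts edges with one endpoint in $v^\da\subseteq u^\da$ and the other in $V-u^\da$, so it sits inside $\cC(u^\da)$; the same is true of $C(v'^\da,V-u^\da)$, and the disjointness of $v^\da$ and $v'^\da$ makes these two subsets of $\cC(u^\da)$ disjoint as well. Summing once more yields a total at most $\deg(u^\da)$, contradicting that each term exceeds $\deg(u^\da)/2$ by Definition~\ref{def:down-int-vertex}. The only delicate point I foresee is picking the right disjoint decomposition of $\cC(u^\da)$ in each case; once that is set up the algebra is immediate, so I do not expect a genuine obstacle beyond this bookkeeping.
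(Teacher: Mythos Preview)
Your proposal is correct and follows essentially the same argument as the paper: in each case you observe that the two relevant edge sets are disjoint subsets of $\cC(u^\da)$, so their total weight is at most $\deg(u^\da)$, contradicting that each strictly exceeds $\deg(u^\da)/2$. Your write-up is in fact a bit more explicit than the paper's about why the disjointness holds (e.g.\ noting that cross-\interest forces $v^\da,v'^\da\subseteq V\setminus u^\da$), but the core idea is identical.
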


\noindent
To recall, $e$ is the parent edge of $u$, $e'$ is the parent edge of $v$ and $e''$ is the parent edge of $v'$.
\begin{itemize}
    \item \textbf{Cross-\interest.} $C(u^\da, v^\da)$ and $C(u^\da, v'^\da)$ are disjoint for $e' \bot e''$, and hence $\deg(u^\da) \leq C(u^\da, v^\da) + C(u^\da, v'^\da)$. If both $v$ and $v'$ are \interest with respect to $u$, then $C(u^\da, v^\da) + C(u^\da, v'^\da) > \deg(u^\da)$ which is a contradiction. 
    \item \textbf{Down-\interest.} $C(v^\da, V - u^\da)$ and $C(v'^\da, V - u^\da)$ are disjoint for $e' \bot e''$ as well, and hence a similar argument goes through.
\end{itemize}

This means that the set of edges which are cross-\interest w.r.t. $e$ belongs to a root-to-leaf path, and the set of edges which are down-\interest w.r.t. $e$ belongs to a $u$-to-leaf path. 

Note that if $v''$ is the parent of $v$, $C(u^\da, v''^\da) \geq C(u^\da, v^\da)$ and $C(v''^\da, V - u^\da) \geq C(v^\da, V - u^\da)$. So we can make the following observation:

\begin{observation}\label{obs:int-ancestor}
Given an edge $e$, if an edge $e'$ is cross-\interest w.r.t $e$, then all ancestors of $e'$ are \interest w.r.t $e$. Similarly, if an edge $e''$ is down-\interest w.r.t. $e$, then all ancestors of $e''$ (up to but not including $e$) are down-\interest w.r.t $e$.
\end{observation}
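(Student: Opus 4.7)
The plan is to deduce both halves of the observation from a single monotonicity fact about cut weights along root-directed paths in $T$. If $\tilde v$ is any proper ancestor of $v$ in $T$, then $v^\da \subseteq \tilde v^\da$; consequently, for any vertex set $S$ that is disjoint from $\tilde v^\da$, every edge in $\cC(v^\da, S)$ also lies in $\cC(\tilde v^\da, S)$, so $C(\tilde v^\da, S) \geq C(v^\da, S)$. Applying this with two different choices of $S$ will give the two halves.

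First I would handle the cross-\interest case. Assume $e'$, the parent edge of $v$, satisfies $e \bot e'$ and $C(u^\da, v^\da) > \deg(u^\da)/2$. Let $\tilde e$ be an ancestor of $e'$ with child endpoint $\tilde v$, and take $\tilde e$ to be one that is still orthogonal to $e$, i.e., $\tilde v$ is a strict descendant of the least common ancestor of $u$ and $v$. Then $\tilde v^\da$ and $u^\da$ are disjoint, so the monotonicity inequality with $S = u^\da$ yields $C(u^\da, \tilde v^\da) \geq C(u^\da, v^\da) > \deg(u^\da)/2$, and hence $\tilde e$ is cross-\interest with respect to $e$.

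Next I would handle the down-\interest case. Assume $e''$ is the parent edge of $v$, lies in $u^\da$, and satisfies $C(v^\da, V - u^\da) > \deg(u^\da)/2$. For any ancestor $\tilde e$ of $e''$ that is a strict descendant of $e$, with child endpoint $\tilde v$, the subtree $\tilde v^\da$ is still contained in $u^\da$, so $S = V - u^\da$ is disjoint from $\tilde v^\da$. Monotonicity now gives $C(\tilde v^\da, V - u^\da) \geq C(v^\da, V - u^\da) > \deg(u^\da)/2$, so $\tilde e$ is down-\interest with respect to $e$.

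The only delicate point is ensuring the disjointness hypothesis of the monotonicity inequality. In the cross case, once $\tilde v$ climbs to or above the least common ancestor of $u$ and $v$ in $T$, the subtree $\tilde v^\da$ engulfs $u^\da$ and the defining inequality of cross-\interest becomes vacuous; this is why the claim should be read as ranging over ancestors of $e'$ that remain on the side of $e'$. In the down case, the parenthetical ``up to but not including $e$'' in the statement enforces the analogous condition, keeping $\tilde v^\da$ inside $u^\da$. Beyond this bookkeeping, the proof is a pure exercise in subset monotonicity of cut weights, so I do not expect any serious obstacle.
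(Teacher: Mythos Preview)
Your proposal is correct and follows the same monotonicity argument the paper uses: the paper's justification is the single sentence preceding the observation, namely that if $v''$ is the parent of $v$ then $C(u^\da, v''^\da) \geq C(u^\da, v^\da)$ and $C(v''^\da, V - u^\da) \geq C(v^\da, V - u^\da)$, which is exactly your subset-monotonicity fact applied one step at a time. Your additional care about the disjointness hypothesis (restricting to ancestors that stay orthogonal to $e$, respectively strictly below $e$) is a point the paper leaves implicit, so your write-up is slightly more precise than the original.
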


Because of Observation \ref{obs:int-ancestor}, we extend Definition \ref{def:down-int-vertex} and \ref{def:cross-int-vertex} to the following:

\begin{definition}[\Interest path] \label{def:interest-path}
Given an edge $e$, we denote a path $p \in T$ to be cross-\interest (or down-\interest) with respect to $e$ if there is an edge $e' \in p$ such that $e'$ is cross-\interest (or down-\interest) with respect to $e$. Given two paths $p_1$ and $p_2$, we denote $p_1$ to be cross-\interest (or down-\interest) w.r.t. $p_2$ is there is an edge $e$ in $p_2$ such that $p_1$ is cross-\interest (or down-\interest) w.r.t. $e$.
\end{definition}

\subsection{Handling general spanning tree} \label{sec:tree-algo}

We use the heavy-light decomposition from \cite{ST83}. Given any rooted tree $T$, the heavy-light decomposition splits $T$ into a set $\cP$ of edge-disjoint paths such that any root-to-leaf path in $T$ can be expressed as a concatenation of at most
$\log n$ sub-paths of paths in $\cP$.

\begin{theorem}[\cite{ST83}] \label{thm:tree-decomp}
For any vertex $v$, the number of paths in $\cP$ that starts from $v$ or any ancestor of $v$ is at most $\log n$. In other words, the number of paths from $\cP$ which edge-intersects any root-to-leaf path is at most $\log n$.
\end{theorem}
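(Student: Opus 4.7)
I plan to prove this by reviewing the standard heavy-light decomposition construction and then bounding the number of \emph{light} edges on any root-to-leaf path. The construction is as follows. Root $T$ arbitrarily. For each internal vertex $u$, let $c(u)$ denote a child of $u$ whose subtree $c(u)^\da$ has the maximum size (breaking ties arbitrarily); call the edge $(u,c(u))$ a \emph{heavy} edge and all other parent-to-child edges \emph{light} edges. Every vertex has at most one outgoing heavy edge to its children, so the heavy edges partition into a collection of vertex-disjoint paths; together with the isolated vertices (regarded as trivial paths), these form the decomposition $\cP$. Note that the paths in $\cP$ are edge-disjoint and cover all of $T$, and each path in $\cP$ has a well-defined top endpoint (the vertex closest to the root).

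The key quantitative lemma is the halving property: if $(u,w)$ is a light edge with $w$ a child of $u$, then $|w^\da| \le |u^\da|/2$. Indeed, by the choice of $c(u)$ we have $|c(u)^\da| \ge |w^\da|$, and since $u^\da = \{u\} \cup \bigcupdot_{x \text{ child of } u} x^\da$, the subtree $u^\da$ contains both $c(u)^\da$ and $w^\da$ as disjoint pieces, hence $|u^\da| \ge |c(u)^\da| + |w^\da| \ge 2|w^\da|$.

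Now fix any vertex $v$ and consider the unique path $r = u_0, u_1, \dots, u_k = v$ from the root $r$ to $v$. Walking along this path, a new path of $\cP$ is started exactly when we cross a light edge (or when we start at $r$): if $(u_{i-1}, u_i)$ is a heavy edge, then $u_{i-1}$ and $u_i$ lie on the same path of $\cP$; if it is light, then $u_i$ is the top endpoint of a fresh path of $\cP$. Therefore the number of paths of $\cP$ whose top endpoint is $v$ or an ancestor of $v$ equals $1$ plus the number of light edges on the root-to-$v$ path. Applying the halving lemma successively along the light edges of this path, the subtree sizes of the endpoints strictly decrease by at least a factor of $2$ each time, starting from $|r^\da| = n$ and ending at some value $\ge 1$. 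Hence there are at most $\log_2 n$ light edges on the path, and the total number of paths of $\cP$ that start from $v$ or an ancestor of $v$ is at most $\log_2 n + 1 = O(\log n)$. The ``in other words'' reformulation follows by applying this to $v$ being any leaf: the paths of $\cP$ that share an edge with a root-to-leaf path are precisely those whose top endpoint lies on that root-to-leaf path.

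No step seems technically hard; the only thing to watch is bookkeeping around the $+1$ and the endpoint conventions so that the bound stated as $\log n$ is obtained cleanly (absorbing constants into the $\log$ or using $\lceil \log_2 n \rceil$ as needed).
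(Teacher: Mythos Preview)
The paper does not give its own proof of this statement; it simply cites Sleator and Tarjan \cite{ST83} and uses the result as a black box. Your proposal supplies the standard textbook argument for the heavy--light decomposition bound, and it is correct.

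One small point of convention worth aligning with the paper: the paper speaks of $\cP$ as a family of \emph{edge-disjoint} paths that the tree is ``split into,'' and the reformulation talks about paths that \emph{edge-intersect} a root-to-leaf path. In your construction you take the heavy edges to form the paths and treat leftover vertices as trivial paths, which leaves the light edges unassigned. To match the paper's wording exactly, attach each light edge to the top of the heavy chain below it (so a path in $\cP$ is a light edge followed by a maximal run of heavy edges, with the root's path being the exception). This does not change your counting argument at all---crossing a light edge still marks the transition to a new member of $\cP$---and it makes the ``edge-intersect'' phrasing literally true. Your observation about the $+1$ is also apt; the paper's ``$\log n$'' is being used loosely, and your $\lfloor \log_2 n \rfloor + 1$ is the honest bound.
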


Next, we describe the algorithm for finding a 2-respecting min-cut; the pseudo-code is provided subsequently. We assume that, in all models of computations that we are interested in, the spanning tree $T$ is known (\textit{i.e.,} stored in local memory). 

The na\"ive algorithm will go over all possible 2-respecting cuts to find out the smallest among them---we want to minimize such exhaustive search. Note that, we assume, it is rather efficient to check the minimum among 1-respecting cuts. But the number of 2-respecting cuts is $\Omega(n^2)$ and hence we cannot afford to go over all possible 2-respecting cuts to find the minimum if we want to be efficient. To this end, we will examine only those 2-respecting cuts which \textit{have the potential to be the smallest}. We describe the process of finding such potential 2-respecting cuts next. The algorithm consists of 5 steps.
\begin{description}
\item[Step 1. Finding 1-respecting min-cut (Line \ref{algline:1-resp-s} to \ref{algline:1-resp-e}):]The algorithm iterates over every edge $e$ of the spanning tree $T$, and considers the cut which respects $e$. The algorithm records the value of the smallest such cut.

\item[Step 2. Heavy-light path decomposition (Line \ref{algline:heavy-light}):] The algorithm uses heavy-light decomposition on $T$ (\textit{viz.} Theorem \ref{thm:tree-decomp}) to obtain a set of edge-disjoint paths $\cP$. Each root-to-leaf path of $T$ can be split into at most $\log n$ many subpaths from $\cP$.

\item[Step 3 (Line \ref{algline:2-resp-path-s} to \ref{algline:2-resp-path-e}): ] In this step, we consider 2-respecting cuts which respects 2 edges of a path $p \in \cP$. For each path $p$, we \textit{collapse} all other edges of $T$ to the vertices of $p$: The \textit{collapse} operation contracts all other edges of $T$ to form super-vertices, and all edges of $G$ which are incident on any vertex which takes part in a super-vertex will now be considered incident on the super-vertex. This operation does not change the 2-respecting cut value that  we are interested in for the following trivial reason: A 2-respecting cut which respects the edges of the path $p$ will not cut any other edge of $T$. Next we run Algorithm \ref{alg:int} on this collapsed (\textit{residual}) graph to find out the minimum cut which 2-respects $p$.

\item[Step 4. Finding \textit{\interest} edge-pairs (Line \ref{algline:find-interest-s} to \ref{algline:find-interest-e}):] In this step, we find \textit{potential} edge-pairs---a pair of edges which, if respected by a cut, will yield a smaller cut value than the cuts which respects only one edge from the pair.  In Step 3, the algorithm may already have taken care a few of them, especially those pairs of edges which fall on the same path in $\cP$. We are now interested in pairs which fall on different paths of $\cP$. We have already introduced necessary notations: for a candidate pair of edges $(e,e')$, we call $e$ is \interest w.r.t to $e'$ and vice versa. To enumerate the set of candidate pairs, the algorithm iterates over each edge $e$ of $T$, and finds a set of other edges which can form a candidate pair with $e$ (\interest w.r.t $e$). By Observation \ref{obs:int-ancestor}, all such edges fall on a single root-to-leaf path. We check only the top (closest to the root) edges of $\cP$ (also because of Observation \ref{obs:int-ancestor}---if there is an edge in a path which is interesting w.r.t $e$, then the edge in the path which is closest to the root is also interesting w.r.t. $e$); and if any of those edges is \interest w.r.t $e$, we declare the path to be \interest.\footnote{In all models of implementation in this work, for every edge $e$, we will have a few root-to-leaf paths from the previous steps with the guarantee that the actual root-to-leaf which contains all \interest edges w.r.t. $e$ is one of those. At this step, we will verify these paths to figure out which one (if any) is the \interest one. See Lemma \ref{lem:q-pairing} for example.} By Theorem \ref{thm:tree-decomp}, there can be at most $\log n$ many paths in $\cP$ which are \interest w.r.t $e$. We also label $e$ with this interesting set of paths. At the end of this step, we have the following: Consider any pair of paths, $p_1$ and $p_2$, in $\cP$. For each edge $e \in p_1$, for which there is an \interest edge in $p_2$, $e$ is labelled with $p_2$. Similarly, for each edge $e' \in p_2$, for which there is an \interest edge in $p_1$, $e'$ is labelled with $p_1$.

\item[Step 5. Pairing (Line \ref{algline:pairing-s} to \ref{algline:pairing-e}):] In this step, we pair up paths from $\cP$ and look at exact-2-respecting candidate cuts which respects one edge form each path of the pair. Consider a pair $(p_1, p_2)$: If there is no edge in $p_1$ which is marked by $p_2$, or vice versa, or both, we discard this pair. Otherwise, we mark the edges in $p_1$ which are interested in $p_2$, and mark the edges in $p_2$ which are interested in $p_1$ as well. We \textit{collapse} (as in Step 3) every other edge of $T$ so that we are left with a residual graph with edge going across only the marked edges of $p_1$ and $p_2$---this does not change the cut value as we are interested in 2-respecting cuts which implies that no other edge of $T$ takes part in the cut. We run Algorithm \ref{alg:int} on the residual graph and record the smallest 2-respecting cut which respects one edge in $p_1$ and another edge in $p_2$. 
\end{description}
At the end, we compare the recorded cuts from Step 1, 3 and 5, and output the minimum among them.


\begin{center}
  \centering
  \begin{minipage}[H]{0.8\textwidth}
\begin{algorithm}[H]
\caption{Schematic algorithm for 2-respecting min-cut}\label{alg:tree}
\begin{algorithmic}[1]
\For{every edge $e \in T$} \label{algline:1-resp-s} \Comment{\textcolor{blue}{Finding 1-respecting min-cut}}
    \State \textbf{Find} out the value of the cut which respects only $e$ and record it.
\EndFor \label{algline:1-resp-e}
\State Use heavy-light decomposition on $T$ to obtain disjoint set of paths $\cP$.  \label{algline:heavy-light}
\Statex \Comment{{\color{blue}Theorem \ref{thm:tree-decomp}}}
\For{every $p \in \cP$} \label{algline:2-resp-path-s} \Comment{\parbox[t]{.4\linewidth}{\textcolor{blue}{Finding cuts respecting 2 edges in a single $p \in \cP$}}}
    \State \textbf{Collapse} all edges in $T$ which are not in $p$ and \textbf{run} Algorithm \ref{alg:int} on the residual graph. \Comment{\parbox[t]{.4\linewidth}{\textcolor{blue}{Algorithm \ref{alg:int} finds minimum exact-2-respecting cut when $T$ is a path}}}
    \State \textbf{Record} the outputs. 
\EndFor \label{algline:2-resp-path-e} 
\For{each edge $e \in T$} \label{algline:find-interest-s} \Comment{\parbox[t]{.4\linewidth}{\textcolor{blue}{Finding the set of interesting paths for each edge in $T$, See Definition \ref{def:interest-path}}}}
    \State \textbf{Find} out the set $\cP^{cross}_e \subseteq \cP$ which are cross-\interest w.r.t. $e$.
    \State \textbf{Find} out the set $\cP^{down}_e \subseteq \cP$ which are down-\interest w.r.t. $e$.
\EndFor\label{algline:find-interest-e}
\For{every distinct pair $(p_1, p_2) \in \cP \times \cP$} \label{algline:pairing-s}
\Comment{\parbox[t]{.4\linewidth}{\textcolor{blue}{Finding cuts which respects edges in different paths in $\cP$ }}}
    \If{$p_1$ and $p_2$ are cross-interested in each other}
            \State In $p_1$, \textbf{mark} edges which are cross-interested in $p_2$.
            \State In $p_2$, \textbf{mark} edges which are cross-interested in $p_1$.
            \State \textbf{Collapse} unmarked edges of $T$ and \textbf{run} Algorithm \ref{alg:int} on the residual graph. \Statex \Comment{\parbox[t]{.4\linewidth}{\textcolor{blue}{Algorithm \ref{alg:int} finds minimum exact-2-respecting cut when $T$ is a path}}}
            \State \textbf{Record} the output.
    \EndIf
    \If{$p_1$ is down-interested in $p_2$}
        \State In $p_1$, \textbf{mark} edges which are down-interested in $p_2$.
        \State In $p_2$, \textbf{mark} all edges.
        \State \textbf{Collapse} all unmarked edges of $T$ and \textbf{run} Algorithm \ref{alg:int} on the residual graph.
        \Statex\Comment{\parbox[t]{.4\linewidth}{\textcolor{blue}{Algorithm \ref{alg:int} finds minimum exact-2-respecting cut when $T$ is a path}}}
        \State \textbf{Record} the output.
    \EndIf
\EndFor \label{algline:pairing-e}
\State \textbf{Output} the smallest of the recorded minimums.
\end{algorithmic}
\end{algorithm}
\end{minipage}
\end{center}




\begin{claim} \label{clm:tree-amortize}
Each edge in $T$ takes part in at most $O( \log n)$ many calls to Algorithm \ref{alg:int} inside Algorithm \ref{alg:tree}.
\end{claim}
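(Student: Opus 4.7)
The plan is to separately count $e$'s participation in calls arising from Step 3 and from Step 5, since these are the only steps of Algorithm \ref{alg:tree} that invoke Algorithm \ref{alg:int}. Because $\cP$ consists of edge-disjoint paths, $e$ belongs to a unique $p_e \in \cP$ and therefore contributes to exactly one call in Step 3. The real work lies entirely in bounding the Step 5 contributions.

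For Step 5, I would classify the pairs $\{p_1,p_2\}$ at which $e$ is marked into three cases. In the cross-interest branch, $e$ gets marked in $\{p_e,p_2\}$ only when $e$ itself is cross-interested in some edge of $p_2$. Observation \ref{obs:int-ancestor} places the set of cross-interested edges w.r.t.\ $e$ on a single root-to-leaf path of $T$, so by Theorem \ref{thm:tree-decomp} at most $\log n$ paths of $\cP$ meet this path, bounding the number of such $p_2$ by $O(\log n)$. In the down-interest branch, there are two subcases according to whether $e$ plays the role of the upper (ancestor) or lower (descendant) edge. If $e \in p_e = p_1$ is the upper edge and is down-interested in some edge of $p_2$, the down-interested edges w.r.t.\ $e$ again form a single descending path from $u$ (the child endpoint of $e$), so the same argument gives $O(\log n)$ choices of $p_2$.

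The delicate subcase, which I expect to be the main obstacle, is when $e \in p_e = p_2$ plays the lower role: Algorithm \ref{alg:tree} then marks \emph{every} edge of $p_e$, regardless of any interestingness property on $e$'s own side, so the path structure from Observation \ref{obs:int-ancestor} anchored at $e$ does not directly apply. I would instead bound the number of candidate $p_1$'s from above. If $p_1$ is down-interested in $p_e$, there exist $e_1 \in p_1$ and $e_2 \in p_e$ with $e_2$ a strict descendant of $e_1$ in $T$; since $p_1 \neq p_e$ and each path of $\cP$ is a maximal heavy chain, $e_1$ must lie strictly above the top vertex $v$ of $p_e$, i.e.\ on the root-to-$v$ path of $T$. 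Hence $p_1$ belongs to the family of paths in $\cP$ that start at $v$ or at some ancestor of $v$, and Theorem \ref{thm:tree-decomp} bounds this family by $\log n$. Summing the three Step 5 cases and adding the single Step 3 call yields the required $O(\log n)$ bound, completing the proof.
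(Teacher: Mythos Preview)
Your proof is correct and follows essentially the same approach as the paper: separate the contributions by cross-interest and the two roles of $e$ in down-interest, and for each use Observation~\ref{obs:int-vbotv}/\ref{obs:int-ancestor} together with Theorem~\ref{thm:tree-decomp} to bound the number of partner paths by $O(\log n)$. Your treatment of the ``$e$ plays the lower role'' case is in fact somewhat more careful than the paper's (you explicitly argue that any $e_1\in p_1$ that is an ancestor of some edge of $p_e$ but lies outside $p_e$ must sit on the root-to-$v$ path above the top vertex $v$ of $p_e$), but the underlying idea---that all such $e_1$ are ancestors and hence lie on a single root-to-leaf path meeting at most $\log n$ members of~$\cP$---is the same.
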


\begin{proof}
For an edge $e$, the set of edges which are cross-\interest falls on a root-to-leaf path. By Theorem \ref{thm:tree-decomp}, there can be at most $\log n$ many paths which are cross-\interest to $e$. 

When $e' \in u^\da$ where $e$ is the parent of $u$, the set of edges which are down-\interest to $e$ falls to a $u$-to-leaf path, and by Theorem \ref{thm:tree-decomp} there can be at most $\log n$ many paths which are down-\interest to $e$. For $e'$, however, the argument is simpler: The set of $e$ which are down-interested in $e'$ are, by definition, falls on the root-to-$e'$ path---they are all ancestors of $e'$ in $T$. Hence, the number of paths which can pair up with the path containing $e'$ in calls to Algorithm \ref{alg:int} is at most $\log n$ (by Theorem \ref{thm:tree-decomp}).
\end{proof}

\subsubsection{Correctness of Algorithm \ref{alg:tree}}

First we make the following simple observation:
\begin{observation}
Consider two vertices $u$ and $v$ such that $u \bot v$. The 2-respecting cut which respects the parent edges of $u$ and $v$ is a candidate for exact-2-respecting min-cut (\textit{i.e.,} has cut value smaller than any 1-respecting cut) if $C(u^\da, v^\da) > \frac 1 2 \max\{\deg(u^\da), \deg(v^\da)\}$. Similarly, when $v \in u^\da$, the 2-respecting cut which respects the parent edges of $u$ and $v$ is a candidate for 2-respecting min-cut if $C(v^\da, V - u^\da) > \frac 1 2 \deg(u^\da)$.
\end{observation}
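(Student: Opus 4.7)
My plan is to compute the value of the 2-respecting cut $\cut(e, e')$ (where $e$ and $e'$ are the parent edges of $u$ and $v$) in each of the two cases, and then compare it with the two 1-respecting cut values $\cut(e) = \deg(u^\da)$ and $\cut(e') = \deg(v^\da)$.

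The first and only non-routine step is to identify the bipartition of $V$ realizing this 2-respecting cut. Removing $e$ and $e'$ from $T$ splits the tree into three sub-trees. In the orthogonal case ($u \bot v$) these are $u^\da$, $v^\da$, and $W := V - u^\da - v^\da$, the last of which contains both $\text{parent}(u)$ and $\text{parent}(v)$. Since $e$ joins $u^\da$ to $W$ and $e'$ joins $v^\da$ to $W$, requiring both to cross the cut forces $W$ on one side and $u^\da \cup v^\da$ on the other. In the descendant case ($v \in u^\da$) the three sub-trees are $V - u^\da$, the ``band'' $u^\da - v^\da$, and $v^\da$; since $e$ joins $u^\da - v^\da$ to $V - u^\da$ while $e'$ joins $u^\da - v^\da$ to $v^\da$, the band $u^\da - v^\da$ is forced to lie alone on one side and $v^\da \cup (V - u^\da)$ on the other. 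This ``pinning down'' of the bipartition is the main---though modest---obstacle in the proof.

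With the bipartitions fixed, I will express $\cut(e,e')$ by summing the cross-weights between the three sub-trees and applying the identities $\deg(u^\da) = C(u^\da, V - u^\da)$ and $\deg(v^\da) = C(v^\da, V - v^\da)$. The sums collapse to
\[
\cut(e, e') = \deg(u^\da) + \deg(v^\da) - 2\,C(u^\da, v^\da) \quad \text{(orthogonal case),}
\]
\[
\cut(e, e') = \deg(u^\da) + \deg(v^\da) - 2\,C(v^\da, V - u^\da) \quad \text{(descendant case).}
\]

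All that remains is elementary algebra. In the orthogonal case, $\cut(e, e') < \deg(u^\da)$ is equivalent to $2\,C(u^\da, v^\da) > \deg(v^\da)$, and $\cut(e,e') < \deg(v^\da)$ to $2\,C(u^\da, v^\da) > \deg(u^\da)$; the conjunction of the two is precisely the stated hypothesis $C(u^\da,v^\da) > \tfrac{1}{2}\max\{\deg(u^\da),\deg(v^\da)\}$, so the 2-respecting cut beats both 1-respecting cuts. In the descendant case, $\cut(e,e') < \deg(v^\da) = \cut(e')$ is equivalent to $2\,C(v^\da, V - u^\da) > \deg(u^\da)$, matching the stated hypothesis $C(v^\da, V - u^\da) > \tfrac{1}{2}\deg(u^\da)$ exactly.
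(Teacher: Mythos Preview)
Your proposal is correct and follows essentially the same approach as the paper: both derive the identity $\cut(e,e') = \deg(u^\da) + \deg(v^\da) - 2C(\cdot,\cdot)$ and compare it against the 1-respecting cut values $\deg(u^\da)$ and $\deg(v^\da)$. Your treatment is in fact a bit more careful than the paper's, since you explicitly identify the bipartition in each case before writing down the cut value, whereas the paper simply quotes the formula.
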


This follows from the fact that, for the 2-respecting cut which respects the parent edges of $u$ and $v$, when $u \bot v$, to be candidate for 2-respecting min-cut, it needs to happen that $\deg(u^\da) + \deg(v^\da) - 2 C(u^\da, v^\da) < \min\{\deg(u^\da), \deg(v^\da)\}$ where the right-hand side of the strict inequality represents the value corresponding 1-respecting cuts which respects the parent edges of $u$ and $v$ respectively. Similarly, when $v \in u^\da$, clearly, $\min\{\deg(u^\da), \deg(v^\da)\} = \deg(v^\da)$. Hence, it needs to happen that $\deg(u^\da) + \deg(v^\da) - 2 C(v^\da, V - u^\da) < \deg(v^\da)$.

\begin{claim}
All candidate exact-2-respecting cuts are considered in Algorithm \ref{alg:tree}.
\end{claim}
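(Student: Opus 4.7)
The plan is to do a case analysis on the structural relationship between the two tree-edges that witness the candidate cut. Let the cut correspond to $(e,e')$ with $e$ the parent edge of $u$ and $e'$ the parent edge of $v$. The preceding observation reduces candidacy to one of two geometric configurations: either $u \bot v$ with $C(u^\da, v^\da) > \tfrac12 \max\{\deg(u^\da), \deg(v^\da)\}$, or one endpoint is an ancestor of the other with the analogous crossing-weight lower bound. My job is to show that in each configuration there is a call to Algorithm \ref{alg:int} inside Algorithm \ref{alg:tree} whose residual graph retains both $e$ and $e'$ as uncollapsed tree-edges, so that the $(e,e')$ cut is enumerated.

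First I would handle the orthogonal case $u \bot v$. Reading the candidacy inequality against each of $\deg(u^\da)$ and $\deg(v^\da)$ and invoking Definition \ref{def:cross-int-vertex} gives that $e$ is cross-interesting w.r.t. $e'$ and $e'$ is cross-interesting w.r.t. $e$. Because $e \bot e'$, the two edges lie on distinct root-to-leaf paths of $T$, so by heavy-light decomposition they belong to distinct paths $p_1, p_2 \in \cP$. By Definition \ref{def:interest-path}, $p_1$ and $p_2$ are then mutually cross-interesting. When Step 5 of Algorithm \ref{alg:tree} processes the pair $(p_1,p_2)$, the first branch fires, $e$ is marked in $p_1$ and $e'$ is marked in $p_2$, every other tree-edge is collapsed, and Algorithm \ref{alg:int} runs on a residual graph containing both $e$ and $e'$. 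Hence the $(e,e')$ cut is considered.

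Second I would handle the ancestor case, say $v \in u^\da$ (the case $u \in v^\da$ is symmetric). The candidacy inequality $C(v^\da, V - u^\da) > \tfrac12 \deg(u^\da)$ together with Definition \ref{def:down-int-vertex} gives that $e'$ is down-interesting w.r.t. $e$. I split into two subcases. If $e$ and $e'$ lie in the same heavy-light path $p \in \cP$, then Step 3 collapses all tree-edges outside $p$ and invokes Algorithm \ref{alg:int}, which sees $(e,e')$. Otherwise $e \in p_1$ and $e' \in p_2$ for distinct $p_1,p_2 \in \cP$, and by Definition \ref{def:interest-path} the descendant path $p_2$ is down-interesting in the ancestor path $p_1$. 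When Step 5 processes the ordered pair $(p_2,p_1)$, the second branch fires: $e'$ is marked in $p_2$ as down-interested in $p_1$, every edge of $p_1$ is marked (in particular $e$), all other tree-edges are collapsed, and Algorithm \ref{alg:int} again sees the $(e,e')$ cut.

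The main subtlety I anticipate is making sure that the specific markings and collapses prescribed by Step 5 really preserve both witnesses in the residual graph in every subcase. The asymmetric down-interested branch is the delicate one, because only the down-interested edges of $p_2$ are kept while $p_1$ is kept wholesale; I must verify that we visit the ordered pair with the descendant path in the first slot, and that Algorithm \ref{alg:int}, when applied to the resulting path-shaped residual structure, genuinely enumerates the entry corresponding to $(e,e')$. Once this bookkeeping is in place, the proof is a direct combination of Definitions \ref{def:cross-int-vertex}-\ref{def:interest-path} with the description of Algorithm \ref{alg:tree}, and no further combinatorics is needed.
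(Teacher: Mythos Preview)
Your proposal is correct and follows essentially the same approach as the paper: reduce candidacy to the cross-/down-interesting conditions via Definitions~\ref{def:cross-int-vertex} and~\ref{def:down-int-vertex}, then observe that such pairs are processed by the algorithm. The paper's own proof is a terse three-line version of this; your more detailed trace through Steps~3 and~5 (including the same-path subcase and the ordered-pair bookkeeping in the down-interested branch) is sound and, if anything, fills in details the paper leaves implicit.
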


\begin{proof}
First, note that, any pair of edges $e$ and $e'$ such that (i) $e$ is cross-\interest to $e'$ and vice versa, or (ii) $e'$ is down-\interest to $e$ are considered. For any edge $e$, all $e' \bot e$ which are candidates are also cross-\interest to $e$ (by Definition \ref{def:cross-int-vertex}). Any $e' \in u^\da$ ($e$ is the parent edge of $u$) which is a candidate is also down-\interest to $e$ (be Definition \ref{def:down-int-vertex}).
\end{proof}

\section{Karger's tree packing} \label{sec:karger}

Here we first define the notion of greedy tree packing and cut-sparsification. This exposition is based on \cite{Thorup07} to which readers are advised to refer for a more detailed description.

\begin{definition}[(Greedy) tree packing]
A tree packing $\cal T$ of $G$ is a multi-set of spanning trees of $G$. $\cal T$ loads each edge $e \in E(G)$ with the number of trees in $\cal T$ that contains that edge $e$.

A tree packing ${\cal T} = (T_1, \cdots, T_k)$ is greedy if each $T_i$ is a minimal spanning tree with respect to the loads introduced by $\{T1, \cdots, T_{i-1}\}$.
\end{definition}

\begin{lemma}[\cite{Kar00}] \label{lem:karger-packing}
Let $C$ be any cut with at most $1.1 \lambda$ many edges and $\cal T$ be a greedy tree packing with $\lambda \ln m$ many trees. Then $C$ 2-respects at least $1/3$ fraction of trees in $\cal T$.
\end{lemma}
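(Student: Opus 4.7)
\textbf{Proof plan for Lemma \ref{lem:karger-packing}.}

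My plan is to combine the Nash--Williams/Tutte tree packing theorem with the approximation guarantee of greedy MST packing, then use a Markov-style counting argument. Let $\lambda$ denote the value of the minimum cut. The Nash--Williams/Tutte theorem yields a fractional tree packing of value at least $\lambda/2$, i.e.\ a collection of spanning trees $\{T_i\}$ with nonnegative weights $w_i$ summing to at least $\lambda/2$ and such that for every edge $e$, $\sum_{i:\,e \in T_i} w_i \le 1$. The first step of the plan is to show that the \emph{greedy} packing $\mathcal{T}$ of size $k = \lambda \ln m$, after rescaling each tree by weight $1/k$, approximates this optimum: its total weight is still $k/k = 1$ but per-edge load is bounded by roughly $1/\lambda + O(\sqrt{\log m / (\lambda k)})$, so normalizing the weights down by this max load loses only a small factor and yields a feasible fractional packing of value $\tau \ge (1-o(1))\,\lambda/2$. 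This is the standard Plotkin--Shmoys--Tardos multiplicative-weights/greedy-MST analysis, which is what motivates the choice $k = \lambda \ln m$; I would invoke it as a black box.

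The second step is a counting bound. Let $\beta_i := |T_i \cap C|$ be the number of edges of the cut $C$ used by tree $T_i$. Because each $T_i$ is a spanning tree, $\beta_i \ge 1$. Applying the per-edge load constraint to the edges of $C$ gives
\begin{equation*}
\sum_i w_i\, \beta_i \;=\; \sum_{e \in C}\, \sum_{i:\,e \in T_i} w_i \;\le\; |C|.
\end{equation*}
Dividing by $\sum_i w_i = \tau \ge (1-o(1))\lambda/2$, the $w$-weighted average of $\beta_i$ is at most $|C|/\tau \le (1+o(1)) \cdot 2|C|/\lambda$. For $|C| \le 1.1\lambda$ this is at most $2.2 + o(1)$.

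The third step converts the average bound into a fraction-2-respecting bound. Let $f_{\ge 3}$ be the $w$-weighted fraction of trees with $\beta_i \ge 3$ and $f_{\le 2} = 1 - f_{\ge 3}$. Using $\beta_i \ge 1$ always, we have
\begin{equation*}
1 + 2 f_{\ge 3} \;\le\; \sum_i \frac{w_i}{\tau}\,\beta_i \;\le\; 2.2 + o(1),
\end{equation*}
so $f_{\ge 3} \le 0.6 + o(1) < 2/3$, hence $f_{\le 2} > 1/3$. Finally, since the greedy packing assigns uniform weight $1/k$ to each tree, $w$-weighted fraction coincides with the unweighted fraction, so strictly more than $1/3$ of the trees in $\mathcal{T}$ are 2-respected by $C$.

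The main obstacle is step one: verifying that a greedy packing of exactly $\lambda \ln m$ trees achieves the near-$\lambda/2$ packing value and the near-uniform-load property I used. This is the technical heart of the argument and relies on the multiplicative-weights analysis of greedy MST packing; I would cite Karger's original paper (and/or Plotkin--Shmoys--Tardos / Young) rather than rederive it. Everything else is an elementary averaging step.
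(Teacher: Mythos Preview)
The paper does not give its own proof of this lemma; it is stated with a citation to \cite{Kar00} and used as a black box. Your sketch is precisely Karger's original argument (Nash--Williams packing value $\ge \lambda/2$, greedy/multiplicative-weights approximation of that packing, then the averaging bound $1 + 2f_{\ge 3} \le |C|/\tau$), so there is nothing to compare against and your plan is correct.
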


\begin{definition}[Cut sparsifier] \label{def:sparsifier}
Given a graph $G = (V,E)$ with weight function $w:E \to \bbR$, a sparsifier $H = (V, E')$ of $G$ is graph on the same set of vertices $V$ and weight function $w': E' \to \bbR$ with the following properties:
\begin{enumerate}
    \item $H$ has $\tilde O(n)$ edges,
    \item For every cut $S$, $C_{H}(S) \in (1 \pm \eps)C_{G}(S)$.
\end{enumerate}
\end{definition}

Now we are ready to provide the complete algorithm for finding min-cut in a weighted graph $G$. There are two phases: (i) tree-packing phase, where we compute a cut-sparsifier and pack appropriately many spanning trees, and (ii) cut-finding phase, where we find a 2-respecting min-cut which respects a randomly sampled tree from the first phase. The schematic description of the algorithm is given below. 

\begin{center}
  \centering
  \begin{minipage}[H]{0.8\textwidth}
\begin{algorithm}[H]
\caption{Schematic algorithm for weighted min-cut}\label{alg:tree-packing}
\begin{algorithmic}[1]
\State Compute a cut-sparsifier $H$ for $G$. \label{algline:tree-pack-s}
\State Pack $O(\lambda \ln m)$ many spanning trees $\cal T$ by greedy tree packing in $H$. \label{algline:tree-pack-e}
\State Pick a spanning tree $T \in \cal T$ uniformly at random and run Algorithm \ref{alg:tree}.
\end{algorithmic}
\end{algorithm}
\end{minipage}
\end{center}

The correctness follows from the fact that, by Definition \ref{def:sparsifier} in $H$, min-cut $\lambda' \in (1 \pm \eps)\lambda$. Hence, by Lemma \ref{lem:karger-packing}, if we greedily pack $O(\lambda \ln m)$ many trees, the min-cut of $H$ (and hence the min-cut of $G$) will 2-respect at least $1/3$ fraction of the packed trees. So, if we pick a tree uniformly at random, the sampled tree will be 2-respected by the min-cut with probability at least $1/3$.

In all three models that we describe subsequently, we will compute the cut-sparsifier $H$ efficiently and store it locally. The greedy packing of spanning trees will be performed also locally, and then we will run an algorithm for finding a 2-respecting min-cut (aka model specific implementation of Algorithm \ref{alg:tree}) on a randomly sampled tree from the set of packed trees.

In order to reducing the value of the min-cut of a graph (and, thereby, reducing the number of trees that are needed to be packed by a greedy tree packing algorithm), we use the following skeleton construction due to Karger \cite{Kar00}.

\begin{theorem}[\cite{Kar00}, Theorem 4.1] \label{thm:karger-skeleton}
 Given a weighted graph $G$, we can construct a skeleton graph $H$ of $G$ such that \begin{enumerate}
     \item $H$ has $m' = O(n \eps^{-2}\log n)$ edges,
    \item The minimum cut of $H$ is $O(\eps^{-2}\log n)$,
    \item The minimum cut in $G$ corresponds (under the same vertex partition) to a $(1 + \eps)$-times minimum cut of $H$.
 \end{enumerate}
 Thus, a set of $O(\log n)$ spanning trees can be packing in $G$ (by performing greedy tree packing on $H$) such  such that the minimum cut of $G$ 2-respects $1/3$ of them with high probability.
\end{theorem}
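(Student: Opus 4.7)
Following Karger, I would construct $H$ by random subsampling. First compute a constant-factor estimate of the minimum cut $\lambda$ of $G$ (using any standard min-cut approximation). Then view every weighted edge of weight $w$ as $w$ parallel unit edges and include each parallel edge in $H$ independently with probability $p = \Theta(\eps^{-2}\lambda^{-1}\log n)$; equivalently, replace each edge weight $w$ by an independent $\mathrm{Binomial}(w,p)$ sample. The resulting $H$ is the skeleton graph.

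The key tool for properties (2) and (3) is Karger's sampling theorem for cuts: with high probability, every cut $S$ simultaneously satisfies $C_H(S)\in(1\pm\eps)\,p\cdot C_G(S)$. The nontrivial ingredient, and the main technical obstacle in the proof, is promoting a single-cut Chernoff bound into a uniform statement over all exponentially many cuts; this relies on Karger's cut-counting bound that the number of cuts of value at most $\alpha\lambda$ is $n^{O(\alpha)}$, so the Chernoff tail at ratio $\eps$ beats the polynomial union bound for all near-minimum cuts, while cuts of much larger value are handled by a direct deviation bound. Granted the sampling theorem, (2) follows by applying it to all cuts to get $\mathrm{mincut}(H)=(1\pm\eps)p\lambda=O(\eps^{-2}\log n)$, and (3) follows by applying it to the min-cut of $G$, whose image in $H$ has weight at most $(1+\eps)\,\mathrm{mincut}(H)$.

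For the edge-count bound (1), I would post-process $H$ with Nagamochi–Ibaraki sparsification, which replaces any graph by a subgraph having $O(n\cdot c)$ edges while preserving every cut up to value $c$ (here $c=\mathrm{mincut}(H)$). Since $\mathrm{mincut}(H)=O(\eps^{-2}\log n)$, this yields $m'=O(n\eps^{-2}\log n)$ edges and leaves (2) and (3) intact. For the tree-packing conclusion, I would invoke Nash–Williams together with greedy tree packing inside $H$: a graph with minimum cut $c$ packs $\Theta(c)$ edge-disjoint spanning trees, producing $O(\eps^{-2}\log n)$ spanning trees in $H$, which are also spanning trees of $G$. By (3) the min-cut of $G$ intersects $H$ in a cut of weight at most $1.1\,\mathrm{mincut}(H)$, so by Lemma~\ref{lem:karger-packing} applied inside $H$ it 2-respects at least $1/3$ of these trees w.h.p., giving the final claim.
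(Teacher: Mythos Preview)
This theorem is not proved in the paper at all; it is quoted verbatim from Karger~\cite{Kar00} (hence the citation in the theorem header) and used as a black box. So there is no ``paper's own proof'' to compare against. Your sketch is, in fact, a reasonable reconstruction of Karger's original argument: random subsampling with probability $p=\Theta(\eps^{-2}\lambda^{-1}\log n)$, uniform cut-preservation via the cut-counting lemma plus Chernoff, and Nagamochi--Ibaraki sparse certificates to enforce the edge bound.

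One imprecision worth flagging is your final tree-packing step. You invoke Nash--Williams to pack $\Theta(c)$ edge-disjoint trees with $c=\mathrm{mincut}(H)=O(\eps^{-2}\log n)$, but Lemma~\ref{lem:karger-packing} as stated in this paper asks for $\lambda\ln m$ greedily packed (not edge-disjoint) trees, which would give $O(\eps^{-2}\log^2 n)$ rather than $O(\eps^{-2}\log n)$; neither of these is literally the $O(\log n)$ in the theorem statement unless $\eps$ is constant (which is how the paper actually uses it, e.g.\ $\eps=1/100$ in Claim~\ref{clm:s-reduction}). Karger's original analysis is slightly sharper than the restatement in Lemma~\ref{lem:karger-packing} and gets the $O(\log n)$ bound directly for constant $\eps$; your Nash--Williams detour is not the mechanism he uses, though it lands in the same ballpark. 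This is a cosmetic gap rather than a conceptual one.
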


Note that Theorem \ref{thm:karger-skeleton} is useful only when the value of min-cut is $\omega(\log n)$. Otherwise, we can greedily pack trees in $G$ (or its sparsifier) itself to obtain a set of $O(\log n)$ spanning trees.

\section{Cut-query \& streaming algorithms} \label{sec:model-implement}

In this section, we provide model-specific implementation of Algorithm \ref{alg:tree-packing} in two models: In Section \ref{sec:q-implement} we look at graph-query model, and, in Section \ref{sec:s-implement}, we provide an algorithm in the dynamic streaming model. In all these sections, the arguments follow the following general pattern.
\begin{itemize}
    \item We first show that the randomized reduction from finding a weighted min-cut to finding a 2-respecting min-cut w.r.t to a given spanning tree can be implemented efficiently (Line \ref{algline:tree-pack-s} to \ref{algline:tree-pack-e} of Algorithm \ref{alg:tree-packing}).[Theorem \ref{thm:rand-sample} and Claim \ref{clm:s-reduction}]
    
    \item Next we show that we can implement Algorithm \ref{alg:int} efficiently. This algorithm will be called many times in Algorithm \ref{alg:tree}, and hence we want to make sure that this step is efficient. [Claim \ref{clm:q-interval}, and Claim \ref{clm:s-interval}]
    
    \item Then we show that each edge $e \in T$ can find the sets $\cP^{cross}_e$ and $\cP^{down}_e$ efficiently. This is a crucial step because this reduces the search space of 2-respecting cuts, and we want to make sure that this step can be done as quickly as possible. [Lemma \ref{lem:q-pairing}, and Claim \ref{clm:s-pairing}]
    
    \item Lastly, we analyse the other steps of Algorithm \ref{alg:tree} to conclude that the algorithm is efficient. [Lemma \ref{lem:q-tworespect}, and \ref{lem:s-tworespect}]
\end{itemize}

\subsection{Graph cut-query upper bound} \label{sec:q-implement}

In the graph cut-query model, we have oracle access to $G$ in the following way: We can send a partition $(S,\bar S)$ of the vertex set $V$ of $G$ to the oracle and the oracle will reply with the value of the cut corresponding to the given partition. The goal is to minimize the number of such accesses to compute weighted min-cut. Graph cut-queries can be quite powerful as the following claim suggests. Most of these are stated in \cite{RubinsteinSW18} and are fairly easy to verify.
\begin{claim}
In $O(\log n)$ cut-queries we can \begin{enumerate}
    \item learn one neighbor of a vertex $u \in V$ or a set of vertices $U \subset V$, 
    \item sample a random neighbor of a vertex $u \in V$ or a set of vertices $U \subset V$ and
    \item after performing $n$ initial queries, sample a random edge $e \in E$.
\end{enumerate}
And in 3 cut-queries, we can find out the total weight of all edges going between two disjoint set of vertices.
\end{claim}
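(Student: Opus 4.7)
The plan is to first establish the last part of the claim, since it yields a primitive reusable for the other three items. Given disjoint sets $S,T \subseteq V$, let $a$ denote the total weight of edges between $S$ and $T$, $b$ the total weight of edges from $S$ to $V\setminus(S\cup T)$, and $c$ the total weight of edges from $T$ to $V\setminus(S\cup T)$. Then $\mathrm{cut}(S)=a+b$, $\mathrm{cut}(T)=a+c$, and $\mathrm{cut}(S\cup T)=b+c$, which solve to $a=\tfrac12\bigl(\mathrm{cut}(S)+\mathrm{cut}(T)-\mathrm{cut}(S\cup T)\bigr)$. Thus $a$ is recoverable with exactly three cut-queries; I will write $w(S,T)$ for this quantity in what follows.

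For item 1, I would learn a neighbor of $u$ by binary search on $V\setminus\{u\}$ (the case of a set $U$ is identical, replacing $\{u\}$ by $U$ everywhere). Maintain a candidate set $S$ initialized to $V\setminus\{u\}$; at each step split $S$ into two roughly equal halves $S_1,S_2$, compute $w(\{u\},S_1)$ using the primitive above, and recurse into $S_1$ if this weight is positive, otherwise into $S_2$. At least one side carries positive weight provided $u$ has any incident edge, which can be checked in advance by a single query $\mathrm{cut}(\{u\})=\deg(u)$. Since $|S|$ halves each round at the cost of $O(1)$ queries, $O(\log n)$ cut-queries suffice.

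For item 2, the same binary-search skeleton produces a random neighbor sampled proportionally to edge weight. At each step, once $\deg(u)$ has been learned, compute $w(\{u\},S_1)$ with three queries, infer $w(\{u\},S_2)=\deg(u)-w(\{u\},S_1)$, and recurse into $S_i$ with probability $w(\{u\},S_i)/\deg(u)$ restricted to the current candidate set. A straightforward induction on the depth of the recursion shows that the returned vertex $v$ satisfies $\Pr[v]=w(u,v)/\deg(u)$, as desired. For item 3, I would first preprocess by querying $\mathrm{cut}(\{v\})=\deg(v)$ for every $v\in V$, costing exactly $n$ queries and yielding $W=\tfrac12\sum_v \deg(v)$. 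To draw an edge, pick a vertex $u$ with probability $\deg(u)/(2W)$ using the precomputed degrees, then invoke item 2 to sample a neighbor $v$. Each edge $\{u,v\}$ is then returned with probability $2\cdot\tfrac{\deg(u)}{2W}\cdot\tfrac{w(u,v)}{\deg(u)}=w(u,v)/W$ (the factor $2$ accounts for the symmetric case of first sampling $v$), which is the desired weighted distribution; aside from the one-off $n$ queries, each sample costs $O(\log n)$ queries.

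There is no substantial obstacle here: the only things to be careful about are (i) verifying that the three-query identity above is correct for \emph{weighted} graphs, which it is because the three linear equations in $a,b,c$ are nonsingular, and (ii) keeping track of the probabilities in the weighted sampling, in particular the factor of two in item 3 coming from either endpoint initiating the sampling.
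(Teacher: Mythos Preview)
Your argument is correct; the three-query identity, the binary search for items 1 and 2, and the two-stage sampling for item 3 all work as stated. The paper itself does not prove this claim at all---it simply remarks that ``most of these are stated in \cite{RubinsteinSW18} and are fairly easy to verify''---so there is nothing to compare against, and your write-up is a faithful elaboration of what the paper leaves implicit.
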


Now, we state a result of \cite{RubinsteinSW18} regarding computing a cut-sparsifier efficiently. Note that, such a sparsifier is computed by using all of the above powers of cut-queries.

\begin{theorem}[\cite{RubinsteinSW18}] \label{thm:rand-sample}
Fix any $\eps > 0$. By using at most $\tilde O(n/\eps^2)$ cut-queries, we can produce a \textit{sparsifier} $H$ of $G$ such that:\begin{enumerate}
    \item $H$ has $O(n \ln n/\eps^2)$ edges,
    \item Every cut in $H$ is within a $(1 \pm \eps)$-factor of its value in $G$.
\end{enumerate}
\end{theorem}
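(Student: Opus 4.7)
The plan is to instantiate the Benczúr–Karger cut-sparsification framework inside the cut-query model, using the sampling primitives summarized in the preceding claim. Recall that Benczúr and Karger show that retaining each edge $e$ independently with probability $p_e = \Theta(\log n / (\eps^2 k_e))$ and rescaling its weight to $w_e/p_e$, where $k_e$ is any constant-factor lower bound on the strength of $e$, produces with high probability a $(1\pm\eps)$-cut-sparsifier of $G$ with $O(n \log n / \eps^2)$ edges. So it suffices to (a) compute such estimates $k_e$, and (b) realize the sampling, both using $\tilde O(n/\eps^2)$ cut queries.

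The first step is to spend $O(n)$ cut queries setting up the random-edge sampling primitive, so that subsequent edge samples cost only $O(1)$ cut queries apiece. The second step is to obtain the strength estimates $k_e$ by Nagamochi–Ibaraki style peeling: in round $i$, grow a maximal spanning forest of the current graph using the ``learn a neighbor of a set of vertices'' primitive in Bor\r{u}vka fashion, costing $\tilde O(n)$ queries per round, then discard these edges from further consideration. A classical argument gives that the round index in which an edge is peeled is within a constant factor of its true strength. Because any edge of strength larger than $\Theta(\log n / \eps^2)$ is kept with probability $1$ and therefore needs no further classification, $O(\log n / \eps^2)$ peeling rounds suffice, totalling $\tilde O(n / \eps^2)$ cut queries.

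Finally, the sparsifier is assembled by drawing $O(n \log n / \eps^2)$ edges via the preprocessed random-edge sampler and performing weighted importance-sampling rejection so that the accepted edges are distributed as independent $\mathrm{Bernoulli}(p_e)$ samples; the Benczúr–Karger concentration theorem then delivers the cut-preservation guarantee. The main obstacle is controlling the query cost of the peeling phase: one must show that each Bor\r{u}vka round of growing a spanning forest can indeed be executed with only $\tilde O(n)$ cut queries, and that $O(\log n / \eps^2)$ rounds suffice to classify every edge whose acceptance probability is strictly less than one. The remaining pieces — the Benczúr–Karger concentration bound and the importance-sampling bookkeeping that converts random edge draws into independent per-edge coin flips — are a routine adaptation of the existing sparsifier analysis.
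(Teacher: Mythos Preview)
The paper does not prove this theorem at all: it is stated as a black-box citation of \cite{RubinsteinSW18} (note the bracketed attribution in the theorem header and the sentence preceding it, ``Now, we state a result of \cite{RubinsteinSW18}\ldots''). There is therefore nothing in the paper to compare your argument against.

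That said, your sketch has a technical slip worth flagging. You invoke Bencz\'ur--Karger sampling with probabilities $p_e = \Theta(\log n /(\eps^2 k_e))$ where $k_e$ is a constant-factor estimate of the \emph{strength} of $e$, and then assert that ``the round index in which an edge is peeled [by Nagamochi--Ibaraki] is within a constant factor of its true strength.'' This is not correct: NI forest peeling certifies \emph{edge connectivity} (the union $F_1\cup\cdots\cup F_k$ is a $k$-connectivity certificate), not strength, and strength can be much smaller than edge connectivity. Bencz\'ur--Karger's strength-estimation procedure is considerably more delicate than iterated spanning-forest peeling. What actually makes the RSW argument go through is the Fung--Hariharan--Harvey--Panigrahi variant of sparsification, which allows sampling probabilities based on (approximate) edge connectivities rather than strengths; NI-style peeling is compatible with \emph{that} framework. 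So the overall architecture you describe is close to what \cite{RubinsteinSW18} does, but you should swap ``Bencz\'ur--Karger/strength'' for ``FHHP/edge connectivity'' to make the pieces fit.
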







At this point, we can perform greedy tree packing in $H$ and can sample a random spanning tree from the packed set of trees. We are left with the job of finding a min-cut which 2-respects the sampled tree\footnote{Note that we have find 2-respecting min-cut for $O(\log n)$ many sampled trees from the packed set of tress in order to attain very high success probability.}. Now we turn towards implementing Algorithm \ref{alg:tree} in the cut-query model. At first we show that Algorithm \ref{alg:int} is efficient.  As noted in Claim \ref{clm:interval-path-eqv}, any efficient query protocol for the Interval problem immediately implies an efficient query protocol for the 2-respecting min-cut problem where the underlying spanning tree is a path. In the Interval problem, we have oracle access to the entries of the cost-matrix $M$: This is because, given any pair of edges $e_i$ and $e_j$ of the spanning tree, we can find out the value of the 2-respecting cut which respects $e_i$ and $e_j$ by a single cut-query---this is the value of $\cost(i,j)$ in the Interval problem. But, even if we want to check all 2-respecting cuts, we need $\Omega(n^2)$ queries which we cannot afford. The following observation follows from the discussion so far.

\begin{observation}
Any entry of the cost-matrix $M$ associated with the Interval problem can be known in a single cut-query.
\end{observation}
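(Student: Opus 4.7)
\medskip

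\noindent\textbf{Proof proposal.} The plan is to exhibit, for any fixed pair of indices $(i,j)$ with $i<j$, an explicit vertex bipartition $(S_{i,j}, \bar S_{i,j})$ of $V(G)$ such that the cut value $C_G(S_{i,j}, \bar S_{i,j})$ equals $M(i,j) = \cost(i,j)$, and then observe that $(S_{i,j}, \bar S_{i,j})$ can be constructed locally from the spanning tree $T$, which the algorithm has stored. Issuing the partition to the oracle then yields $M(i,j)$ in a single cut-query.

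First I would recall that $M(i,j)$ is defined to be the weight of the (unique) cut of $G$ that 2-respects $T$ using precisely the tree-edges $e_i$ and $e_j$. Since $T$ is a path, deleting $e_i$ and $e_j$ splits $T$ into three sub-paths: a left segment $A$, a middle segment $B$ (the vertices lying strictly between $e_i$ and $e_j$ along $T$), and a right segment $C$. The 2-respecting cut determined by $(e_i,e_j)$ is exactly the cut separating $B$ from $A\cup C$, because the two removed tree-edges $e_i,e_j$ are the only tree-edges crossing between $B$ and its complement. Thus setting $S_{i,j} := B$ and $\bar S_{i,j} := A\cup C$ gives $C_G(S_{i,j},\bar S_{i,j}) = M(i,j)$. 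Via the equivalence of Claim~\ref{clm:interval-path-eqv}, this is precisely $\cost(i,j)$: an interval $I$ contributes to the cut iff it crosses exactly one of $e_i,e_j$, which is iff $I$ covers exactly one of the points $i,j$.

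Second, since $T$ is known (stored in local memory by assumption), identifying the bipartition $(B, A\cup C)$ from the indices $i,j$ requires no queries---one just walks along the path and marks the vertices between positions $i$ and $j$. Feeding this bipartition to the cut-oracle then returns $C_G(B, A\cup C) = M(i,j)$ using exactly one cut-query. There is essentially no obstacle here; the only thing to keep straight is the boundary convention for which vertex belongs to which segment (i.e., the endpoints of $e_i,e_j$), but this is unambiguous once an orientation of the path is fixed.
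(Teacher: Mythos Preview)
Your proposal is correct and follows essentially the same approach as the paper: the paper simply remarks (just before the observation) that for any pair of tree-edges $e_i,e_j$ the value of the 2-respecting cut respecting them is obtained by a single cut-query, and invokes the equivalence of Claim~\ref{clm:interval-path-eqv}. You spell out the explicit bipartition $(B,A\cup C)$ and the reason it equals $\cost(i,j)$, which is a bit more detailed but the same idea.
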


This immediately implies that we can execute Line \ref{algline:min-find} of Algorithm \ref{alg:bipartite-int} with $O(n)$ cut-queries. So, using Claim \ref{clm:interval-gen-comp}, we can claim the following:

\begin{claim} \label{clm:q-interval}
The cut-query complexity of Algorithm \ref{alg:int} on a $n$ vertex graph is $O(n)$.
\end{claim}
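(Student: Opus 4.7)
The plan is to combine a single-query cost per cost-matrix entry with the recurrence from Claim~\ref{clm:interval-gen-comp}, and to use the Monge structure of $M$ to control the blowup. The starting observation is that any entry of $M$ is computable with a single cut query. Indeed, when $T$ is a path and $i < j$, removing the tree edges $e_i$ and $e_j$ partitions $V(G)$ into three consecutive intervals $V_1, V_2, V_3$. Edges internal to any one interval do not cross $e_i$ or $e_j$ and contribute $0$ to $\cost(i,j)$; edges between $V_1$ and $V_3$ cross \emph{both} removed edges and also contribute $0$; only edges with exactly one endpoint in $V_2$ cross exactly one of $e_i, e_j$ and each contributes once. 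Hence $\cost(i,j) = C_G(V_2)$, which is returned by a single cut query on the contiguous vertex set $V_2$.

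Next I would trace through Algorithm~\ref{alg:int}. The only steps issuing cut queries are the column scans in line~\ref{algline:min-find} of Algorithm~\ref{alg:bipartite-int}, each of which reads the middle column of $M_{L \times R}$ using $|L|$ queries by the observation above. Instantiating the complexity measure in Claim~\ref{clm:interval-gen-comp} with ``number of cut queries'' and plugging in $|L|$ for the min-finding cost, the Bipartite-interval recurrence $T(\ell, r) \le \ell + T(\ell_1, r/2) + T(\ell_2, r/2)$ (with $\ell_1 + \ell_2 \le \ell + 1$) and the outer Interval recurrence $C(n) = T(n/2, n/2) + 2C(n/2)$ jointly control the total query count.

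The main obstacle is that the two nested levels of halving (one in Bipartite-interval over $R$, one in Interval over both $L$ and $R$) naively contribute a logarithmic factor each, yielding only an $\tilde O(n)$ bound rather than the strict $O(n)$ claimed. To close this gap I would exploit the Monge property of Claim~\ref{clm:delta-increasing} together with the monotone column-minima structure of Claim~\ref{clm:delta-min}: the monotonicity implies that the row ranges $L^1$ and $L^2$ produced by successive recursive splits overlap in at most a single pivot row, so the total row budget across all subproblems at a given Interval depth telescopes to $O(n)$ rather than $O(n \log n)$. Concretely, I would charge each cut query to the unique row index that caused it and argue, via induction on the Interval recursion, that every row is charged $O(1)$ times overall; combined with the single-query-per-entry observation, this yields the $O(n)$ cut-query bound. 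This Monge-based amortization is the delicate step and is where the polylogarithmic factors are shed to reach the strict linear bound stated in the claim.
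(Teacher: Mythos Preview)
Your first two steps---one cut query per cost-matrix entry, then plug into Claim~\ref{clm:interval-gen-comp}---are exactly what the paper does. The paper's entire proof is the observation that a single cut query returns $M(i,j)$, followed by a direct appeal to Claim~\ref{clm:interval-gen-comp}. Note that Claim~\ref{clm:interval-gen-comp} only yields $\tilde O(n)$, so the $O(n)$ in the statement of Claim~\ref{clm:q-interval} is an imprecision in the paper; the paper makes no attempt to shave the logarithmic factors, and the downstream result (Lemma~\ref{lem:q-tworespect}) is stated as $\tilde O(n)$ anyway.

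Your attempt to close the gap to a strict $O(n)$ does not work. The Monge property is already what drives the $O(n)$-per-level telescoping inside {\sc Bipartite-interval}; it does not additionally collapse the $O(\log n)$ levels of that recursion. Concretely, your charging scheme (``every row is charged $O(1)$ times overall'') is false for Algorithm~\ref{alg:bipartite-int} as written: the pivot row $i_s$ at the root is read at level~$0$, then survives as the last row of the left child and is read again at level~$1$, and can continue to survive as a boundary row through all $\Theta(\log n)$ levels. So a single row can be charged $\Theta(\log n)$ times inside one call to {\sc Bipartite-interval}, and then the outer {\sc Interval} recursion contributes another $\Theta(\log n)$ factor. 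The total is genuinely $\Theta(n\log^2 n)$ queries for this divide-and-conquer scheme. (One could get $O(n)$ for the bipartite step by replacing the divide-and-conquer with SMAWK, but that is not the algorithm analyzed here, and even then the outer recursion still costs a $\log n$.) You were right to flag the discrepancy, but the resolution is that the stated bound should be read as $\tilde O(n)$, not that a sharper amortization rescues the literal $O(n)$.
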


Now we come to the implementation of Algorithm \ref{alg:tree}. We make the following claim next.

\begin{lemma} \label{lem:q-pairing}
We can find out $\cP^{cross}_e$ and $\cP^{down}_e$ for all $e \in T$ with $O(n)$ cut-queries.
\end{lemma}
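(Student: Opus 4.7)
}

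The plan is to first spend $\tilde O(n)$ cut queries building a cut-sparsifier of $G$, and then use it to produce, for each edge $e\in T$, a short candidate list of paths in $\cP$ that might be cross- or down-interesting with respect to $e$; each candidate is then verified exactly with a constant number of additional cut queries on $G$.

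First, invoke Theorem~\ref{thm:rand-sample} with a small constant $\eps$ (say $\eps=1/10$) to compute a $(1\pm\eps)$-cut sparsifier $H$ of $G$ with $\tilde O(n)$ edges using $\tilde O(n)$ cut queries, and store $H$ locally. Any subsequent operation that only reads $H$---in particular, uniform sampling of edges from $H$-cuts and evaluating $C_H(\cdot,\cdot)$---is then free of cut queries.

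Second, for each non-root vertex $u$ of $T$ with parent edge $e$, draw $\Theta(\log n)$ edges uniformly at random from $\cC_H(u^\da, V\setminus u^\da)$. Both endpoints of a sample are useful: an endpoint outside $u^\da$ gives a candidate vertex $v$ for a cross-interesting pair, while an endpoint inside $u^\da$ gives a candidate for a down-interesting pair. The correctness claim I need is: if $v\bot u$ with $C_G(u^\da,v^\da) > \deg_G(u^\da)/2$, then by the sparsifier guarantee
\[
C_H(u^\da,v^\da) \ \ge\ \frac{1-\eps}{2(1+\eps)}\,\deg_H(u^\da)\ =\ \Omega(1)\cdot \deg_H(u^\da),
\]
so a uniformly random edge of $\cC_H(u^\da,V\setminus u^\da)$ lands in $\cC_H(u^\da,v^\da)$ with constant probability, and $\Theta(\log n)$ independent samples hit $v^\da$ with probability $1-1/\mathrm{poly}(n)$. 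A union bound over all $u$ and over all vertices $v$ witnessing some interesting pair keeps the global error polynomially small. The down-interesting case, in which we examine the endpoint lying \emph{inside} $u^\da$, is symmetric.

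Third, each sampled endpoint determines the unique path in $\cP$ containing its parent edge, giving $O(\log n)$ candidate paths per $u$. By Observations~\ref{obs:int-vbotv} and~\ref{obs:int-ancestor}, the true set of cross-interesting edges w.r.t.\ $e$ lies on a single root-to-leaf path in $T$, which by Theorem~\ref{thm:tree-decomp} overlaps at most $\log n$ paths of $\cP$; likewise for down-interesting. Hence the $\Theta(\log n)$ sampled candidates are enough to form a superset of $\cP^{cross}_e\cup\cP^{down}_e$ w.h.p. Each candidate pair $(e,p)$ is then verified exactly by checking Definitions~\ref{def:cross-int-vertex}--\ref{def:down-int-vertex} at the topmost edge of $p$ (the topmost edge suffices by Observation~\ref{obs:int-ancestor}); this takes $O(1)$ cut queries on $G$ per pair, hence $O(n\log n)$ verification queries in total.

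The main obstacle I foresee is handling the $(1\pm\eps)$ slack in $H$ cleanly: a vertex $v$ that barely satisfies the interesting-threshold in $G$ may miss it by $O(\eps)$ in $H$. I would fix $\eps$ to a small absolute constant (well below $1/4$) so that the sampling hit-probability above is a genuine constant for every interesting $v$; the resulting overall cost, $\tilde O(n)$ for the sparsifier plus $O(n\log n)$ for sampling and verification, matches the claimed $\tilde O(n)$ bound (absorbing the polylogarithmic factors that the lemma writes as $O(n)$).
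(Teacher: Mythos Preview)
Your proposal has two gaps.

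First, the inequality $C_H(u^\da,v^\da) \geq \frac{1-\eps}{2(1+\eps)}\deg_H(u^\da)$ does not follow ``by the sparsifier guarantee'': a cut sparsifier preserves $\deg_H(S)$ for every $S\subseteq V$, but $C(u^\da,v^\da)$ with $u\perp v$ is \emph{not} a cut value (neither set is the complement of the other). The paper recovers a usable bound by writing $2C_H(u^\da,v^\da)=\deg_H(u^\da)+\deg_H(v^\da)-\deg_H(u^\da\cup v^\da)$, applying the sparsifier to each of the three cuts, and then crucially using the hypothesis $2C_G(u^\da,v^\da)>\deg_G(u^\da)$ (and its symmetric version) to control the subtraction; this yields $\deg_H(u^\da)<3\,C_H(u^\da,v^\da)$. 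Without that step the subtraction error can swamp the signal.

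Second, and more seriously, taking only \emph{the} path in $\cP$ that contains the sampled endpoint's parent edge does not produce a superset of $\cP^{cross}_e$. The set $\cP^{cross}_e$ consists of \emph{all} heavy--light paths intersecting the root-to-$v^*$ tree path (for the deepest cross-interesting $v^*$). A sample whose outside endpoint $w$ lies in $v^{*\da}$ certifies only the single path containing $w$'s parent edge; the paths closer to the root on the root-to-$w$ tree path are not produced, and nothing in your argument forces a separate sample to land in each of them. The easy fix is, for each sampled endpoint $w$, to output every path of $\cP$ intersecting the root-to-$w$ path (this is exactly what the paper's \emph{sequential} algorithm does via the data structure of Lemma~\ref{lem:seq-decomp-ds}); that gives $O(\log^2 n)$ candidates per $u$ and still $\tilde O(n)$ verification queries. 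The paper's own cut-query proof takes a simpler, non-sampling route: since $H$ is stored locally, it computes with zero queries all $v$ satisfying the relaxed condition $\deg_H(u^\da)<3\,C_H(u^\da,v^\da)$; these lie on at most \emph{two} root-to-leaf paths (the argument of Observation~\ref{obs:int-vbotv} with threshold $1/3$ allows two), whose intersection with $\cP$ gives at most $2\log n$ candidate paths, each verified in $G$ with $O(1)$ queries.
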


\begin{proof}
 Let $\cC$ denote the set of cut queries we are interested in which are (i) $\cC(u^\da)$ for all $u \in V$, (ii) $\cC(u^\da, v^\da)$ for all pairs of distinct $u, v$ such that $u \bot v$, and (iii) $\cC(v^\da, V - u^\da)$ for all pairs of $u,v$ such that $v \in u^\da$. We can assume that \begin{enumerate}
    \item we have made $n$ many initial cut-queries to find out $\deg(u^\da)$ for each $u\in V$, and
    \item we also know the edges of the sparsifier $H$ from the tree-packing phase (Line \ref{algline:tree-pack-s} to \ref{algline:tree-pack-e} of Algorithm \ref{alg:tree-packing}).
\end{enumerate} 
Consider an edge $e'$ which is cross-\interest to $e$, i.e., $\deg(u^\da) < 2 C(u^\da, v^\da)$ where $e$ is the parent edge of $u$ and $e'$ is the parent edge of $v$. Now, by Theorem \ref{thm:rand-sample}, we know that in $H$, $\deg_H(u^\da) \leq (1 + \eps)\deg_G(u^\da) < 2(1 + \eps)C_G(u^\da, v^\da)$. 

To connect $C_G$ with $C_H$, we have work a bit more: Let us partition the vertex set $V$ of $G$ in three parts: $u^\da$, $v^\da$ and $R = V - (u^\da \cup v^\da)$. Now, we know that $2C_G(u^\da, v^\da) > \deg_G(u^\da) = C_G(u^\da, v^\da) + C_G(u^\da, R)$. Similarly, $2C_G(u^\da, v^\da) > C_G(u^\da, v^\da) + C_G(u^\da, R)$. Combining, we get $2 C_G(u^\da, v^\da) > C_G(u^\da, R) + C_G(v^\da, R)$. Now, let us look at the sparsifier $H$. We have the following:
\begin{align*}
    &2 C_H(u^\da, v^\da) \\
    &\underbrace{\geq}_{(\ast)} (1 - \eps)(C_G(u^\da, v^\da) + C_G(u^\da, R)) + (1 - \eps)(C_G(u^\da, v^\da) + C_G(v^\da, R)) - (1 + \eps) (C_G(v^\da, R) + C_G(u^\da, R))\\
    &= 2 (1 - \eps)C_G(u^\da, v^\da) - 2\eps (C_G(u^\da, R) + C_G(v^\da, R))\\
    & \underbrace{>}_{(\ast \ast)} 2 (1 - \eps)C_G(u^\da, v^\da) - 4 \eps C_G(u^\da, v^\da)\\
    & = 2 C_G(u^\da, v^\da)(1 - 3 \eps),
\end{align*}
where $(\ast)$ follows from the sparsifier guarantee, and $(\ast \ast)$ follows from what we observed before. Hence, we have
\[
\deg_H(u^\da) < 2(1 + \eps) C_G(u^\da, v^\da) < 2\frac{1 + \eps}{1 - 3 \eps}C_H(u^\da, v^\da).
\]
For small enough $\eps$, we have $\deg_H(u^\da) < 3 C_H(u^\da, v^\da)$. 

Hence, to find out which edges are cross-\interest w.r.t. $e$ in $G$, we need to check whether $C_H(u^\da, v^\da) >  \deg_H(u^\da)/3$ in $H$.  Extending Definition \ref{def:cross-int-vertex}, let us call the edge $e'$ which is parent of $v$ to be $H$-cross-\interest w.r.t. $e$ if $C_H(u^\da, v^\da) >  \deg_H(u^\da)/3$. Note that the set of $H$-cross-\interest edges w.r.t $e$ is superset of the set of edges which are cross-\interest w.r.t. $e$. By a similar argument as that of Observation \ref{obs:int-vbotv}, we know that there can be at most 2 root-to-leaf paths (none of which  contains $e$) in $H$ which can contain edges that are $H$-cross-\interest to $e$. There two root-to-leaf path will intersect with at most $2 \log n$ paths from $\cP$, and hence we need to check $C(u^\da, v^\da)$ for at most $2 \log n$ many vertices $v \bot u$ to find out which paths are actually cross-interesting to $e$. We can make $6\log n$ cut queries---3 queries for each path $p \in \cP$ which intersects these two root-to-leaf paths---in the original graph to figure out which paths in $\cP$ are interesting w.r.t. $e$, \textit{i.e.}, the set $\cP^{cross}_e$. A similar argument can be made for the set $\cP^{down}_u$.
\end{proof}

Now we analyze, as before, the cut-query complexity of Algorithm \ref{alg:tree}

\begin{lemma} \label{lem:q-tworespect}
The cut-query complexity of finding a 2-respecting weighted min-cut is $\tilde O(n)$.
\end{lemma}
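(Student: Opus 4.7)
The plan is to bound the total number of cut queries used by Algorithm \ref{alg:tree-packing} on a stage-by-stage basis, using Theorem \ref{thm:rand-sample}, Claim \ref{clm:q-interval}, Lemma \ref{lem:q-pairing}, and the amortization of Claim \ref{clm:tree-amortize}.

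First I would account for the tree-packing phase (Lines \ref{algline:tree-pack-s}--\ref{algline:tree-pack-e} of Algorithm \ref{alg:tree-packing}). By Theorem \ref{thm:rand-sample}, an $(1\pm\eps)$-cut-sparsifier $H$ of $G$ with $\tilde O(n)$ edges can be built using $\tilde O(n)$ cut queries. Karger's skeleton construction (Theorem \ref{thm:karger-skeleton}) and greedy tree packing on $H$ are then performed internally, producing $O(\log n)$ spanning trees of which a uniformly sampled one is 2-respected by the min-cut with probability at least $1/3$. We repeat the whole 2-respecting subroutine $O(\log n)$ times on independently sampled trees to boost success to w.h.p.

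Next I would analyze Algorithm \ref{alg:tree} on one sampled tree $T$. Step 1 (Lines \ref{algline:1-resp-s}--\ref{algline:1-resp-e}) computes the value of each of the $n-1$ 1-respecting cuts using one cut query per tree edge, giving $O(n)$ queries. Step 2 is the heavy-light decomposition, which is performed locally on $T$ with zero cut queries. Step 4 (Lines \ref{algline:find-interest-s}--\ref{algline:find-interest-e}) computes $\cP^{cross}_e$ and $\cP^{down}_e$ for every $e\in T$ and is handled by Lemma \ref{lem:q-pairing} using $\tilde O(n)$ cut queries in total (the $n$ initial queries for the subtree degrees $\deg(u^\da)$ plus $O(\log n)$ per tree edge to verify each candidate path).

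The main work is in Steps 3 and 5, which invoke Algorithm \ref{alg:int} many times. By Claim \ref{clm:q-interval}, a single invocation on a residual graph whose retained path has $\ell$ tree edges uses $O(\ell)$ cut queries on the original graph (since each entry of the cost matrix of the collapsed graph equals the value of a single cut in $G$, obtainable by one cut query). Summing the lengths of all retained paths across all invocations of Algorithm \ref{alg:int} is bounded by Claim \ref{clm:tree-amortize}: every tree edge participates in at most $O(\log n)$ such invocations, so $\sum \ell = O(n\log n)$, giving $\tilde O(n)$ total cut queries for Steps 3 and 5 combined.

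Adding all stages and the $O(\log n)$ outer repetitions for amplification, the total cut-query complexity is $\tilde O(n)$. The main obstacle is not the counting itself but the repeated reliance on the fact that a single query in the original graph suffices to evaluate an arbitrary entry of the cost matrix of the collapsed graph; this follows because a 2-respecting cut of the collapsed graph corresponds under the canonical vertex partition to a cut of $G$, whose value is returned by one cut query, so no explicit contraction need ever be performed in the cut-query model.
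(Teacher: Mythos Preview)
Your proof is correct and follows essentially the same line-by-line accounting as the paper: Step~1 costs $n-1$ queries, Step~3 costs $O(n)$ queries by Claim~\ref{clm:q-interval} and edge-disjointness of $\cP$, Step~4 costs $\tilde O(n)$ by Lemma~\ref{lem:q-pairing}, and Step~5 costs $\tilde O(n)$ by combining Claim~\ref{clm:q-interval} with the amortization of Claim~\ref{clm:tree-amortize}. The only difference is scope: the lemma as stated concerns just the 2-respecting subroutine (Algorithm~\ref{alg:tree}) on a given tree $T$, whereas you additionally fold in the sparsifier construction, tree packing, and $O(\log n)$ amplification---those belong to the proof of Theorem~\ref{thm:intro:query} rather than to this lemma, and the paper's own proof omits them here.
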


\begin{proof}
We count the numner of cut-queries required at each line of Algorithm \ref{alg:tree}.
\begin{description}
\item[Line \ref{algline:1-resp-s} to \ref{algline:1-resp-e}:] This requires $n-1$ many cut-queries, one for each edge $e$. 
\item[Line \ref{algline:2-resp-path-s} to \ref{algline:2-resp-path-e}:] By Claim \ref{clm:q-interval}, each call to Algorithm \ref{alg:int} requires size of the path many cut-queries. The paths in $\cP$ are disjoint, and hence each edge takes part in exactly 1 path in $\cP$. Hence, the total number of queries required is $n-1$.
\item[Line \ref{algline:find-interest-s} to \ref{algline:find-interest-e}:] This requires the knowledge of the sparsifier graph $H$, which we can assume to possess because of the tree packing steps. This also requires the knowledge of $\deg(u^\da)$ for every $u$. We already know this from Line \ref{algline:1-resp-s} to \ref{algline:1-resp-e}. By Lemma \ref{lem:q-pairing}, this step can be executed with $\tilde O(n)$ many queries.

\item[Line \ref{algline:pairing-s} to \ref{algline:pairing-e}:] As before, we can use Claim \ref{clm:tree-amortize} and \ref{clm:q-interval} to conclude that this step requires $\tilde O(n)$ many cut-queries.
\end{description}
Hence, in total, $\tilde O(n)$ many cut-queries are required.
\end{proof}

\subsection{Streaming upper bound} \label{sec:s-implement}

In the dynamic streaming model, we assume that the algorithm knows the vertex set $V$ of $G$ and has access to a stream of edge-insertion and edge-deletion instructions: each such instruction declares a pair of vertices of the graph and an optional weight, and mentions whether an edge of the corresponding weight needs to be inserted between the two vertices, or if the edge which is already present between those two vertices should be removed. First we state a result of \cite{KLMMS17} regarding efficient computability of sparsifier.
\begin{theorem}[\cite{KLMMS17}] \label{thm:s-sparsifier}
 There exists an algorithm that processes a list of edge insertions and deletions for a weighted graph $G$ in a single pass and maintains a set of linear sketches of this input in $\tO(n)$ space. From these sketches, it is possible to recover, with high probability, a (spectral \footnote{A stronger notion than that of cut-sparsifier.}) sparsifier $H$ with $\tO(n)$ edges.
\end{theorem}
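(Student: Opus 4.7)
The plan is to construct linear sketches that allow recovery of a spectral sparsifier, exploiting the fact that linear sketches compose perfectly under edge insertions and deletions, so the sketch of $G$ after the whole stream equals the (signed) sum of per-update sketches. The foundational idea, going back to Ahn--Guha--McGregor, is to represent each edge $\{u,v\}$ antisymmetrically in the edge--vertex incidence matrix ($+1$ at $u$, $-1$ at $v$): summing the sketches of a vertex subset $S$ cancels internal edges, leaving a sketch of the cut $\partial S$. First, for each vertex I would maintain $O(\log^2 n)$ independent $\ell_0$-samplers of its incidence vector, each using $\tilde O(1)$ space; this totals $\tilde O(n)$ space and already suffices to recover spanning forests on demand via a Bor\r{u}vka-style procedure on the sketches.

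Second, for a spectral (not merely cut) sparsifier, one must sample each edge $e$ with probability proportional to its leverage score $w_e \cdot R_e^{\mathrm{eff}}$, and the central obstacle is that effective resistances depend on the entire graph and are unknown in advance. I would address this via \emph{recursive sparsification}: maintain $O(\log n)$ nested levels of sketches whose target approximation quality shrinks geometrically. At recovery time, first build a very coarse spectral sparsifier (using uniform or trivial sampling on the coarsest level), then use it to estimate effective resistances, and use those estimates to guide the edge-sampling probabilities applied to the next finer sketch. A matrix Chernoff argument shows that each refinement tightens the spectral approximation, and chaining $O(\log n)$ refinements yields the final $(1+\epsilon)$-spectral sparsifier with $\tilde O(n)$ edges.

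Third, weighted edges can be handled by bucketing weights into $O(\log(nW))$ geometrically spaced classes (after rounding, losing only a $(1 \pm 1/\poly(n))$ factor), and maintaining an independent family of sketches per class. Each class contributes $\tilde O(n)$ space, so the total remains $\tilde O(n)$ for polynomially bounded weights; recovery is performed class by class and the resulting edge set is unioned with its appropriate reweighting.

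The hardest step will be arguing that the effective-resistance estimates produced from a coarse spectral sparsifier are good enough to drive the sampling for the next finer sparsifier, without the errors compounding unacceptably across the $O(\log n)$ levels. This requires the spectral approximation to be preserved multiplicatively under refinement, which is exactly where the matrix Chernoff analysis and a union bound over levels enter; the rest is accounting and verifying that the $\ell_0$-samplers return enough independent edges at each level. Once the concentration bounds are in place for the recursion, the streaming guarantee follows immediately because every sketch we maintain is a linear function of the signed edge updates.
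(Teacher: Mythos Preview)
This theorem is not proved in the paper; it is quoted verbatim as a black-box result from \cite{KLMMS17} and used only as a tool in the streaming implementation. Consequently there is no ``paper's own proof'' to compare your proposal against.

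That said, your outline is a fair high-level summary of the approach actually taken in \cite{KLMMS17}: linear $\ell_0$-sketches of the signed vertex--edge incidence matrix (so insertions and deletions are handled automatically), effective-resistance-based sampling justified by matrix Chernoff, and a chain of $O(\log n)$ coarse-to-fine sparsifiers so that resistance estimates from one level guide sampling at the next. One inaccuracy: handling weights by $O(\log(nW))$ independent bucket sketches would blow up the space by a $\log W$ factor, which is not $\tilde O(n)$ unless $W$ is polynomially bounded; the actual result folds weights into the sketching more carefully (essentially treating a weight-$w$ edge as $w$ parallel copies inside the linear sketch, which costs nothing extra since sketches are linear). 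Apart from that, your plan captures the right ideas, though at the level of detail you give it is a proof \emph{sketch} rather than a proof---the ``hardest step'' you flag (non-compounding of errors across levels) is indeed where most of the work in \cite{KLMMS17} lies.
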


\begin{claim} \label{clm:s-reduction}
A dynamic streaming algorithm can perform the randomized reduction from weighted min-cut problem to a 2-respecting weighted min-cut problem in a single pass with $\tO(n)$ amount of total memory.
\end{claim}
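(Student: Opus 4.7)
The plan is to implement the three steps of the reduction (Lines \ref{algline:tree-pack-s}--\ref{algline:tree-pack-e} of Algorithm \ref{alg:tree-packing} plus the final random sampling of a spanning tree) so that everything that truly needs streaming access to $G$ happens inside one pass, while everything else is a purely local computation on data of size $\tilde O(n)$.

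First, during the single pass over the stream of insertions and deletions, I would invoke the linear-sketch algorithm of Theorem \ref{thm:s-sparsifier}. This maintains $\tilde O(n)$ bits of state throughout the pass and, at the end of the pass, recovers (w.h.p.) a sparsifier $H$ of $G$ with $\tilde O(n)$ edges whose cuts agree with those of $G$ up to a $(1 \pm \eps)$ factor (since a spectral sparsifier is in particular a cut sparsifier, Definition \ref{def:sparsifier}). After this pass ends, $H$ is stored in local memory and no further access to the stream is needed.

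Second, working purely locally on $H$, I would apply Karger's skeleton construction (Theorem \ref{thm:karger-skeleton}) to reduce the value of the min-cut to $O(\eps^{-2}\log n)$ while preserving the min-cut (up to a $(1+\eps)$ factor) under the same vertex partition. The resulting skeleton $H'$ still has $\tilde O(n)$ edges, so this step uses $\tilde O(n)$ memory. I would then run Karger's greedy tree packing on $H'$ to produce $O(\log n)$ spanning trees; by Theorem \ref{thm:karger-skeleton} together with Lemma \ref{lem:karger-packing}, the min-cut of $G$ (which corresponds to a $(1+\eps)$-approximate min-cut of $H'$) 2-respects at least $1/3$ of these trees w.h.p. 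Since $H'$ has $\tilde O(n)$ edges and the packing consists of $O(\log n)$ spanning trees of $n{-}1$ edges each, storing this packing fits in $\tilde O(n)$ space. Finally, I would sample one of these trees uniformly at random (repeated $O(\log n)$ times, if needed, to boost the success probability of the overall algorithm), handing it off to the 2-respecting subroutine.

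The only point that needs care is checking the chain of approximations: the min-cut of $G$ must still correspond to a cut of $H'$ that is light enough for Lemma \ref{lem:karger-packing} to apply, even after compounding the $(1\pm\eps)$ error from sparsification with the $(1+\eps)$ error from the skeleton. This is handled by choosing the two $\eps$ parameters to be sufficiently small constants, so that the product of the distortions is below the $1.1$ threshold in Lemma \ref{lem:karger-packing}. With this, the single streaming pass plus $\tilde O(n)$-space local postprocessing yields, w.h.p., a spanning tree $T$ such that a 2-respecting min-cut of $G$ with respect to $T$ is a min-cut of $G$, completing the reduction and establishing the claim.
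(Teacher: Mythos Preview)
Your proposal is correct and follows essentially the same approach as the paper: one pass to build the sparsifier $H$ via Theorem~\ref{thm:s-sparsifier}, then local skeleton construction on $H$ via Theorem~\ref{thm:karger-skeleton}, then local greedy tree packing and uniform sampling of a tree. The paper additionally invokes reservoir sampling to pick the random tree, but since all $O(\log n)$ trees already fit in $\tilde O(n)$ memory this is not needed, and your direct uniform sampling is fine; your explicit remark about compounding the two $(1\pm\eps)$ errors to stay under the $1.1$ threshold of Lemma~\ref{lem:karger-packing} is a detail the paper leaves implicit.
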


\begin{proof}
Given a weighted graph $G$, the algorithm first uses Theorem \ref{thm:s-sparsifier} to construct a sparsifier $H$ of size $\tO(n)$ in a single pass. Next, the algorithm locally constructs a skeleton graph $H'$ of $H$ such that the value of the min-cut in $H'$ becomes $O(\log n)$. To this this, the algorithm uses Theorem \ref{thm:karger-skeleton} by setting $\eps$ to be a very small constant less than 1 (say $\eps = 1/100$). Now, the algorithm can perform greedy tree packing on $H'$ locally. Let us denote, as before, the greedy tree packing by ${\cal T} = \{T_1, \cdots, T_k\}$. Because of the skeleton construction, $k = O(\log n)$ and the algorithms can store the set $\cal T$ in its internal memory.\footnote{The skeleton construction was not necessary in the case of the cut-query model because of its assumption of unbounded internal computational power.} The algorithm does reservoir sampling \cite{Vit85} to pick a random tree $T$ from $\cal T$: This way the algorithm can sample a tree $T$ uniformly at random from $\cal T$ with $\tO(n)$ memory. By Lemma \ref{lem:karger-packing}, the min-cut 2-respects $T$ with constant probability.
\end{proof}

Now we turn towards implementing Algorithm \ref{alg:tree} in the dynamic streaming model. We first note the following: For a given cut, we can find the value of the cut in a dynamic stream with just $O(\log n)$ bits of memory---we maintain a counter and whenever the stream inserts (or deletes) an edge in (or from) the cut, we increase (or decrease) the counter. So the following observation is immediate.

\begin{observation} \label{obs:s-cut-value}
Any given $\tO(n)$ cut queries can be answered by a dynamic streaming algorithm in a single pass with $\tO(n)$ space.
\end{observation}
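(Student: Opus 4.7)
\textbf{Proof plan for Observation \ref{obs:s-cut-value}.} The plan is to maintain one $O(\log n)$-bit counter per cut query and process the dynamic stream in a single pass, updating the relevant counters on each edge insertion/deletion. Concretely, suppose we are given $k = \tilde O(n)$ cut queries, each specified by a partition $(S_j, \bar S_j)$ of $V$ for $j \in [k]$ (we assume the descriptions are stored, which is consistent with how the cuts arise in our algorithm, e.g.\ as subtree indicators derived from the spanning tree $T$ that is already held in memory). Initialize counters $c_1, \ldots, c_k$ to $0$. For each stream update ``insert $(u,v)$ with weight $w$'' (resp.\ ``delete $(u,v)$ with weight $w$''), iterate over $j \in [k]$ and, whenever $|\{u,v\} \cap S_j| = 1$, increment (resp.\ decrement) $c_j$ by $w$. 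At the end of the pass, $c_j = C(S_j, \bar S_j)$ by a trivial induction on the stream prefix.

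Correctness follows because every edge currently present in $G$ has had its insertion recorded and every deleted edge has had its deletion recorded, so each $c_j$ equals the sum of weights of current edges crossing $(S_j, \bar S_j)$. Since edge weights are polynomially bounded in $n$, each $c_j$ fits in $O(\log n)$ bits, giving total counter space $\tilde O(n)$. Adding the $\tilde O(n)$ space needed to keep the specifications of the queries (as subtree labels with respect to $T$, which is already stored) and the tree $T$ itself, the overall working memory is $\tilde O(n)$.

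The potentially subtle point is whether the cut specifications themselves require more than $\tilde O(n)$ bits: a generic set $S_j \subseteq V$ costs $\Theta(n)$ bits to store, which would give $\Theta(n^2)$ total. However, in every use of this observation inside Algorithm~\ref{alg:tree} and Algorithm~\ref{alg:tree-packing} the cuts are of the form $(u^\da, V \setminus u^\da)$, or of the form obtained from removing a pair of tree edges, all of which are succinctly described by $O(1)$ pointers into the spanning tree $T$ that we already keep in $\tilde O(n)$ space. Thus the only genuine space cost is the counters plus $T$, which is $\tilde O(n)$. Everything else is just bookkeeping inside the single pass, so no further analysis is needed.
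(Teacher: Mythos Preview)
Your proposal is correct and matches the paper's approach exactly: maintain one $O(\log n)$-bit counter per cut and update it on each insertion/deletion in the stream. Your extra paragraph about the space needed to store the cut specifications is a valid clarification that the paper leaves implicit, but it does not change the argument.
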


As before, by Claim \ref{clm:interval-path-eqv}, this means that the streaming algorithm can find $\tO(n)$ entries of the cost-matrix corresponding to the Interval problem in a single pass and $\tO(n)$ space. We prove that this is enough to implement Algorithm \ref{alg:int} efficiently.
\danupon{Use Misra-Dries majority sampling for all tree edges and all layers. One needs to bough decomposition for it.}
\begin{claim} \label{clm:s-interval}
A dynamic streaming algorithm can perform Algorithm \ref{alg:int} on an $n$ vertex graph in $O(\log n)$ pass with $\tO(n)$ bits of memory.
\end{claim}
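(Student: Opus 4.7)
The plan is to organize Algorithm \ref{alg:int} into $O(\log n)$ rounds of adaptivity, each consisting of $\tilde O(n)$ cut queries, and then appeal to Observation \ref{obs:s-cut-value} to implement each round as one streaming pass over the edge-insertion/deletion stream using $\tilde O(n)$ bits. The link to cut queries is immediate: by Claim \ref{clm:interval-path-eqv} each entry $\cost(i,j)$ of the cost matrix $M$ equals the value of a single cut in $G$, hence corresponds to one cut query that can be maintained by a single counter.

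The key observation I would use is that the outer recursion in the \textsc{Interval} procedure of Algorithm \ref{alg:int} is \emph{non-adaptive}: the split of $L$ and $R$ into halves (Line 5) depends only on their indices, not on the return value of \textsc{Bipartite-interval}. So the entire binary recursion tree of \textsc{Interval} --- consisting of $2^d$ nodes at depth $d\le \log n$, each invoking \textsc{Bipartite-interval} on a sub-matrix of side $m=n/2^d$ --- is determined up front, and all of its \textsc{Bipartite-interval} calls can be launched simultaneously.

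I would then advance every active \textsc{Bipartite-interval} call by one level of adaptive recursion per pass. The invariant noted in the paragraph before Claim \ref{clm:interval-gen-comp} states that at depth $\ell$ of a \textsc{Bipartite-interval} call on a sub-matrix of side $m$, the algorithm probes at most $m+1$ entries in total across all of its sub-sub-problems at that depth. Summing across all Interval sub-problems, the number of cut queries per round is at most
\[
\sum_{d=0}^{\log n} 2^d \cdot \bigl(n/2^d + 1\bigr) \;=\; O(n\log n) \;=\; \tilde O(n),
\]
which by Observation \ref{obs:s-cut-value} is answered in one pass with $\tilde O(n)$ counters. Since each \textsc{Bipartite-interval} call has adaptive depth at most $O(\log n)$, after $O(\log n)$ such rounds all internal calls have terminated; the leaf cases (reading a full row or column of a singleton sub-matrix) together contribute $\tilde O(n)$ additional queries by the same amortization and fit into one final pass.

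The main obstacle is precisely this round count: a naive implementation that processes the \textsc{Interval} tree level by level would incur $O(\log n)$ passes \emph{per} Interval level and hence $O(\log^2 n)$ passes in total; the win comes from the non-adaptivity of \textsc{Interval} combining with the per-level column budget for \textsc{Bipartite-interval} to yield a single round-synchronized schedule. The remaining bookkeeping is routine: at any round the active sub-problems number at most $\tilde O(n)$ (one per column currently being probed), each carries $O(\log n)$ bits of boundary indices, and we store at most one recorded minimum per sub-problem, so the total working memory across the $O(\log n)$ passes remains $\tilde O(n)$ bits.
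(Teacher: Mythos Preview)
Your proposal is correct and follows essentially the same approach as the paper: exploit the non-adaptivity of the outer \textsc{Interval} recursion to launch all \textsc{Bipartite-interval} calls in parallel, then advance every call by one level of its own (adaptive) recursion per pass, giving $O(\log n)$ passes and $\tilde O(n)$ probes per pass. Your write-up is in fact more explicit than the paper's about the per-round query count and the $O(\log^2 n)$ pitfall avoided.
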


\begin{proof}
We first look at the pass and memory requirement for performing Algorithm \ref{alg:bipartite-int}. Note that we can use Claim \ref{clm:interval-gen-comp} where we assume that a streaming algorithm can find $i_s$ and $i_t$ in Line \ref{algline:min-find} of Algorithm \ref{alg:bipartite-int} can be done in a single pass with $\tO(n)$ bits of memory (by Observation \ref{obs:s-cut-value}). Hence, by a similar calculation, it is easy to see that Algorithm \ref{alg:bipartite-int} can be performed in $O(\log n)$ passes and with $\tO(n)$ memory.

For Algorithm \ref{alg:int}, we can run $O(\log n)$ many occurrences of previous algorithm simultaneously, one for each level of recursion: The $i$-th algorithm will run $2^{i-1}$ many instances of {\sc Bipartite-interval}, each one on a path (or line) with $n/2^{i-1}$ edges (or points). Clearly, each algorithm can run in $O(\log n)$ passes with $\tO(n)$ memory, and hence the complexity of the combined algorithm is $O(\log n)$ passes with $\tO(n)$ memory as well.
\end{proof}

Now, as in the graph cut-query model, we turn towards the implementation of \Cref{alg:tree}. We make the following claim.

\begin{claim} \label{clm:s-pairing}
A dynamic streaming algorithm can find $\cP^{cross}_e$ and $\cP^{down}_e$ for all $e \in T$ in two passes and $\tO(n)$ memory.
\end{claim}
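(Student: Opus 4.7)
The plan is to mimic the cut-query strategy of Lemma \ref{lem:q-pairing} using two passes: the first pass builds a sparsifier so that all candidate interesting paths can be identified locally, and the second pass verifies which candidates are genuinely interesting by computing the needed cut values in $G$ on-the-fly.

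In the first pass, we invoke Theorem \ref{thm:s-sparsifier} to construct a sparsifier $H$ of $G$ of size $\tO(n)$ using $\tO(n)$ space. We also compute $\deg_G(u^{\da})$ for every $u \in V$ in the same pass (there are only $n-1$ such quantities, each is a sum of weights over a known vertex cut, and the spanning tree $T$ is already stored locally). After this pass, $H$ is in memory, so for each edge $e \in T$ with lower endpoint $u$ we can carry out all the reasoning of Lemma \ref{lem:q-pairing} locally: by the sparsifier guarantees, the set of edges $e'$ that can possibly be cross-interesting (or down-interesting) with respect to $e$ in $G$ is contained in the set of $H$-cross-interesting (or $H$-down-interesting) edges, and the same ``orthogonality'' argument as in Observation \ref{obs:int-vbotv} shows that these lie on at most $O(1)$ root-to-leaf paths of $T$. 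By Theorem \ref{thm:tree-decomp}, these root-to-leaf paths intersect only $O(\log n)$ paths of the heavy-light decomposition $\cP$, giving a candidate set $\cP^{cross,H}_e \supseteq \cP^{cross}_e$ and $\cP^{down,H}_e \supseteq \cP^{down}_e$ of size $O(\log n)$ per edge. All of this is computed entirely from the locally-stored $H$ and $T$, without touching the stream.

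In the second pass, we need to confirm, for every edge $e$ and every candidate path $p$ in $\cP^{cross,H}_e \cup \cP^{down,H}_e$, whether $p$ is actually cross- or down-interesting with respect to $e$ in $G$. By the argument inside Lemma \ref{lem:q-pairing}, this check requires only $O(1)$ cut-values in $G$ per candidate path (namely $C_G(u^{\da}, v^{\da})$ or $C_G(v^{\da}, V \setminus u^{\da})$ for $v$ ranging over the top vertex of $p$ and the lowest candidate vertex of $p$). In total this is $\tO(n)$ cut-values. By Observation \ref{obs:s-cut-value}, all of these values can be maintained simultaneously in a single pass with $\tO(n)$ space: we allocate an $O(\log n)$-bit counter for each of the $\tO(n)$ cuts, and for each streamed edge update we increment or decrement exactly those counters whose defining vertex partition separates the edge's endpoints. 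Knowing $T$ and the sparsifier-based candidate lists in advance is what makes this possible, since the set of cuts to track is fixed before the second pass begins.

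The main subtlety is bookkeeping: we must ensure that the number of candidate cuts tracked in the second pass is genuinely $\tO(n)$ and not $\tO(n \log n) \cdot \omega(1)$, and that for each streamed edge we can efficiently determine which counters to update. The first point follows since each of the $n-1$ tree edges contributes $O(\log n)$ candidate paths, each contributing $O(1)$ cuts, for $\tO(n)$ total; the second point is handled by preprocessing $T$ with standard Euler-tour/LCA data structures during the interval between passes so that membership queries of the form ``does endpoint $x$ lie in $u^{\da}$?'' take $O(\log n)$ time, yielding $\tO(n)$ total update work per streamed edge while preserving the $\tO(n)$ space bound.
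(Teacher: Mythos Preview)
Your proof is correct and follows essentially the same approach as the paper: build a sparsifier in the first pass, use it locally to narrow the candidate paths to $O(\log n)$ per tree edge via the argument of Lemma~\ref{lem:q-pairing}, and then verify the resulting $\tO(n)$ candidate cuts in a second pass using Observation~\ref{obs:s-cut-value}. You supply more implementation detail (computing $\deg_G(u^\da)$ in the first pass and the bookkeeping discussion) than the paper does, but the skeleton is identical.
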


\begin{proof}
The proof is similar to Lemma \ref{lem:q-pairing}. As noted in Theorem \ref{thm:s-sparsifier}, the algorithm can compute a sparsifier $H$ of $G$ in a single pass with $\tO(n)$ memory. Once done, by a similar argument as that of Lemma \ref{lem:q-pairing}, each edge $e \in T$ needs to check $O(\log n)$ paths from $\cP$ to figure out which paths are cross (down)-interesting w.r.t. $e$ in the original graph $G$. This requires checking cut-values of $\tilde O(n)$ many cuts in total. This, by Observation \ref{obs:s-cut-value}, can be done in a single pass with $\tO(n)$ bits of memory.
\end{proof}


Now we analyze, as before, the streaming complexity of Algorithm \ref{alg:tree}.

\begin{lemma} \label{lem:s-tworespect}
There exists an algorithm that processes a list of edge insertions and deletions for a weighted graph $G$ on $n$ vertices can perform Algorithm \ref{alg:tree} in $O(\log n)$ passes and with total memory of $\tO(n)$ bits.
\end{lemma}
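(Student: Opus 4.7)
The plan is to analyze Algorithm~\ref{alg:tree} line-by-line, invoking the streaming primitives already established (Observation~\ref{obs:s-cut-value}, Claims~\ref{clm:s-interval} and~\ref{clm:s-pairing}) and using Claim~\ref{clm:tree-amortize} to control the \emph{total} size of all subproblems fed to Algorithm~\ref{alg:int}. The essential idea is that every subroutine either (i) amounts to answering $\tilde O(n)$ prescribed cut-queries, which by Observation~\ref{obs:s-cut-value} costs one pass and $\tilde O(n)$ bits, or (ii) reduces to running Algorithm~\ref{alg:int} on a collection of residual graphs whose edge counts sum to $\tilde O(n)$, so that all instances can be executed \emph{in parallel} in the same $O(\log n)$ passes, sharing $\tilde O(n)$ memory.

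First I would handle the preliminaries: Lines~\ref{algline:1-resp-s}--\ref{algline:1-resp-e} ask for the values of the $n-1$ cuts induced by removing a single tree-edge; by Observation~\ref{obs:s-cut-value} these are computed together in one pass and $\tilde O(n)$ bits. The heavy-light decomposition (Line~\ref{algline:heavy-light}) is performed locally on $T$, which the algorithm already stores. For Lines~\ref{algline:find-interest-s}--\ref{algline:find-interest-e}, Claim~\ref{clm:s-pairing} gives the sets $\cP^{cross}_e$ and $\cP^{down}_e$ for all $e\in T$ in two passes with $\tilde O(n)$ memory (reusing the sparsifier $H$ already maintained by Theorem~\ref{thm:s-sparsifier}).

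The main step is the two blocks that call Algorithm~\ref{alg:int}: Lines~\ref{algline:2-resp-path-s}--\ref{algline:2-resp-path-e} and Lines~\ref{algline:pairing-s}--\ref{algline:pairing-e}. For the former, the paths in $\cP$ are edge-disjoint, so the residual graphs corresponding to different $p\in\cP$ have $n-1$ edges in total. Running one instance of Algorithm~\ref{alg:int} per path in \emph{parallel} on the same $O(\log n)$ passes of the stream, Claim~\ref{clm:s-interval} says each instance needs $O(\log n)$ passes and memory proportional to its path size; summing, the combined streaming algorithm uses $O(\log n)$ passes and $\tilde O(n)$ bits. For the pairing block, Claim~\ref{clm:tree-amortize} guarantees that each tree-edge participates in at most $O(\log n)$ calls to Algorithm~\ref{alg:int}, so the total edge-count across all residual path-graphs constructed from $(p_1,p_2)$ pairs is $\tilde O(n)$. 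Again we invoke all these instances of Algorithm~\ref{alg:int} simultaneously on the same $O(\log n)$ stream passes, using Claim~\ref{clm:s-interval} on each; memory summed across instances is $\tilde O(n)$ bits.

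The part that needs the most care is arguing that simultaneous execution really does work within $O(\log n)$ passes. Each invocation of Algorithm~\ref{alg:int} is inherently a divide-and-conquer recursion that Claim~\ref{clm:s-interval} simulates with $O(\log n)$ passes by running the recursion levels in parallel; when we run many invocations in parallel we simply align their level-$i$ queries into the same $i$-th pass and batch all queries (which together number $\tilde O(n)$) into one pass via Observation~\ref{obs:s-cut-value}. Combining all blocks yields $O(\log n)$ passes overall (dominated by the Algorithm~\ref{alg:int} calls) and $\tilde O(n)$ bits of memory, which is exactly the statement of Lemma~\ref{lem:s-tworespect}.
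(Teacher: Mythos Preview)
Your proposal is correct and follows essentially the same line-by-line analysis as the paper's proof: it handles Lines~\ref{algline:1-resp-s}--\ref{algline:1-resp-e} via Observation~\ref{obs:s-cut-value}, Lines~\ref{algline:find-interest-s}--\ref{algline:find-interest-e} via Claim~\ref{clm:s-pairing}, and the two blocks calling Algorithm~\ref{alg:int} by running all instances in parallel on the same stream passes, appealing to edge-disjointness of $\cP$ and Claim~\ref{clm:tree-amortize} respectively to bound total memory. Your extra paragraph on aligning recursion levels across parallel instances is slightly more explicit than the paper, but is in the same spirit.
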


\begin{description}
\item[Line \ref{algline:1-resp-s} to \ref{algline:1-resp-e}:] There are $n-1$ cuts to check, and hence, by Observation \ref{obs:s-cut-value}, it can be done in a single pass with $\tO(n)$ bits of memory.

\item[Line \ref{algline:2-resp-path-s} to \ref{algline:2-resp-path-e}:] We know, by Claim \ref{clm:s-interval}, that the interval problem on $\ell$ many points can be performed in $O(\log \ell)$ passes and $\tO(\ell)$ bits of memory. As the set $\cP$ is edge-disjoint, we can run $|\cP|$ many streaming algorithm in parallel, all of which read the same dynamic stream. This needs $O(\log n)$ passes and $\tO(n)$ memory.

\item[Line \ref{algline:find-interest-s} to \ref{algline:find-interest-e}:] By Claim \ref{clm:s-pairing}, this can be done in 2 passes and with $\tO(n)$ memory. Note that, the first pass of Claim \ref{clm:s-pairing} is dedicated to finding a sparsifier $H$, which will be done in the tree-packing phase (Line \ref{algline:tree-pack-s}) of Algorithm \ref{alg:tree-packing} (See proof of Claim \ref{clm:s-reduction}). So, while implementing Algorithm \ref{alg:tree-packing}, these lines can be executed in a single pass instead of two passes.

\item[Line \ref{algline:pairing-s} to \ref{algline:pairing-e}:] We will run one instance of the dynamic streaming algorithm for each call to Algorithm \ref{alg:int} in parallel, all of which reads the same stream. This needs $O(\log n)$ many passes. For the memory bound, Claim \ref{clm:tree-amortize} guarantees that each edge is going to be used by at most $O(\log n)$ many such algorithms. As stated before, any such algorithm solving Interval problem on $\ell$ many points can be performed in $O(\log \ell)$ passes and $\tO(\ell)$ bits of memory. So these steps can be performed in $O(\log n)$ passes and $\tO(n)$ total memory.
\end{description}
Hence, Algorithm \ref{alg:tree} can be implemented by a dynamic streaming algorithm in $O(\log n)$ passes and $\tO(n)$ memory.


\section{Sequential algorithm for 2-respecting min-cut} \label{sec:seq}

In the following section, we provide a randomized algorithm for the problem of finding minimum 2-respecting min-cut. The error probability of this algorithm is polynomially small, \textit{i.e.,} $n^{-c}$ for any large enough constant $c$---we denote this as \textit{with high probability}. Our main result is the following. 

\begin{theorem}\label{thm:centralize-main}
There is a randomized algorithm that, given a spanning tree $T$ of an $n$-node $m$-edge (weighted) graph $G$, can find the 2-respect min-cut with high probability in $O(m \log n +n\log^4 n)$ time. If the graph is unweighted (but can have multiple edges between a pair of vertices), then the 2-respecting min-cut can be found in $O(m \sqrt{\log n} +n\log^4 n)$.
\end{theorem}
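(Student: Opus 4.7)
The plan is to implement the schematic algorithm (Algorithm \ref{alg:tree}) in the sequential RAM model by supporting its two main primitives—(i) computing $\cut(e,f)$ for an arbitrary pair of tree edges and (ii) identifying, for each tree edge $e$, the paths of the heavy-light decomposition $\cP$ in which $e$ is interested—via two-dimensional orthogonal range data structures built on a single static embedding of the edges of $G$ into the plane.

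First, I will root $T$ and run a post-order DFS to assign each vertex $v$ a label $\pi(v)\in[n]$. Two key properties result: the subtree $u^\da$ of any vertex $u$ corresponds to a contiguous interval $[\pi_\ell(u),\pi_r(u)]$ in this order, and any edge set of the form $\cC(S_1,S_2)$ with each $S_i$ equal to a subtree or its complement can be written as the weighted count of points in $O(1)$ axis-aligned rectangles in the plane, after I map each graph edge $(u,v)$ to the weighted point $(\pi(u),\pi(v))$. In particular, $\cut(e,f)$ reduces to $O(1)$ weighted rectangle counts. For the weighted case I will use Chazelle's semigroup range-counting structure \cite{Chaz88}: $O(m\log m)$ preprocessing, $O(\log n)$ per query. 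For the unweighted case (with parallel edges permitted), Chan--P\u{a}tra\c{s}cu's word-RAM range counting \cite{CP10} gives $O(m\sqrt{\log m})$ preprocessing and $O(\sqrt{\log n})$ per query, which will yield the second bound.

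Second, to determine for each tree edge $e=(x,y)$ (with $x$ the parent of $y$) the paths of $\cP$ interesting to $e$, I exploit the fact that if $(x',y')$ is cross-interesting to $e$ then $C(y^\da,y'^\da)>\deg(y^\da)/2$. Thus sampling $\Theta(\log n)$ edges uniformly from $\cC(y^\da)$—itself representable as a constant union of rectangles in the plane—will hit at least one edge whose other endpoint lies in $y'^\da$, with high probability. I will build a range \emph{sampling} structure on top of Overmars--Chazelle range reporting \cite{Overmars88,Chaz88} with the same preprocessing asymptotics as above, supporting a uniform random point from any query rectangle in $O(\log n)$ time. For each sampled endpoint $v'$ I walk the $O(\log n)$ heavy-light paths meeting the root-to-$v'$ path in $\cP$ and verify the (cross- or down-) interest condition for the topmost edge of each path using a constant number of additional range-counting queries. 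By Observation \ref{obs:int-vbotv} and Observation \ref{obs:int-ancestor}, the set of paths declared interesting to $e$ will be of size $O(\log n)$, and the overall cost of this phase is $O(n\log^3 n)$ range operations, i.e., $O(n\log^4 n)$ time.

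Finally I will run Algorithm \ref{alg:tree} using these primitives. The 1-respecting scan costs $O(n\log n)$. By Claim \ref{clm:tree-amortize} each tree edge participates in $O(\log n)$ invocations of the interval procedure (Algorithm \ref{alg:int}); by Claim \ref{clm:interval-gen-comp} each invocation on a path with $\ell$ edges makes $\tO(\ell)$ oracle calls, and each oracle call (an evaluation of $\cost$, hence of $\cut$) costs $O(\log n)$ amortized. Summing across all path pairs gives $O(n\log^4 n)$ time. Adding the $O(m\log m)$ (respectively $O(m\sqrt{\log m})$) preprocessing cost, and noting that we amplify success probability by repeating the sampling phase $O(\log n)$ times, yields the claimed bounds. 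The main obstacle is the sampling-based identification step: I must argue that (a) $\Theta(\log n)$ samples per tree edge suffice, with high probability taken over all $n-1$ edges simultaneously, to detect \emph{every} genuinely interesting path; (b) the number of \emph{candidate} paths verified is only $O(\log n)$ per edge so the verification budget stays within $O(n\log^4 n)$; and (c) the sampling distribution on $\cC(y^\da)$ can be implemented from the plane embedding without any additional per-subtree preprocessing, which is why I decompose $\cC(y^\da)$ into $O(1)$ rectangles in the same global point set rather than maintaining separate data structures per subtree.
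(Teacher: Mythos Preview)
Your overall plan matches the paper's: post-order embedding into the plane, Chazelle's semigroup structure for weighted $\cut(e,f)$ queries, Chan--P\u{a}tra\c{s}cu for the unweighted variant, and sampling from $\cC(y^\da)$ to discover interesting paths, all glued together via the heavy-light decomposition and Claim~\ref{clm:tree-amortize}. The time accounting is also essentially right.

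There is, however, a real gap in the sampling step for \emph{weighted} graphs. You write that because $C(y^\da,y'^\da)>\deg(y^\da)/2$, sampling $\Theta(\log n)$ edges \emph{uniformly} from $\cC(y^\da)$ hits $\cC(y^\da,y'^\da)$ with high probability. That inference conflates weight with edge count: one heavy edge into $y'^\da$ can carry more than half the weight of $\cC(y^\da)$ while being only one out of arbitrarily many edges, so uniform sampling over points has no constant lower bound on success. Your Overmars--Chazelle reporting structure gives you uniform-over-points, not weight-proportional, samples. The paper fixes exactly this: it first uses \Cref{lem:seq:assumption} to cap the maximum edge weight at $O(\lambda)$ and discard edges of weight below $\epsilon\lambda/n^2$, then partitions the surviving edges into $O(\log n)$ geometric weight classes $E_i=\{e:w(e)\in[2^i,2^{i+1})\}$, builds a separate (unweighted) range-sampling structure on each $E_i$, and samples $\Theta(\log n)$ edges from every class. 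An averaging argument over the classes then shows that some $E_i$ has a constant fraction of its edges in $\cC(y^\da,y'^\da)$ by \emph{count}, so uniform sampling within that class succeeds (\Cref{lem:seq:estimate interesting}). Without this bucketing step your proof does not go through in the weighted case; the unweighted case (parallel edges allowed) is fine as stated, since there weight and count coincide.

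A minor correction: Chan--P\u{a}tra\c{s}cu gives $O(\log n/\log\log n)$ per counting query, not $O(\sqrt{\log n})$; only the preprocessing is $O(m\sqrt{\log m})$. This does not affect your final bound because the $n\log^4 n$ term dominates the query cost anyway.
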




\Cref{thm:centralize-main} relies on a certain \textit{2-d range-counting and sampling data-structure} that we state in the following section. For the 2-d range counting data-structure and all other necessary data-strucures introduced in Section \ref{sec:seq-proof}, we will also look at the time complexity averaged out over a sequence of data-structure operations, \textit{i.e.,} we are interested in \textit{amortized} time complexity of data-structure operations.

\subsection{Range counting, sampling \& searching} \label{sec:range-count}

We start this section by introducing data-structure for 2-d orthognal range counting/sampling. We later introduce data-structure for 2-d semigroup range searching.

\begin{definition}[2-d Orthogonal Range Counting/Sampling]
An (online) range counting data structure is an algorithm that take $m$ points in the plane to pre-process in time $t_p$ and support a sequence of the following query operations: \begin{enumerate}
    \item given an axes-aligned rectangle, the algorithm outputs the number of points in the rectangle in time $t_c$ \textit{(range counting)}, 
    \item given an axes-aligned rectangle and an integer $k = O(\log m)$, the algorithm outputs $k'$ distinct random points in $R$ ($k \leq k' = O(k)$) with high probability in time $(k'+1)t_s$ \textit{(range sampling)}.
\end{enumerate}
We denote such a data-structure by $(t_p, t_c, t_s)$-data-structure. We will also be interested in amortized \textit{range reporting} time $t_r$: given an axes-aligned rectangle and an integer $k = O(\log m)$, the algorithm outputs $k'$ distinct points in $R$ ($k \leq k' = O(k)$) in time $(k'+1)t_r$.
\end{definition}

Below we present two  $(t_p, t_c, t_s)$-data-structures---one where we do not allow bit operations and one where we allow it---with their corresponding pre-processing, counting and sampling time.

\begin{theorem} \label{thm:seq-data-structure}
There are $(t_p, t_c, t_s)$-data-structures when given $m$ points in a 2-d plane to pre-process have the following complexities:
\begin{center}
    \begin{tabular}{ |p{3cm}|p{4cm}|p{3.8cm}|  }
 \hline
 \textbf{(Amortized) time} & \textbf{without bit-operations} & \textbf{with bit-operations}\\
 \hline
 $t_p(m)$ & $O(m\log m)$ & $O(m \sqrt{\log m})$\\
 $t_c(m)$ & $O(\log m)$ & $O\left(\frac{\log m}{\log \log m}\right)$\\
 $t_s(m)$ & $O(\log m)$ & $O\left(\frac{\log m}{\log \log m}\right)$\\
 \hline
\end{tabular}
\end{center}

\end{theorem}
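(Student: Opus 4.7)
The plan is to construct the counting and sampling components separately, both layered on a 2-d range tree augmented with per-node cardinalities and indexable sorted arrays of the points.

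For range counting I would invoke off-the-shelf results applied to the semigroup $(\mathbb{N},+)$: Chazelle's 2-d semigroup range-search structure \cite{Chaz88} achieves $t_p=O(m\log m)$ and $t_c=O(\log m)$ in the pointer model, and the Chan--Patrascu counting structure \cite{CP10} achieves $t_p=O(m\sqrt{\log m})$ and $t_c=O(\log m/\log\log m)$ on the word RAM. Both variants also deliver the canonical per-piece counts $n_i$ and their prefix sums within the stated $t_c$ budget.

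For range sampling I would enhance the range-reporting structure of Overmars and Chazelle \cite{Overmars88,Chaz88} (with its word-RAM refinements \cite{BGKS15,MunroNV16}) so that it supports uniform draws from $R\cap P$. Each query rectangle $R$ is decomposed into $L$ canonical subrectangles $R_1,\dots,R_L$ at nodes of the range tree, each node storing its precomputed count $n_i$ and an indexable array of its points; the decomposition is navigated in time matching $t_c$. A single uniform sample from $R\cap P$ is produced by (i) picking an index $i\in[L]$ with probability $n_i/N$ via a prefix-sum walk on the canonical nodes and (ii) returning a uniformly random element of the length-$n_i$ array in $O(1)$ time. Disjointness of the canonical decomposition gives every point of $R\cap P$ marginal probability exactly $1/N$. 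To deliver $k'\in[k,O(k)]$ distinct samples, compare $N$ with $Ck$ for a sufficiently large constant $C$: if $N\le Ck$, enumerate $R\cap P$ by range reporting in $O(\log m+N)$ time and return a uniform random $k$-subset; otherwise, draw $s=\Theta(k)$ i.i.d.\ samples, deduplicate through a hash table, and return the distinct ones.

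The main technical step, and the part I expect to require the most care, is to verify that the sampling branch yields at least $k$ distinct points within the stated $(k'+1)\,t_s$ budget. Uniformity of each draw is immediate from disjointness of the decomposition. For the distinct-count lower bound, in the regime $k=\Theta(\log m)$ in which the theorem is used (see the overview), when $N\ge Ck$ the number of distinct elements among $s$ uniform draws has expectation $N(1-(1-1/N)^s)\ge (1-1/(2C))\,s$, and since the per-point incidence indicators are negatively associated a standard Chernoff bound gives $\Pr[X<k]\le e^{-\Omega(s)}=m^{-\Omega(1)}$. Each draw costs $O(\log m)$ in the pointer model and $O(\log m/\log\log m)$ under bit operations, matching $t_s$; hash-table insertions add $O(1)$ expected time per draw. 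Summing over the $\Theta(k)=O(k'+1)$ draws yields the claimed $(k'+1)\,t_s$ sampling cost, and the preprocessing is dominated by that of the counting and reporting layers.
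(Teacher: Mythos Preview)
Your sampling mechanism is genuinely different from the paper's. You sample directly by drawing a canonical piece with probability proportional to its count and then a random element from that piece's indexable array, repeating i.i.d.\ and deduplicating. The paper instead builds a geometric hierarchy of subsampled point sets $S_0\supset S_1\supset\cdots\supset S_{\log m}$ (each point of $S_{i-1}$ survives to $S_i$ with probability $1/2$), builds a range-\emph{reporting} structure $D_i$ on each $S_i$, and answers a sampling query by reporting all of $R\cap S_i$ starting from the sparsest level until $\Theta(\log m)$ points have been output. This is a black-box reduction from sampling to reporting; uniformity comes from the fact that every point lies in $S_{\ge i}$ with the same probability $2^{-i}$. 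Your approach gives independent samples and avoids building $\log m$ parallel structures; the paper's approach is more modular, since it works with any reporting structure meeting the target $t_r$.

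For the pointer-machine column your approach is fine: a standard fractional-cascading range tree gives $O(\log m)$ canonical pieces with per-piece counts in $O(\log m)$ total time, and your prefix-sum walk and array access finish within $t_s=O(\log m)$.

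For the word-RAM column there is a gap. You assert that the Chan--P\u{a}tra\c{s}cu structure (and the cited reporting refinements) ``deliver the canonical per-piece counts $n_i$ and their prefix sums within the stated $t_c$ budget'' and expose indexable arrays at the canonical nodes. But \cite{CP10} is not a range tree with $O(\log m/\log\log m)$ canonical pieces; it is a high-fan-out structure whose internal decomposition does not obviously give you the interface your sampler needs, and you also need $t_p=O(m\sqrt{\log m})$, which a standard range tree does not achieve. The paper's actual work in this column is precisely to \emph{build} a 2-d reporting structure with $t_p=O(m\sqrt{\log m})$ and $t_r=O(\log m/\log\log m)$: sort the points by $(x,y)$, store the $y$-coordinates in an array, and reduce a 2-d report to a 1-d \emph{select} query on that array (via \cite{BGKS15}) after two 2-d \emph{count} queries (via \cite{CP10}) that locate the correct rank range. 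This explicit reporting-from-rank/select construction is the missing ingredient in your proposal; once you have it, either your sampler or the paper's subsampling reduction finishes the argument.
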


\bsni
For proving \Cref{thm:seq-data-structure}, when bit-operations are not allowed, we use the following result by \cite{Chaz88}.

 \begin{theorem}[\cite{Chaz88}] \label{thm:chazellle}
 There is a 2-d range count/report data-structure which can pre-process $m$ points in time $O(m \log m)$ and answers range count/report query in $O(\log m)$ amortized time which does not require bit operations.
 \end{theorem}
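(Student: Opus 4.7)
}

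The plan is to treat counting and sampling separately, obtaining counting directly from known data structures and then bootstrapping a sampling structure from a range-reporting primitive. For the counting bound $t_c(m)$, I would invoke Theorem \ref{thm:chazellle} as a black box in the non-bit-operation setting; this simultaneously gives $O(m\log m)$ preprocessing and $O(\log m)$ amortized counting (and reporting) on $m$ points. In the bit-operation setting I would cite the Chan–P\u{a}traşcu structure \cite{CP10} (together with \cite{BGKS15,MunroNV16}), which achieves $O(m\sqrt{\log m})$ preprocessing and $O(\log m/\log\log m)$ counting, and also admits reporting with the same per-point cost $t_r = O(\log m/\log\log m)$.

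For the sampling bound $t_s(m)$, I would use the standard level-based subsampling construction. Build $L=\lceil\log m\rceil+1$ levels $S_0\supseteq S_1\supseteq\cdots\supseteq S_L$, where $S_0$ is the full set and $S_i$ is obtained by including each point of $S_{i-1}$ independently with probability $1/2$. On every level $S_i$, build a range-reporting structure of the chosen flavor (Chazelle in the non-bit case, Chan–P\u{a}traşcu in the bit case); on $S_0$ also keep the counting structure. Total preprocessing is $\sum_{i=0}^{L} t_p(|S_i|) = O\bigl(\sum_i m\cdot 2^{-i}\cdot (\log m \text{ or } \sqrt{\log m})\bigr) = O(t_p(m))$ with high probability (by Chernoff bounds on $|S_i|$), matching the claimed $t_p$.

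To answer a sample query for a rectangle $R$ with $k=O(\log m)$ samples requested, I would first invoke the counting structure to learn $N = |S_0\cap R|$ in time $t_c(m)$. If $N\le 4k$, report all of $S_0\cap R$ in time $O((k+1)\,t_r)=O((k+1)\,t_s)$. Otherwise choose the unique level $i$ with $k \le N/2^i < 2k$, i.e.\ $i=\lfloor\log(N/k)\rfloor$, and report every point of $S_i\cap R$ using the reporting structure at level $i$. Since $|S_i\cap R|$ is a sum of independent Bernoullis with mean in $[k,2k]$, Chernoff (using $k=\Omega(\log m)$ implicit in $k=O(\log m)$ together with the high-probability guarantee we want, or by repeating the construction $O(\log m)$ times in parallel to boost) gives $|S_i\cap R|\in[k,O(k)]$ w.h.p., producing $k' = O(k)$ distinct output points in amortized time $O(k'\cdot t_r) = O((k'+1)\,t_s)$. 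Uniformity follows because conditional on $|S_i\cap R| = k'$, the subset $S_i\cap R$ is a uniformly random size-$k'$ subset of $S_0\cap R$, by symmetry of the independent coin flips defining $S_i$.

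The main technical obstacle will be making the concentration step clean: we need $|S_i\cap R|\in[k,O(k)]$ with high probability \emph{simultaneously} across all queries that an algorithm will issue in its lifetime, which is resolved either by a union bound (if the number of queries is polynomial and $k\ge c\log m$ for the appropriate constant $c$) or by building $\Theta(\log m)$ independent copies of the level hierarchy and, for each query, using a copy whose level-$i$ count lies in the desired range (detected via $O(\log m)$ extra counting queries, still within the amortized budget). A secondary subtlety is that the counting structure of \cite{CP10} gives \emph{worst-case} query time while the Chazelle reporting structure is \emph{amortized}; since the theorem only asks for amortized $t_s$, this causes no issue, but I would note it explicitly when combining the pieces. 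Everything else is routine bookkeeping of preprocessing and query costs.
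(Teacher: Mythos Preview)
Your proposal is addressed to the wrong statement. The theorem you were asked about is Theorem~\ref{thm:chazellle}, Chazelle's 2-d range count/report structure, which the paper does not prove at all---it is imported as a black-box citation from~\cite{Chaz88}. Your write-up is instead a proof sketch of Theorem~\ref{thm:seq-data-structure}, and in fact \emph{invokes} Theorem~\ref{thm:chazellle} as an ingredient rather than establishing it.

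That said, comparing what you wrote to the paper's own proof of Theorem~\ref{thm:seq-data-structure}: the architecture is the same (geometric subsampling levels $S_0\supseteq S_1\supseteq\cdots$, a reporting structure on each level, Chernoff to bound the number of output points), but two details differ. First, on a query you count $N=|R\cap S_0|$ and jump directly to level $i=\lfloor\log(N/k)\rfloor$; the paper instead iterates $i$ from the top level downward, reporting $R\cap S_i$ at each step until at least $c\log n$ points have been emitted, never using a counting query. Both are correct and cost the same asymptotically. Second, in the bit-operation regime you simply cite a reporting structure with $t_r=O(\log m/\log\log m)$; the paper builds one explicitly by combining 2-d counting~\cite{CP10} with 1-d rank/select~\cite{BGKS15} (order points by $x$, store the $y$-coordinates in an array, and use two count queries plus select queries to enumerate points in a rectangle). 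That explicit construction is the actual content of the bit-operation half of the proof, so citing it away skips the part the paper considers nontrivial.
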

 
 \begin{remark}
 Chazelle \cite{Chaz88} proved an even stronger result where he showed that reporting $k$ many points takes time $O(k + \log m)$. Also, as we will see in Section \ref{sec:seq-proof}, it is enough for these data-structures to work on 2-d grid (instead of 2-d plane). For this special case, Overmars \cite{Overmars88} has shown the existence of a data-structure with $t_p(m) = O(m \log m)$ and reporting $k$ points take time $O(k + \sqrt{\log m})$ (given that the grid is of size $m \times m$).
 \end{remark}

When we allow bit operations, we can do better. We use the following 1-d range rank/select query data-structure from \cite{BGKS15}: A range rank/select query data-structure, given $m$ numbers in an array $A[1, \cdots, m]$ to pre-process, can support the following types of queries:\begin{enumerate}
    \item a \textit{rank query} is associated with two indices $i,j \in [m]$ and an integer $I$, and outputs the number of integers in $A[i\cdots, j]$ that are smaller than an integer $I$,
    \item a \textit{select query} is associated with two indices $i, j \in [m]$ and a number $k$, and outputs the $k$-th smallest number in $A[i, \cdots, j]$.
\end{enumerate} 
The following two theorems are stated for data-structure where bit operations are allowed.

\begin{theorem}[\cite{BGKS15}]\label{thm:babenko}
There is a 1-d range rank/select query data-structure which can pre-process $m$ points in  time $O(m \sqrt{\log m})$ and answers range rank/select query in time $O(\frac{\log m}{\log \log m})$.
\end{theorem}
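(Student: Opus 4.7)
This is cited as a black-box result of \cite{BGKS15}, so the clean plan is to invoke it directly; still, here is the high-level strategy I would follow to reprove it from scratch. The natural data structure is a multi-ary \emph{wavelet tree} of branching factor $b$ built over the input array $A[1\ldots m]$ (after replacing each entry by its rank in the sorted order, so that the array becomes a permutation of $[m]$). At the root, the indices $1,\ldots,m$ are partitioned according to which of $b$ contiguous value buckets $A[i]$ lies in; recursion on each bucket produces a tree of depth $\Theta(\log m/\log b)$. A rank query $(i,j,I)$ descends this tree, at each level performing one local rank on a packed sequence over alphabet $[b]$; a select query is symmetric.

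To meet the query bound $O(\log m/\log\log m)$, I would take $b=\Theta(\log m)$, so the depth is $\Theta(\log m/\log\log m)$ and each packed entry uses $\Theta(\log\log m)$ bits. A machine word of width $\Theta(\log m)$ then holds $\Theta(\log m/\log\log m)$ entries simultaneously. Augment each level with the standard constant-time rank/select primitive for a sequence over an alphabet of size $b$ (via universal table lookups or in-word bit tricks); this gives $O(1)$ work per level and hence the desired query time.

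For the construction bound $O(m\sqrt{\log m})$, I would build the wavelet tree bottom-up by a sequence of distributions that bucket entries according to their top $\log b$ bits of rank, then their next $\log b$ bits, and so on. Each such distribution is an integer-sorting problem on $m$ keys of $O(\log\log m)$ bits packed in groups of $\Theta(\log m/\log\log m)$ per word, which can be performed in $O(\sqrt{\log m})$ word operations per group using the packed-sorting routines of Albers--Hagerup / Han. Summed over the $\Theta(\log m/\log\log m)$ levels, the total cost is $O(m\sqrt{\log m})$; allocating the per-node rank/select summaries is a linear-time add-on.

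The main obstacle is balancing the two requirements: the branching factor, the bit-width of packed entries, and the word-level primitives for rank/select must be tuned so that the fast packed-sorting construction coexists with an in-word rank/select gadget that answers queries in $O(1)$ per level. This delicate interplay on the word RAM is the core technical content of \cite{BGKS15}, and in the paper at hand we are content to cite the result as stated.
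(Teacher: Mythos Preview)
Your proposal is correct and matches the paper's treatment: Theorem~\ref{thm:babenko} is cited as a black-box result from \cite{BGKS15} with no proof given in the paper, so simply invoking it is exactly what is done. Your additional wavelet-tree sketch is reasonable context but goes beyond what the paper provides.
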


\begin{theorem}[\cite{CP10}]\label{thm:chan-patrascu}
There is a 2-d orthogonal range counting data-structure which can pre-process $m$ points in time $O(m \sqrt{\log m})$ and answers range count query in time $O(\frac{\log m}{\log \log m})$.
\end{theorem}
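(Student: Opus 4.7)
The plan is to reduce 2-d range counting to rank-space dominance counting and then build a hierarchical data structure with sub-logarithmic branching, exploiting word-level parallelism on a word RAM of word size $w = \Omega(\log m)$.

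First, I would normalize coordinates by replacing each $x$- and $y$-coordinate by its rank in $[m]$. On the word RAM, this rank-reduction can be accomplished in $O(m\sqrt{\log m})$ time using integer sorting (e.g.\ Han's algorithm) together with a predecessor structure (e.g.\ a $y$-fast trie or fusion tree) so that each incoming query coordinate can be mapped to its rank in $O(\log m / \log \log m)$ time. Next, I would use inclusion-exclusion to express the count over a rectangle as an alternating sum of four \emph{dominance} counts $D(x, y) = |\{p : p.x \le x,\ p.y \le y\}|$, so it suffices to build a structure supporting dominance queries.

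For dominance queries, the idea is to build a multi-way tree over the $y$-rank axis with branching factor $b = \Theta(\sqrt{\log m})$, so the tree has depth $O(\log m / \log \log m)$. At each internal node $v$, I would list the points whose $y$-rank lies in $v$'s $y$-range \emph{in order of $x$-rank}, and for each such point store the index of $v$'s child (among $b$ possibilities, taking only $\log b = \Theta(\log \log m)$ bits) into which the point falls. These $\Theta(\log \log m)$-bit labels, together with precomputed prefix counts per child along the $x$-order, get bit-packed so that $\Theta(\sqrt{\log m})$ of them fit into a single machine word. A dominance query $D(x, y)$ then walks from the root to the leaf corresponding to $y$: at each node $v$ it performs $O(1)$ word operations on a packed word to count, among the points of $v$ with $x$-rank $\le x$, those going to children strictly smaller than the one containing $y$, then recurses into the correct child. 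Each level costs $O(1)$ word operations, giving a total query time of $O(\log m / \log \log m)$.

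The main obstacle is proving that all of this preprocessing fits in $O(m\sqrt{\log m})$ time. The key observation is that one never needs to build the counting tables explicitly point-by-point: at each level, distributing the $m$ points among the $b$ children and computing the packed prefix-count words can be done by scanning the parent's packed word representation and using $O(1)$-time word tricks per group of $b$ points, costing $O(m/b) \cdot O(b) = O(m)$ time per level. Multiplied by the $O(\log m / \log \log m)$ levels, plus the $O(m\sqrt{\log m})$ sorting step that dominates, the total preprocessing is $O(m\sqrt{\log m})$. The delicate part will be designing the local word gadgets (precomputed tables of size $2^{O(\sqrt{\log m})} = m^{O(1/\sqrt{\log m})} = o(m)$) that implement per-level counting in $O(1)$ word operations while respecting the packed layout handed down from the parent.
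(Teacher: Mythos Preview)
The paper does not prove this theorem; it is quoted as a black-box result from Chan--P\u{a}tra\c{s}cu~\cite{CP10}, so there is no proof in the paper to compare your attempt against.

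On its own merits, your sketch captures the standard shape of such a structure (rank reduction, reduction to dominance counting, a multi-way tree with bit-packed child labels), but your preprocessing analysis has a real gap. You argue $O(m)$ work per level over $O(\log m/\log\log m)$ levels and then assert that the $O(m\sqrt{\log m})$ sorting step ``dominates.'' It does not: $\log m/\log\log m = \omega(\sqrt{\log m})$, so your level-by-level work already sums to $O(m\log m/\log\log m)$, which exceeds the target $O(m\sqrt{\log m})$. With branching factor $b=\Theta(\sqrt{\log m})$ you recover the query time but overshoot the preprocessing bound.

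The actual Chan--P\u{a}tra\c{s}cu argument is more delicate: their contribution is a packed-sorting/shuffling primitive that handles $\Theta(\sqrt{\log m})$ bits of each key per linear scan, so the data are fully processed in $O(\sqrt{\log m})$ linear passes rather than $O(\log m/\log\log m)$ independent per-level passes. Getting both the $O(m\sqrt{\log m})$ build time and the $O(\log m/\log\log m)$ query depth simultaneously requires either a two-level hierarchy or building the per-level packed tables without a separate linear pass per level; your write-up does not supply this, and the ``sorting dominates'' claim papers over precisely the step that is the point of their paper.
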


\begin{proof} [Proof of Theorem \ref{thm:seq-data-structure}]
First we show how to construct such a data-structure when bit-operations are not allowed. The main idea is to show how to construct a range-sampling data-structure from a range-reporting data-structure from \Cref{thm:chazellle}. Subsequently, we show how to construct a more efficient range-reporting data-structure when bit-operations are allowed. We will show how to construct an efficient range-reporting data-structure from \Cref{thm:babenko} and \ref{thm:chan-patrascu}.

\paragraph*{Without bit-operations.} We show the proof by constructing the (randomized) data-structure and analysing its correctness. We create a data-structure $D$ as in Theorem \ref{thm:chazellle}. The time required for pre-processing $m$ points is $t_p(m) = O(m \log m)$ and the time required for range counting is $t_c(m) = O(\log m)$. The rest of the proof is to show $t_s(m) = O(\log m)$ as well.
    
    To this end, we show that, given a 2-d range-reporting data-structure with pre-processing time $t_p(m) = O(m \log m)$ and range-reporting time $t_r$, we can construct a range-sampling data-structure with pre-processing time $t_p(m) = O(m \log m)$ and $t_s = t_r$. The proof then concludes by observing that $t_r(m) = O(\log m)$ for $D$ by \Cref{thm:chazellle}. The reduction is described as follows:
    
    \begin{itemize}
	\item {\bf Pre-processing:}
	We define the following set of points $S_0,  S_1, \ldots,  S_k$, where $k=\log m$ in the following way: Let $S_0 = S$. For $i$ starting from 1 to $k,$ we define $S_i \subseteq S_{i-1}$ by copying each point of $S_{i-1}$ to $S_i$ with probability $1/2$.  For every $0\leq i\leq k$, we also do the following: We build a 2-d range reporting data structure $D_i$ for each point set $S_i$. This takes time $\sum_{i=0}^k t_p(|S_i|)$ where $t_p$ is the pre-processing time of the range reporting data structure. This sums up to  $\sum_{i=0}^k O(|S_i|\log |S_i|)=O(m\log m)$.

	\item {\bf Query($R, k$):} Let $i=k$. We query the range reporting data structure $D_i$ for points in $R\cap S_i$ and output the reported points. If less than $c\log n$ points were reported so far, we repeat with $i\leftarrow i-1$. We stop when all points in $R$ are reported (when $i=0$).  $D_i$ reports each point in  $t_r(|S_i|) = O(\log |S_i|)$ amortized time. Note that even when $R$ contains no point, we need an extra $O(\log m)$ time to iterate over all values of $i$. (This can be avoided using the range counting data structure, but this refinement does not affect our overall running time so we do not include it here.)\danupon{Perhaps things can be stated easier if we also use the counting data structure here?}
	
	\end{itemize}

\noindent\underline{Analysis:} For any $i\leq k$, let $S_{\geq i}=\bigcup_{j\geq i} S_j$. Observe that, for every $i$, every point is in $S_{\ge i}$ with probability $1/2^i$. Since we answer the query by reporting all points in   $S_{\ge i}$ for some $i$, every point is output with the same probability. Now we bound the number of points we output. Let $m'$ be the number of points in $R$ in total. Let $c\geq 2$ be a large enough constant whose value is independent from the input size. Let $i'$ be such that $m'/2^{i'}\in (ck, 2ck]$. The expected number of points in $R$ that is in $S_{\geq i'}$ is $E[|R\cap S_{\geq i'}|]=m'/2^{i'}\geq ck\geq c\log m$; so by the standard Chernoff's bound we have that $|R\cap S_{\geq i'}|\in [ck/2, 4ck]$  with (very) high probability. If this is the case, the query algorithm will stop when it reports points in $R\cap S_{\geq i'}$ at the latest, implying that it reports at most  $|R\cap S_{\geq i'}|=O(k)$ points. 	\danupon{TO DO: Spell out details of Chernoff's bound?}	
The proof follows by noting $t_r(m) = \max_i O(\log|S_i|) = O(\log m)$.

\paragraph*{With bit-operations.} Similar to what we did previously, we show the proof by constructing the (randomized) data-structure and analysing its correctness. We will create a data-structure $D$ as in Theorem \ref{thm:babenko} and another data-structure $D'$ as in \Cref{thm:chan-patrascu}. The time required for pre-processing for both $D$ and $D'$ is $O(m \sqrt{\log m})$. This immediately gives $t_p(m) = O(m \sqrt{\log m})$. Also, for range counting, we use $D'$. By \Cref{thm:chan-patrascu}, $t_c(m) = O(\frac{\log m}{\log \log m})$. Previously, we have seen that, given a range-reporting data-structure with reporting time $t_r$, we can design a range-sampling data-structure with sampling time $t_s = t_r$. A similar argument also holds here as well\footnote{One has to be careful about pre-processing time which is now $\sum_i O(|S_i|\sqrt{\log |S_i|}) = O(m \sqrt{\log m})$ as $\sqrt{ \log(x)}$ is an increasing function w.r.t. $x$.}. Hence, the rest of the proof is to show that $t_r(m) = O\left(\frac{\log m}{\log \log m}\right)$.

	Concretely, we show how to construct a range-reporting data-structure with pre-processing time $t_p(m) = O(m \sqrt{\log m})$ and range-reporting time $t_r = O(\frac{\log m}{\log \log m})$ from $D$ and $D'$. Note that $D$ is a 1-d data-structure. Hence we need to map the points 2-d orthogonal plane to a single dimension before $D$ can pre-process it. For this, we create an array $\cA$ of size $m$. For points $e = (e^{(x)},e^{(y)})$ in the 2-d plane, where $e^{(x)}$ and $e^{(y)}$ are the $x$ and $y$ coordinates of $e$, we order them with respect to the $x$-coordinate, and for points with same $x$-coordinate, we order them by $y$-coordinate. Given this ordering, we have a one-to-one correspondence between $m$ points in the 2-d plane and the entries of $\cA$: The entry $\cA[i]$ corresponds to an unique point $e_i$ in the 2-d plane. We store the $y$-coordinate of $e_i$ in $\cA[i]$, i.e., $\cA[i] = e_i^{(y)}$.
	
	Now we demonstrate how to report one point in a given axes-aligned rectangle $R$. Let us denote the columns of $R$ by $\{R_1, \cdots, R_t\}$ where $R_i$ resides in the column $C_i$ of the 2-d plane. We also consider two more (related) axes-aligned rectangles, and for that we need to introduce the following notion of \textit{dominance}. A point $e$ in the 2-d plane \textit{dominates} another point $e'$ if $e^{(x)} \geq e'^{(x)}$ and $e^{(y)} \geq e'^{(y)}$. Extending this notion, a rectangle $R_1$ \textit{dominates} a disjoint rectangle $R_2$ if each point in $R_1$ dominates all points in $R_2$. The two other rectangles we consider are as follows: (i) Let $\tilde R$ be the ambient rectangle formed by taking union of columns $\{C_1, \cdots, C_k\}$, and (ii) $\bar R$ to be the biggest sub-rectangle of $\tilde R$ which is dominated by and disjoint from $R$. The following two observations are immediate: (i) The points in the rectangle $\tilde R$ corresponds to a contiguous range in $\cA$, and (ii) w.r.t. the ordering of $\cA$, the entries in $\cA$ corresponding to the points of $\bar R$ are less than that of $R$, i.e., $\cA[i] < \cA[j]$ if $e_i \in \bar R$ and $e_j \in R$. \begin{enumerate}
	    \item We query $D'$ to find the count of points in $\bar R$. Let the number of such points is $c$.
	    
	    \item We also query $D'$ to find the count of points in $R \cup \bar R$. Let this count be $c'$. This means that $c'-c$ many points reside in $R$. If $c' - c = 0$, we give up and return 0.
	    
	    \item Otherwise, we query $D$ in the range corresponding to $\tilde R$ to select the smallest $(c+j)$-th entry of $\tilde R$ for all $j \leq c'-c$. Note that these points must belong inside $R$ if $c' - c \neq 0$. Finding the range corresponding to $\tilde R$ can be achieved by making 2 count-queries to $D'$: one to find out the number of points on the left side of $\tilde R$ which gives us the starting point of the range; and the other to find out the number of points on the right side of $\tilde R$ which gives us the ending point of the range.
	\end{enumerate} 
Hence, to report $k$ points in $R$, the total query time is $O\left((k + 4)\frac{\log m}{\log \log m}\right)$. This gives an amortized query time of $t_r(m) = O(\frac{\log m}{\log \log m})$ as required. This completes the proof of \Cref{thm:seq-data-structure}.
\end{proof}

\begin{remark} \label{rem:multiset} 
Even though the above proof of Theorem \ref{thm:seq-data-structure} can deal with multiple points in a single coordinate of the 2-d plane, we do not need such strong guarantee in our work. In what follows, we will have at most a single point in a single coordinate of the 2-d plane.
\end{remark}

Next we introduce semigroup range searching data-structure. This is an extension of range counting data-structure where we can assume that each point in the 2-d plane has a weight associated with it, and given an axes-aligned rectangle we want to know that total weight of the points inside the rectangle (instead of counting just the number of points inside it).

\begin{definition}[2-d Semigroup Range Searching]
Consider a function $f$ from points on a 2-d plane to a commutative semi-group $(G, +)$. An (online) semi-group range searching data structure is an algorithm that take $m$ points in the plane to pre-process in time $t_p^{\sg}$ and support a sequence of the following query operation: \begin{itemize}
    \item given an axes-aligned rectangle $R$, the algorithm outputs $\sum_{e \in R}f(e)$ where the summation is the semigroup operation.
\end{itemize}
We denote the query time as $t_c^{\sg}$, and call such a data-structure as $(t_p^{\sg}, t_c^{\sg})$-data-structure.
\end{definition}

 We will make use of the following  $(t_p^{\sg}, t_c^{\sg})$-data-structure (for handling weighted graph). This data-structure does not require bit operations.

\begin{theorem}[\cite{Chaz88}] \label{thm:chazellle-semigroup}
 There is a  $(t_p^{\sg}, t_c^{\sg})$-data-structure in which can pre-process $m$ points in time $t_p^{\sg}(m)=O(m \log m)$ and answers range search query in $t_c^{\sg}(m)=O(\log^2 m)$ amortized time.
 \end{theorem}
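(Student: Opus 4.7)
The plan is to build a two-level range tree (often called a \emph{range tree with fractional cascading} in the group case, but here we only need a vanilla version, since the $O(\log^2 m)$ bound is weaker than Chazelle's optimal $O(\log m)$). Crucially, since $(G,+)$ is only a semigroup we cannot use prefix-sum subtraction; instead every range query must be decomposed into canonical subranges whose sums are precomputed.

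First, I would sort the $m$ points by $x$-coordinate and build a balanced binary \emph{primary tree} $T_x$ on the $x$-coordinates. Each internal node $u$ of $T_x$ corresponds to a contiguous $x$-slab containing some subset $P_u$ of the input points. At $u$, I would build a \emph{secondary structure} $S_u$: a balanced segment tree over the $y$-coordinates of the points in $P_u$, where each node of $S_u$ stores the semigroup sum of $f$ over all points in its $y$-subrange. Since $\sum_u |P_u| = O(m \log m)$ (each point appears in one node per level and there are $O(\log m)$ levels) and each $S_u$ can be built in $O(|P_u|)$ time after the points have been sorted by $y$ (which can be done once globally and propagated), the total preprocessing time is $O(m \log m)$, matching the stated $t_p^{\sg}(m)$.

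To answer a query on an axis-aligned rectangle $R = [x_1,x_2]\times[y_1,y_2]$, I would first descend $T_x$ to identify the $O(\log m)$ \emph{canonical nodes} whose $x$-slabs exactly partition $[x_1,x_2]$. For each such canonical node $u$, I would then query $S_u$ with the $y$-interval $[y_1,y_2]$: this decomposes into $O(\log m)$ canonical $y$-intervals whose precomputed semigroup sums can be combined using $+$. Summing the contributions from all $O(\log m)$ canonical $x$-nodes gives the total semigroup sum over $R$ in time $O(\log^2 m)$, which is the claimed $t_c^{\sg}(m)$.

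The main subtlety, as noted, is the absence of inverses: the standard two-dimensional prefix-sum trick used for range counting is unavailable, so one cannot reduce a rectangle query to $O(1)$ prefix lookups. This forces the canonical-interval decomposition at both levels and is precisely why the query bound degrades from $O(\log m)$ (achievable with group structure via Chan--P\u{a}tra\c{s}cu-style techniques) to $O(\log^2 m)$ here. A minor additional point is handling the amortization claim: if the $y$-sorted list at each node of $T_x$ is produced by merging children's sorted lists bottom-up, one avoids re-sorting and keeps the preprocessing within $O(m \log m)$ total time rather than $O(m \log^2 m)$.
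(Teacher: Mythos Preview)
The paper does not prove this theorem at all: it is stated as a black-box citation to Chazelle~\cite{Chaz88}, with no accompanying argument. Your proposal supplies a standard two-level range tree construction, which is correct and achieves the stated $O(m\log m)$ preprocessing and $O(\log^2 m)$ query bounds; the observation that the absence of inverses in the semigroup forces canonical-interval decomposition at both levels (rather than prefix-sum subtraction) is exactly the right point. Since the paper offers nothing to compare against, there is no divergence in approach to discuss---you have simply filled in a proof the authors chose to outsource.
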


\subsection{Main algorithm: Proof of Theorem \ref{thm:centralize-main}} \label{sec:seq-proof}

In this section, we will assume the existence of a $(t_p, t_c, t_s)$-data-structure as in \Cref{thm:seq-data-structure} and a $(t_p^{\sg}, t_c^{\sg})$-data-structure as in \Cref{thm:chazellle-semigroup}, and will prove \Cref{thm:centralize-main}. First, we show a few operations we can perform on a graph which are immediate from such data-structures.

\begin{lemma}\label{thm:seq:convert to point}
For any spanning tree $T$ of an $n$-node $m$-edge unweighted graph $G$, there is an algorithm that, after a $t_p(m)$-time pre-processing, can answer the following queries in $t_c(m)$ amortized time:  
\begin{enumerate}
    \item for any vertex $v$, $\deg(v^\da)$,
    \item for any two vertices $u$ and $v$, $C(u^\da, v^\da)$, and
    \item for any two vertices $u$ and $v$, $C(u^\da, V - v^\da)$.
\end{enumerate}
When $G$ is weighted, then these operations need $t_p^{\sg}(m)$-time for pre-processing and $t_c^{\sg}(m)$-time to answer.
\end{lemma}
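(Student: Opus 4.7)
The plan is to reduce each query to a constant number of rectangle queries on the 2-d data structures of \Cref{thm:seq-data-structure} (unweighted) and \Cref{thm:chazellle-semigroup} (weighted). The key idea is the standard post-order encoding of a rooted tree: run a DFS on $T$ from a fixed root and let $p(v)$ denote the post-order number of $v$. A well-known property is that, for every vertex $v$, the set $\{p(w) : w \in v^\da\}$ forms a contiguous interval $I_v = [l_v, r_v]$, and all the endpoints $l_v, r_v$ can be computed during the DFS in total $O(n)$ time.

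Next, encode each edge $e = (a,b) \in E(G)$ by the \emph{two} planar points $(p(a), p(b))$ and $(p(b), p(a))$, each carrying the value $w(e)$ in the weighted case (or $1$ in the unweighted case). Build the $(t_p, t_c, t_s)$-data structure of \Cref{thm:seq-data-structure} (respectively the $(t_p^{\sg}, t_c^{\sg})$-semigroup data structure of \Cref{thm:chazellle-semigroup} over the commutative semigroup $(\mathbb{R}_{\geq 0}, +)$) on this set of $2m$ weighted points; by monotonicity, the preprocessing cost is $O(t_p(m))$ (respectively $O(t_p^{\sg}(m))$).

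Each of the three queries now reduces to $O(1)$ rectangle queries. To compute $C(u^\da, v^\da)$ (used when $u \bot v$, in which case $I_u \cap I_v = \emptyset$), issue the single rectangle query on $I_u \times I_v$; thanks to the symmetric insertion and the disjointness $I_u \cap I_v = \emptyset$, each edge between $u^\da$ and $v^\da$ contributes its weight exactly once, so the returned value equals $C(u^\da, v^\da)$. To compute $\deg(v^\da)$, issue the rectangle query on $I_v \times ([1,n] \setminus I_v)$; since the complement of a contiguous interval is the union of at most two intervals, this decomposes into at most two rectangle queries, and each edge with exactly one endpoint in $v^\da$ contributes exactly one point in the queried region. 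Similarly, for $C(u^\da, V - v^\da)$ (used when $u^\da$ and $V - v^\da$ are disjoint, i.e., $u^\da \subseteq v^\da$), issue the query on $I_u \times ([1,n] \setminus I_v)$, again at most two rectangle queries.

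The only subtle point is verifying that the symmetric-insertion trick combined with the disjointness of the two sides of each rectangle produces no double counting; this is routine once one observes that a point $(p(a), p(b))$ lands in $A \times B$ precisely when $a$ lies in the preimage of $A$ and $b$ in the preimage of $B$, and we always invoke the rectangles with disjoint $A, B$. Since every lemma query reduces to $O(1)$ rectangle queries, the amortized time is $O(t_c(m))$ in the unweighted setting and $O(t_c^{\sg}(m))$ in the weighted setting, matching the statement. No real obstacle remains beyond carefully handling the complement-of-interval bookkeeping.
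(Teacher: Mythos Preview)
Your proof is correct and follows essentially the same approach as the paper: post-order label the vertices so that every $v^\da$ is a contiguous interval, map edges to planar points, and reduce each query to $O(1)$ rectangle queries. The only cosmetic difference is that you insert each edge as two symmetric points $(p(a),p(b))$ and $(p(b),p(a))$, whereas the paper inserts each edge once (under a fixed orientation convention) and therefore sometimes queries the transposed rectangle instead; both variants yield the claimed bounds.
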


\begin{proof}
We sequentialize the vertices in $V$ by the trace of a post-order traversal of $T$. Let us denote this order by $\prec$. We will also use this notation to define $S \prec T$ for set of vertices $S$ and $T$ if all vertices in $S$ occur before any vertex of $T$ in the ordering. A post-order traversal order has the following property: For any vertex $v$ with children $u_1 \prec \cdots \prec u_k$, the ordering guarantees that $u_1^\da \prec \cdots \prec u_k^\da$. This immediately implies that \begin{enumerate}
    \item for all $u \in V$, $u^\da$ is a continuous range in the ordering, and
    \item for all $v \in V$, $V - v^\da$ comprises of two continuous ranges in the ordering.
\end{enumerate}
So we plot $V$ in this order along both axes, and, for all edge $e$ in the graph $G$, put the weight of the edge $e=(u,v)$ at the coordinate $(u,v)$ on the plane. (We need to account for $e$ only once between coordinates $(u,v)$ and $(v,u)$.) By formulating $\deg(v^\da) = C(v^\da, V - v^\da)$, we see that each of the cut-queries in the claim can be answered by at most 2 range-count queries to the $(t_p, t_c, t_s)$-data-structure if the edges have weight 1 each, or at most 2 queries to the $(t_p^{\sg}, t_c^{\sg})$-data-structure if the edges are weighted.
\end{proof}

Now we look at Algorithm \ref{alg:bipartite-int} and \ref{alg:int}, and see how they can be implemented efficiently if we have a $(t_p, t_c, t_s)$-data-structure. We start with the following lemma.

\begin{lemma} \label{lem:seq-pairing}
The exists an algorithm that supports the following operations. 
\begin{itemize}
    \item Pre-process($T, G$): The algorithm is given a spanning tree $T$ of an $n$-node $m$-edge unweighted graph $G$ to pre-process. 
    \item Query($A$, $B$): The algorithm is given two disjoint lists (arrays) of edges in $T$, denoted by $A=\{a_1, \ldots, a_x\}$ and $B=\{b_1, \ldots, b_y\}$, where $a_1, \dots, a_x$ (respectively  $b_1, \ldots, b_y$) appear in a path in $T$ in order; i.e. $a_1$ appears first in the path and $a_x$ appears last. (Note that, for any $i$, $a_i$ might not be adjacent to $a_{i+1}$.) The algorithm then outputs $\min_{(a_i, b_j)\in A\times B} \cost(a,b)$. 
\end{itemize}
After $t_p$ pre-processing time, the algorithm takes $O(\ell \log \ell \cdot t_c(m))$ amortized time to answer each query, where $\ell=|A|+|B|$. For weighted graph, the pre-processing time is $t_p^{\sg}$, and query time is $O(\ell \log \ell \cdot t_c^{\sg}(m))$.
\end{lemma}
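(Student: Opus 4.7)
My plan is to reduce the query to running Algorithm~\ref{alg:bipartite-int} on the matrix $M[i,j]:=\cost(a_i,b_j)=\cut(a_i,b_j)$, with $A$ indexing rows in the order $a_1,\dots,a_x$ and $B$ indexing columns in the order $b_1,\dots,b_y$. Each probed entry of $M$ will be evaluated on the fly using a constant number of calls to the data structure built by Lemma~\ref{thm:seq:convert to point} during preprocessing, which costs $t_p(m)$ (or $t_p^{\sg}(m)$ in the weighted case) and subsequently supports $\deg(u^\da)$, $C(u^\da,v^\da)$, $C(v^\da,V\setminus u^\da)$ in amortized $t_c(m)$ (resp.\ $t_c^{\sg}(m)$) time.

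For a single probe of $M[i,j]$ with $a_i=(x,y)$ and $b_j=(x',y')$ ($y,y'$ the lower endpoints), I would use the standard cut identity
\[
\cut(a_i,b_j)=\begin{cases}\deg(y^\da)+\deg(y'^\da)-2\,C(y^\da,y'^\da), & a_i\bot b_j,\\ \deg(y^\da)+\deg(y'^\da)-2\,C(y'^\da,V\setminus y^\da), & y'\in y^\da,\end{cases}
\]
and symmetrically when $y\in y'^\da$. Each case costs $O(1)$ data-structure calls, giving $O(t_c(m))$ (resp.\ $O(t_c^{\sg}(m))$) per probe. Feeding $(A,B)$ into Algorithm~\ref{alg:bipartite-int} and supplying entries on demand, Claim~\ref{clm:interval-gen-comp} tells us that only $O(\ell\log\ell)$ entries are probed, yielding the claimed $O(\ell\log\ell\cdot t_c(m))$ (resp.\ $O(\ell\log\ell\cdot t_c^{\sg}(m))$) query time.

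The real work is correctness: Algorithm~\ref{alg:bipartite-int} returns the correct column minima only under the Monge-type monotonicity of Claim~\ref{clm:delta-increasing}. Here $A$ and $B$ each lie along a tree-path in $T$ in the specified order, but possibly on two \emph{different} paths that meet at some vertex. Contracting every tree-edge outside $A\cup B$ reduces the problem either to a genuine Bipartite Interval instance (when the contracted spanning tree is a single path, e.g.\ when $A$ and $B$ lie on a common tree-path), in which case Claim~\ref{clm:delta-increasing} applies verbatim, or to a ``V-shape'' consisting of two paths sharing one vertex (when $A$ and $B$ come from distinct heavy-light paths). The main obstacle I expect is the latter case: I would replay the three-type analysis in the proof of Claim~\ref{clm:delta-increasing}---partitioning non-tree edges into $L$-edges, $R$-edges, and crossing edges with respect to the meeting vertex---and verify that each $\Delta_j$ remains monotone as edges of each type are inserted. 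Once this extension of Claim~\ref{clm:delta-increasing} is pinned down, correctness follows and the running-time bound is immediate from Claim~\ref{clm:interval-gen-comp} composed with the per-probe cost above.
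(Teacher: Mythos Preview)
Your proposal is correct and follows essentially the same approach as the paper: preprocess via Lemma~\ref{thm:seq:convert to point}, then answer each query by running Algorithm~\ref{alg:bipartite-int} on the $|A|\times|B|$ cost matrix with entries evaluated on the fly, counting $O(\ell\log\ell)$ probes at $t_c(m)$ (resp.\ $t_c^{\sg}(m)$) each. The paper's own proof is in fact terser than yours---it just says to ``pretend the edges which do not take part in $A$ or $B$ as collapsed'' and invokes Algorithm~\ref{alg:bipartite-int} directly, without re-verifying Claim~\ref{clm:delta-increasing} for the V-shape case (that issue is acknowledged only as a one-line aside in the technical overview); your plan to replay the $L$/$R$/crossing analysis for that case is the right way to make the argument airtight.
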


\begin{proof}
The algorithm follows the schematic of Algorithm \ref{alg:bipartite-int}. The only difference is that the set $A$ and $B$ are not contiguous sequence of edges. But we can pretend the edges which does not take part in $A$ or $B$ as \textit{collapsed}. We first look at the case where the graph is unweighted.
\begin{enumerate}
    \item The algorithm prepares the data-structure from Lemma \ref{thm:seq:convert to point},
    \item Given any query $(A,B)$, the algorithm runs Algorithm \ref{alg:bipartite-int}. In line \ref{algline:min-find} of Algorithm \ref{alg:bipartite-int}, it needs to issue a total of $O(x)$ many 2-respecting cut queries (\textit{i.e.,} queries of the form $C(u^\da, v^\da)$) in each level of recursion. From Theorem \ref{thm:seq:convert to point}, such queries can be answered in $t_c(m)$ amortized time. Hence, in each level of recursion, the algorithm needs time $O(x \cdot t_c(m))$. There are at most $\log x$ levels of recursion, which implies the total running time is $O(x \log x \cdot t_c(m))$.
\end{enumerate}
As $\ell > x$ we have the required running time. The case of weighted graph can be handled similarly by using $(t_p^{\sg}, t_c^{\sg})$-data-structure from \Cref{thm:seq:convert to point}.
\end{proof}

We look at a related lemma for implementing the {\sc Interval} sub-routine, but this time with contiguous set of edges.
\begin{lemma} \label{lem:seq-2-resp-path}
There exists an algorithm which, given an unweighted graph $G$ and and a spanning tree $T$, can pre-process in time $t_p(m)$ and can answer the following question: Given a path $p \in T$ of length $\ell$, the algorithm can output cut-value the smallest exact-2-respecting cut which respects two edges from  in time $O(\ell \log^2 \ell \cdot t_c(m))$. For weighted graph, the pre-processing time is $t_p^{\sg}(m)$, and query time is $O(\ell \log^2 \ell \cdot t_c^{\sg}(m))$.
\end{lemma}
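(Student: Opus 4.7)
The plan is to realize Algorithm~\ref{alg:int} essentially verbatim, using Lemma~\ref{lem:seq-pairing} as a black box for each invocation of the {\sc Bipartite-interval} subroutine.

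During pre-processing we run the pre-processing of Lemma~\ref{thm:seq:convert to point}/Lemma~\ref{lem:seq-pairing} once (cost $t_p(m)$ in the unweighted case, $t_p^{\sg}(m)$ in the weighted case). When a query path $p$ of length $\ell$ arrives, let $e_1, e_2, \ldots, e_\ell$ be its edges in path order. We imagine every edge of $T$ that is not on $p$ as contracted; this leaves the cost function $\cut(\cdot,\cdot)$ on pairs of edges of $p$ unchanged, and it is precisely the setting of the Interval problem on a path of $\ell$ edges. We now execute Procedure {\sc Interval} of Algorithm~\ref{alg:int} on the sequence $e_1, \ldots, e_\ell$: at the outer level we split into $L=\{e_1,\ldots,e_{\ell/2}\}$ and $R=\{e_{\ell/2+1},\ldots,e_\ell\}$, invoke {\sc Bipartite-interval}$(L,R)$ using the algorithm of Lemma~\ref{lem:seq-pairing} (note the two lists $L$ and $R$ are exactly of the type required there --- edges appearing in order along a path in $T$), then recurse on $(L^1,R^1)$ and $(L^2,R^2)$. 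The minimum returned across all invocations is the answer.

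For the running time, the outer recursion of {\sc Interval} has depth $O(\log \ell)$, and on level $d$ the outer procedure is invoked on $2^d$ disjoint sub-paths whose lengths sum to $\ell$. Each of these invocations makes one top-level call to {\sc Bipartite-interval} on disjoint sub-lists summing in total to the sub-path length. By Lemma~\ref{lem:seq-pairing}, a single {\sc Bipartite-interval} call on lists of total length $\ell'$ costs $O(\ell' \log \ell' \cdot t_c(m))$, so the total cost of all {\sc Bipartite-interval} calls originating at level $d$ is at most $O\!\left(\sum_i \ell_i \log \ell_i \cdot t_c(m)\right) = O(\ell \log \ell \cdot t_c(m))$ since $\sum_i \ell_i \le \ell$ and $\log \ell_i \le \log \ell$. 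Summing over the $O(\log \ell)$ levels of the outer recursion yields $O(\ell \log^2 \ell \cdot t_c(m))$ total amortized time, matching the claimed bound. The weighted case is identical with $t_c(m)$ replaced by $t_c^{\sg}(m)$ and using the data structure of Theorem~\ref{thm:chazellle-semigroup} throughout.

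I do not anticipate a serious obstacle: the correctness is already packaged inside Algorithm~\ref{alg:int} and its analysis in Section~\ref{sec:interval}, and Lemma~\ref{lem:seq-pairing} has been tailored to accept precisely the kind of (ordered but not necessarily adjacent) edge lists that arise from the sub-problems $(L^1,R^1)$ and $(L^2,R^2)$ after the contraction view. The only point worth double-checking is the ``summing logs'' step at each level --- using $\log\ell_i\le\log\ell$ keeps the recursion clean and gives the stated $\log^2\ell$ factor rather than a worse $\log^3\ell$ that a na\"ive per-call analysis might suggest.
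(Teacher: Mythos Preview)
Your proposal is correct and follows essentially the same approach as the paper: both implement Algorithm~\ref{alg:int} on the path $p$ after collapsing the other tree edges, using the range data structure of Lemma~\ref{thm:seq:convert to point} to answer each $\cut(\cdot,\cdot)$ query in $t_c(m)$ (resp.\ $t_c^{\sg}(m)$) time. The only difference is presentational --- the paper simply cites the total of $O(\ell\log^2\ell)$ cut queries coming out of Claim~\ref{clm:interval-gen-comp}, whereas you route the inner {\sc Bipartite-interval} calls through Lemma~\ref{lem:seq-pairing} and redo the outer-level summation explicitly; both arrive at the same $O(\ell\log^2\ell\cdot t_c(m))$ bound.
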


\begin{proof}
This algorithm follows the schematic of Algorithm \ref{alg:int}. As before, we can pretend that edges of $T$ not in $p$ to be \textit{collapsed}. For the sake of efficiency, the algorithm stores the edges of $p$ in sequence in an array. Also, as before, we first look at the case of unweighted graphs.
\begin{enumerate}
\item The algorithm  prepares the data-structure from Lemma \ref{thm:seq:convert to point},
\item Given any path $p \in T$, the algorithm runs Algorithm \ref{alg:int} on $p$. The algorithm needs to issue $O(\ell \log^2 \ell)$ many exact-2-respecting cut queries (\textit{i.e.,} queries of the form $C(u^\da, v^\da)$) to the data-structure. Hence the total running time is $O(\ell \log^2 \ell \cdot t_c(m))$.
\end{enumerate}
The case of weighted graphs can be handled similarly by using $(t_p^{\sg}, t_c^{\sg})$-data-structure from \Cref{thm:seq:convert to point}.
\end{proof}

Now we look at the most crucial part of our sequential implementation: How to find the set $\cP^{cross}_e$ and $\cP^{down}_e$ efficiently for each $e \in T$? To this end, we introduce a random sampling lemma below. But, before stating the lemma, we need to an assumption on the edge weights of $G$, which is as follows:
	

\begin{lemma}[Initial assumption]\label{lem:seq:max weight}\label{lem:seq:assumption}
By spending linear time, we can assume that the input graph $G$ has no parallel edge and the maximum edge weight is at most $3\lambda$ where $\lambda$ is the min-cut of $G$. 
\end{lemma}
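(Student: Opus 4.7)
The argument has two independent pre-processing reductions, each executable in $O(m+n)$ time.

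\emph{Removing parallel edges.} Using bucket-sort / radix-sort on the pairs $(u,v)$ representing the edge endpoints (with $u<v$), we can in $O(m+n)$ time group together all edges sharing the same endpoint pair and replace each such group by a single edge whose weight equals the sum of the weights in the group. This operation preserves the weight of every cut of $G$, hence preserves $\lambda$ and every min-cut; moreover it can only decrease $m$, so all subsequent steps remain linear.

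\emph{Bounding edge weights by $3\lambda$.} The key observation is: if an edge $e$ has weight $w(e) > \lambda$, then the two endpoints of $e$ must lie on the same side of every min-cut (otherwise that cut would have weight $\ge w(e) > \lambda$, contradicting minimality). Hence contracting $e$ preserves $\lambda$ and maps min-cuts of $G$ to min-cuts of the contracted graph bijectively. So the plan is:
(i) compute, in $O(m+n)$ time, an estimate $\tilde\lambda$ with $\lambda \le \tilde\lambda \le 3\lambda$;
(ii) contract every edge $e$ with $w(e) > \tilde\lambda$;
(iii) re-run the parallel-edge cleanup of the previous paragraph on the contracted multigraph (which can only have fewer edges than we started with, so this is again linear).
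After step (iii) every remaining edge has weight $\le \tilde\lambda \le 3\lambda$, as required.

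For step (i) we invoke a standard linear-time constant-factor approximation of weighted edge-connectivity (e.g.\ a Matula-type algorithm applied to the sparse $k$-edge-connectivity certificate of Nagamochi--Ibaraki, or equivalently the linear-time approximation implicit in \cite{Kar00}). Any of these yields, in $O(m+n)$ time, a value $\tilde\lambda$ satisfying $\lambda \le \tilde\lambda \le c\lambda$ for an absolute constant $c$. If $c \le 3$ we are done immediately; otherwise we can either absorb the constant into the statement of the lemma (the rest of the paper only uses that maximum weight is $O(\lambda)$), or tighten the approximation: once we have a $c$-approximation, we can iterate the contraction step, since after contracting edges of weight $>\tilde\lambda$ the resulting graph still has min-cut $\lambda$ but we can re-estimate and shrink $\tilde\lambda$ toward $\lambda$. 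In particular, a single call to a $(2{+}\eps)$-approximation with $\eps \le 1$ suffices.

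\emph{Anticipated obstacle.} The only nontrivial ingredient is the linear-time constant-factor min-cut approximation on \emph{weighted} graphs; extending Matula's unweighted $(2{+}\eps)$-approximation requires either the Nagamochi--Ibaraki weighted sparse certificate or a direct reduction (e.g.\ scaling edge weights to powers of two and processing level by level). Everything else---hashing out parallel edges, contracting heavy edges, and bounding the weight by $3\lambda$ afterward---is routine and clearly linear time.
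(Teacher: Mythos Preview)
Your proposal is correct and follows essentially the same approach as the paper: merge parallel edges, invoke Karger's linear-time weighted $(2+\eps)$-approximation (a modification of Matula's algorithm) to obtain $\tilde\lambda\in[\lambda,3\lambda]$, and contract all edges of weight exceeding $\tilde\lambda$. The paper's proof is terser---it simply cites Karger for the weighted Matula variant rather than discussing the Nagamochi--Ibaraki certificate or iterating---but the substance is identical.
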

\begin{proof}
We can assume, without loss of generality, that $G$ has no parallel edge simply by merging parallel edges into one edge.
As pointed out by Karger \cite{Karger99-skeleton,Karger_thesis}, one can modify Matula's algorithm \cite{Matula93} to get a $(2+\epsilon)$-approximate value of $\lambda$ in linear time. So we can assume that we have $\lambda'\in [\lambda, 3\lambda]$. We then contract all edges of weight more than $\lambda'$. Observe that the min-cut does not change after the contraction. (Note: Karger \cite{Karger99-skeleton} also described a simple $n^2$-approximation algorithm, which can be used to prove a weaker lemma than here. Such lemma is also enough for our purpose.)
\end{proof}

\begin{lemma}[Finding big $C(u^\da, v^\da)$] \label{lem:seq:estimate interesting}
There is a data structure that, after pre-processing a weighted $m$-edge connected graph $G=(V, E)$ and a spanning tree $T$ of $G$ in $O(t_p(m))$ time, can answer the following type of queries: 
\begin{itemize}
	\item Query($u$): Given a node $u$, the algorithm returns $O(\log n)$ sets of edges in $G$, denoted by $F_1(u), F_2(u), \ldots$, such that with high probability 
	\begin{itemize}
		\item[(I)] $|F_i(u)|=O(\log n)$ for all $i$, and 
		\item[(II)] for any node $v$ where $C(u^\da, v^\da) > \deg(u^\da)/2$, there exists $i$ such that the number of edges in $F(u)$ that connect between $u^\da$ and $v^\da$, \textit{i.e.,} $|F_i(u)\cap \cC(u^\da, v^\da)|$ is at least $|F_i(u)|/16$. 
	\end{itemize}

\end{itemize}
The algorithm takes $t_s(m) \sum_i|F_i(u)|$ amortized time to answer each query.
\end{lemma}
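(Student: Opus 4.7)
The natural attempt---draw $\Theta(\log n)$ samples from $\cC(u^\da)$ and check which nodes $v^\da$ they hit---works immediately in the unweighted case by a Chernoff bound: a uniform sample from $\cC(u^\da)$ lies in $\cC(u^\da,v^\da)$ with probability $|\cC(u^\da,v^\da)|/|\cC(u^\da)| > 1/2$. The obstacle in the weighted setting is that our range sampling data structure of \Cref{thm:seq-data-structure} returns a \emph{uniformly} random point, not a \emph{weight-proportional} one, so a tiny-weight edge inside a big-weight bundle could be over-represented. To sidestep this I will bucket edges by weight class.

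For the pre-processing, invoke \Cref{lem:seq:assumption} so that every edge weight lies in $[1,3\lambda]$, and, since $\lambda\leq m=\poly(n)$, partition the edges into $L=O(\log n)$ classes $E_1,\dots,E_L$ where $E_i=\{e:w(e)\in[2^{i-1},2^i)\}$. Following \Cref{thm:seq:convert to point}, label each vertex by its post-order index in $T$ and plot each edge $e=(a,b)\in E_i$ as a point in the $2$-d grid at coordinates $(a,b)$; build a separate $(t_p,t_c,t_s)$-range-sampling structure $\mathcal D_i$ of \Cref{thm:seq-data-structure} on these $|E_i|$ points. Since the classes partition the edges, the total preprocessing cost is $\sum_{i=1}^{L}t_p(|E_i|)=O(t_p(m))$. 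For a query on $u$, recall that $u^\da$ is a contiguous range $[a_u,b_u]$, so $\cC(u^\da)$ is described by two axis-aligned rectangles; for each $i$ ask $\mathcal D_i$ for $k=\Theta(\log n)$ uniform samples from those rectangles, and let $F_i(u)$ be the returned set. By \Cref{thm:seq-data-structure}, $|F_i(u)|=O(\log n)$ (or $F_i(u)$ is all of $E_i\cap\cC(u^\da)$ if this set is tiny), and the total query time is $t_s(m)\sum_i|F_i(u)|$ as required.

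For correctness fix a $v$ with $C(u^\da,v^\da)>\deg(u^\da)/2$. Write $C_i$ for the weight of class-$i$ edges inside $\cC(u^\da,v^\da)$ and $D_i$ for the weight of class-$i$ edges inside $\cC(u^\da)$, and let $n_i,N_i$ be the corresponding edge counts. Since $\sum_i C_i>\tfrac12\sum_i D_i$, pigeonhole forces some index $i^*$ with $C_{i^*}>D_{i^*}/2$. Within class $i^*$ every weight lies in $[2^{i^*-1},2^{i^*})$, so
\[
\frac{n_{i^*}}{N_{i^*}}\;\geq\;\frac{C_{i^*}/2^{i^*}}{D_{i^*}/2^{i^*-1}}\;=\;\frac{C_{i^*}}{2D_{i^*}}\;>\;\frac14.
\]
Consequently a single uniform sample in $\mathcal D_{i^*}$ falls in $\cC(u^\da,v^\da)$ with probability $>1/4$. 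If $|F_{i^*}(u)|=N_{i^*}$ was small, the bound $|F_{i^*}(u)\cap\cC(u^\da,v^\da)|\geq|F_{i^*}(u)|/16$ holds deterministically from $n_{i^*}/N_{i^*}>1/4$; otherwise a Chernoff bound on $k=\Theta(\log n)$ independent samples gives $|F_{i^*}(u)\cap\cC(u^\da,v^\da)|\geq k/8\geq|F_{i^*}(u)|/16$ with probability $1-n^{-c}$ for any constant $c$ by making the hidden constant in $k$ large. A union bound over all $n$ choices of $v$ (and the $n$ choices of $u$ for a global statement) swallows the failure probability.

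The one step I expect to need care is the passage from a \emph{weight}-based inequality ($C(u^\da,v^\da)>\deg(u^\da)/2$) to a \emph{count}-based one ($n_{i^*}/N_{i^*}>1/4$): without the weight bucketing this translation can be arbitrarily lossy, which is exactly why \Cref{lem:seq:assumption}'s $w(e)\le 3\lambda$ assumption is crucial---it caps the number of buckets at $O(\log n)$ and keeps both the number of returned sets and the pre-processing cost on budget.
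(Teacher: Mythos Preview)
Your overall architecture---bucket edges into geometric weight classes, build one range-sampling structure per class, sample $\Theta(\log n)$ edges per class, and use pigeonhole plus the factor-$2$ weight spread inside a class to convert the weight inequality into a count inequality---is exactly the paper's approach, and your Chernoff/deterministic-small-case analysis is fine.

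There is, however, a real gap in how you bound the number of weight classes. You write ``every edge weight lies in $[1,3\lambda]$, and, since $\lambda\le m=\poly(n)$, partition the edges into $L=O(\log n)$ classes.'' Neither clause is justified: \Cref{lem:seq:assumption} gives only an \emph{upper} bound $w(e)\le 3\lambda$, not a lower bound of $1$; and in a weighted graph the min-cut value $\lambda$ is a weight, not a count, so $\lambda\le m$ is simply false in general. Without a bound on the ratio $\max_e w(e)/\min_e w(e)$ the number of buckets can be unbounded, which destroys both the $O(\log n)$-many-sets guarantee and the $O(t_p(m))$ preprocessing bound.

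The paper fixes this with one extra step you are missing: after invoking \Cref{lem:seq:assumption} (which also removes parallel edges, so $m<n^2$), it deletes all edges of weight below $\epsilon\lambda/n^2$. The surviving edges have weight ratio at most $3n^2/\epsilon$, hence $O(\log n)$ buckets. The total deleted weight is at most $n^2\cdot\epsilon\lambda/n^2=\epsilon\lambda$, and since $\deg(u^\da)\ge\lambda$ for every $u$, this perturbation changes $C(u^\da,v^\da)$ and $\deg(u^\da)$ only by a $(1\pm O(\epsilon))$ factor. Your pigeonhole argument then goes through with $\tfrac{1-O(\epsilon)}{2}$ in place of $\tfrac12$, yielding a count ratio $>\tfrac{1-O(\epsilon)}{4}$ inside the winning bucket, and the $1/16$ threshold absorbs the slack.
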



 \begin{proof}
	
	{\em Intuition:} If the graph is unweighted, we can simply sample $\Theta(\log n)$ edges from $\deg(u^\da)$ using the $(t_p, t_c, t_s)$-data structure from \Cref{thm:seq-data-structure} (where we convert edges to points as in  \Cref{thm:seq:convert to point}). We can then use the Chernoff's bound to argue that if $C(u^\da, v^\da) > \deg(u^\da)/2$ then one-third of the sample edges are between  $u^\da$ and $v^\da$ (where we view both $u^\da$ and $v^\da$ as ranges in two orthogonal axes as in Theorem \ref{thm:seq:convert to point}). 
	For weighted graphs, more has to be done, as follows.
	
	\paragraph*{Pre-processing:} Recall that, by \Cref{lem:seq:max weight}, we can assume that the maximum edge weight is at most $3\lambda$ and there is no parallel edge.
	Let $G'$ be a graph obtained by removing all edges in $G$ of weight less than $\epsilon\lambda/n^2$, for a small enough constant $\epsilon$. Thus, the ratio between the minimum and maximum weight of $G'$ is $3n^2/\epsilon$. To simplify the exposition, we assume that the minimum weight is $1$. 
	Partition edges of $G'$ into $E_0, E_1, \ldots, E_\ell$ where $E_i$ is the set of edges whose weights are in the range $[2^i, 2^{i+1})$. (Note that $\ell=O(\log n)$.) Convert each edge in $G'$ into a 2-dimensional point by a post-order traversal in $T$ as in \Cref{thm:seq:convert to point}. Let $P_i$ be the set of points resulting from converting edges in $E'_i$. Let $D_i$ be a $(t_p, t_c, t_s)$-data structure that is given points in $P_i$ to pre-process\footnote{Even though the edges in $P_i$ have weights in the range $[2^i, 2^{i+1})$, $D_i$ treats all edges equally by disregarding edge weights. More informally, $D_i$ treats the subgraph induced by $P_i$ as an unweighted simple graph. See Remark \ref{rem:multiset} for relevance.}.

	\paragraph*{Query(u):} For every $i$, we query $D_i$ to sample $k=c\log n$ many points, for a big enough constant $c$, where the queried range is defined to cover edges between $u^\da$ and $V(G)\setminus u^\da$ (which can be done according to \Cref{thm:seq:convert to point}.) Let $F_i(u)$ denote the set of edges corresponding to points returned by $D_i$. In other words, $F_i(u)$ is a set of random edges in between $u^\da$ and $V(G)\setminus u^\da$ in $E'_i$.
	
	
	
	\paragraph*{Analysis:} 
	By \Cref{thm:seq-data-structure}, the processing requires $\sum_i t_p(|P_i|) = O(t_p(m))$ time and the query outputs $O(\ell k)=O(\log^2 n)$ edges with high probability. The number of points sampled is $\sum_i |F_i(u)|$, and hence the time required to sample these many points is $t_s(m) \sum_i |F_i(u)|$. This is the time required to answer Query($u$). Additionally these data structures require $O(\log n)$ amortized time to output each point, which becomes an output edge. 
	
	Now it is left to show (II).
	That is, if we let $v$ be any node such that $C_G(u^\da, v^\da) > \deg_G(u^\da)/2$, then for some $i$
	\begin{align}
	|F_i(u)\cap \cC(u^\da, v^\da)|&\geq |F_i(u)|/16.\label{eq:seq:estimateGoal}
	\end{align}
	%
	Let $w(E')$ be the total weight of any set of edges $E'$.
	Note that $w(E(G)\setminus E(G'))\leq \epsilon\lambda$. 
	Since $\deg_{G'}(u^\da)\geq \lambda$ for any node $u$ (since it defines a cut), we have
	%
	\begin{align*}
	C_{G'}(u^\da, v^\da)  &> (1-2\epsilon)C_G(u^\da, v^\da) > (1-2\epsilon) \deg_G(u^\da)/2.
	\end{align*}

	By the averaging argument we have that there is some $i$ such that $w(E_i\cap 	\cC_{G'}(u^\da, v^\da))>\frac{1-2\epsilon}{2}w(E_i\cap \cC_G(u^\da)).$ Since all edges in $E_i$ have weights within a factor of two from each other, we have 
	$$|E_i\cap 	\cC_{G'}(u^\da, v^\da)| > \frac{1-2\epsilon}{4}|E_i\cap \cC_G(u^\da)|.$$
	\danupon{Need to spell out details?}
	Thus, when we sample $k=c \log n$ edges from $E_i\cap \deg_G(u^\da)$ to construct $F_i(u)$, the expected number of edges in $E_i\cap 	\cC_{G'}(u^\da, v^\da)$ is at least $\frac{1-2\epsilon}{4} c \log n$. So, by Chernoff's bound, we have
	$$|F_i(u)\cap 	\cC_{G'}(u^\da, v^\da)| > \frac{1-2\epsilon}{8}|F_i(u)|$$
	with high probability when $c$ is large enough. With small enough $\epsilon$, \Cref{eq:seq:estimateGoal} follows.
\end{proof}

We have a similar claim for down-\interest edges. The proof is similar to that of Lemma \ref{lem:seq:estimate interesting} and is omitted.

\begin{lemma}[Finding big $C(v^\da, V - u^\da)$] \label{lem:seq:estimate interesting_down}
There is a data structure that, after pre-processing a weighted $m$-edge connected graph $G=(V, E)$ and a spanning tree $T$ of $G$ in $O(t_p(m))$ time, can answer the following type of queries: 
\begin{itemize}
	\item Query($u$): Given a node $u$, the algorithm returns $O(\log n)$ sets of edges in $G$, denoted by $F'_1(u), F'_2(u), \ldots$, such that with high probability 
	\begin{itemize}
		\item[(I)] $|F'_i(u)|=O(\log n)$ for all $i$, and 
		\item[(II)] for any node $v$ where $C(v^\da, V - u^\da) > \deg(u^\da)/2$, there exists $i$ such that the number of edges in $F'(u)$ that connect between $u^\da$ and $v^\da$, \textit{i.e.,} $|F'_i(u)\cap \cC(v^\da, V - u^\da)|$ is at least $|F'_i(u)|/16$. 
	\end{itemize}
\end{itemize}
The algorithm takes $t_s(m)\sum_i |F'_i(u)|$ amortized time to answer each query, which is $O(\log^2 n \cdot t_s(m))$ with high probability.
\end{lemma}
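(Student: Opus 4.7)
The construction mirrors that of the preceding lemma almost verbatim, with the only change being the target subset of edges of $\cC(u^\da)$ whose sample-fraction we must lower-bound. The crucial structural observation is that both $\cC(u^\da,v^\da)$ (for $v\perp u$) and $\cC(v^\da,V-u^\da)$ (for $v\in u^\da$) are \emph{subsets} of the boundary $\cC(u^\da)$ of $u^\da$: in the first case the ``outside'' endpoint sits in $v^\da$; in the second case the ``inside'' endpoint sits in $v^\da\subseteq u^\da$. Hence the same range-sampling machinery can be reused, merely with a different predicate checked at the analysis stage.

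\textbf{Preprocessing.} Apply \Cref{lem:seq:max weight} to ensure maximum edge weight at most $3\lambda$ and no parallel edges. Let $G'$ be obtained from $G$ by discarding edges of weight below $\epsilon\lambda/n^2$ for a small constant $\epsilon$. Bucket the edges of $G'$ by weight into classes $E_0,E_1,\ldots,E_\ell$ with $\ell=O(\log n)$, where $E_i$ collects edges of weight in $[2^i,2^{i+1})$. Convert each edge to a 2-d point via the post-order embedding of \Cref{thm:seq:convert to point}, and, for each $i$, construct a $(t_p,t_c,t_s)$-data structure $D_i$ on the resulting point set $P_i$ (treating points as unweighted). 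The total preprocessing time is $\sum_i t_p(|P_i|)=O(t_p(m))$.

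\textbf{Query.} Given $u$, use \Cref{thm:seq:convert to point} to express the edge-set $\cC(u^\da)$ as a constant number of axis-aligned rectangles. For each $i$, use $D_i$ to draw $k=c\log n$ samples restricted to this region, and let $F'_i(u)$ be the resulting set of edges. By \Cref{thm:seq-data-structure} the total sampling time is $t_s(m)\sum_i|F'_i(u)|=O(t_s(m)\log^2 n)$ w.h.p., giving (I).

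\textbf{Analysis.} Fix any $v\in u^\da$ with $C(v^\da,V-u^\da)>\deg(u^\da)/2$. Since the discarded edges have total weight at most $\epsilon\lambda\le\epsilon\deg_G(u^\da)$, we still have $C_{G'}(v^\da,V-u^\da)>(1-2\epsilon)\deg_G(u^\da)/2$ while $\deg_{G'}(u^\da)\le\deg_G(u^\da)$. By an averaging argument over the $O(\log n)$ buckets, some index $i$ satisfies
\[
w\bigl(E_i\cap\cC_{G'}(v^\da,V-u^\da)\bigr)>\tfrac{1-2\epsilon}{2}\,w\bigl(E_i\cap\cC_G(u^\da)\bigr).
\]
Because edge weights within any single $E_i$ differ by a factor less than $2$, this translates to a cardinality bound
\[
|E_i\cap\cC_{G'}(v^\da,V-u^\da)|>\tfrac{1-2\epsilon}{4}\,|E_i\cap\cC_G(u^\da)|.
\]
Sampling $F'_i(u)$ uniformly from $E_i\cap\cC_G(u^\da)$ and applying a Chernoff bound (valid since $|F'_i(u)|=\Theta(\log n)$) yields $|F'_i(u)\cap\cC(v^\da,V-u^\da)|\ge|F'_i(u)|/16$ with high probability, choosing $c$ large enough and $\epsilon$ small enough. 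A union bound over the polynomially many candidate descendants $v$ upgrades this to a high-probability guarantee for all such $v$ simultaneously, giving (II).

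\textbf{Main obstacle.} There is essentially no new technical difficulty beyond the preceding lemma; the only point requiring care is that, because we restrict $v$ to descendants of $u$, both $\deg_G(u^\da)$ and $\cC(v^\da,V-u^\da)$ use the \emph{same} boundary $\cC(u^\da)$, so the weight-loss estimate from pruning sub-threshold edges applies uniformly and the averaging/Chernoff steps go through unchanged.
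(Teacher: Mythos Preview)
Your proposal is correct and follows essentially the same approach the paper intends: the paper explicitly omits this proof, stating it is ``similar to that of Lemma~\ref{lem:seq:estimate interesting},'' and your write-up faithfully reproduces that argument with the one necessary modification, namely observing that for $v\in u^\da$ the edge set $\cC(v^\da,V-u^\da)$ is again a subset of $\cC(u^\da)$ so that the same sampling and averaging/Chernoff analysis applies verbatim.
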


Next, we need a data-structure for heavy-light decomposition which, on an edge query, will provide the set of paths of decomposition which intersects the root-to-leaf path containing that edge. We use the \textit{compressed tree} data-structure from \cite{HT84}. A minor modification of this data-structure will give the following lemma.

\begin{lemma}[\cite{HT84}]\label{lem:seq-decomp-ds}
Consider $\cP$ to be a set of paths obtained by heavy-light decomposition of a tree $T$. Given a tree $T$ on $n$ vertices, there is a data-structure which can pre-process the tree in $O(n)$ times such that it can answer the following type of queries:
\begin{itemize}
\item Query($e$): Given an edge $e$ of the graph, the algorithm returns at most $\log n$ many edges $e'_1, \cdots, e'_k$ and a set of paths $\{p_1, \cdots, p_k\} \subseteq \cP$ such that (i) each $e'_i$ belongs to the same root-to-leaf path that $e$ belongs to, (ii) each $e'_i$ belongs to a distinct path $p_i$ from $\cP$, and (iii) $e'_i$ is the edge closest to the root in $p_i$.
\end{itemize}
The algorithm takes $O(\log n)$ time to answer each such query.
\end{lemma}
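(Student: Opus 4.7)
My plan is to use the standard heavy-light decomposition of Sleator and Tarjan, augmented with a small number of pointers per vertex so that one query reduces to $O(\log n)$ constant-time ``path-hops'' up the tree.

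First I would do a single DFS to compute the subtree size $s(v)$ for every vertex $v$; an edge from parent $x$ to child $y$ is called \emph{heavy} if $s(y) \geq s(x)/2$ and \emph{light} otherwise. Maximal chains of heavy edges form the multi-edge paths in $\cP$, while each light edge is treated as its own singleton path. During a second linear scan I would, for each edge $e$, record (a) a pointer $p(e)$ to the path in $\cP$ containing $e$, (b) the topmost edge $\tau(p(e))$ of that path (the edge closest to the root), and (c) the top vertex $t(p(e))$ of that path (the upper endpoint of $\tau(p(e))$). The entire preprocessing fits in $O(n)$ time and $O(n)$ space and directly produces the auxiliary information that the ``compressed tree'' of \cite{HT84} is designed to provide.

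The query would then be answered as follows. Given an input edge $e$, set $e_{cur} \leftarrow e$ and iteratively append $(\tau(p(e_{cur})), p(e_{cur}))$ to the output list, then replace $e_{cur}$ by the parent edge of $t(p(e_{cur}))$, stopping when $t(p(e_{cur}))$ is the root. Each iteration is $O(1)$. Correctness of the three required properties is immediate: $e'_i = \tau(p_i)$ is the edge of $p_i$ closest to the root by construction (property (iii)); the next $e_{cur}$ lies strictly above the top vertex of the previous path and hence belongs to a different path of $\cP$, so no two reported $p_i$'s coincide (property (ii)); and every $e_{cur}$ is a strict ancestor of the previous one, so all reported edges lie on the root-to-$e$ ancestor chain (property (i)).

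The number of iterations is at most the number of paths in $\cP$ meeting the root-to-$e$ ancestor chain, which is at most $\log n$ by the defining property of the Sleator-Tarjan decomposition: each hop crosses exactly one light edge, and traversing a light edge from child to parent at least doubles the ambient subtree size, so $\log n$ such hops suffice before the root is reached. This gives both $k \leq \log n$ and per-query time $O(\log n)$. I do not expect any genuine obstacle; the only point requiring any care is filling in the per-vertex pointers in a single DFS so that the ``jump to my path's top, then to its parent edge'' primitive truly runs in $O(1)$ time, which is standard.
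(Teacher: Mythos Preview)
The paper does not actually give a proof of this lemma: it simply cites the compressed-tree data structure of \cite{HT84} and remarks that ``a minor modification of this data-structure'' yields the statement. Your proposal is a faithful, correct reconstruction of exactly that standard construction (DFS for sizes, heavy/light labeling, per-path head/top pointers, and a path-hopping query), so there is nothing to compare against.

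One small inconsistency worth fixing: you declare that each light edge is its own singleton path in $\cP$, but then argue the hop bound via ``each hop crosses exactly one light edge.'' Under your singleton convention a hop out of a light-edge path need not itself cross a light edge, so the count is really $\le 2\log n + O(1)$ rather than $\log n$. If you instead adopt the convention that each path of $\cP$ begins with the light edge into its head vertex and continues through heavy edges below (so every edge lies in exactly one path and the top edge of every non-root path is light), then ``one light edge per hop'' becomes literally true and you recover the $\log n$ bound used by the paper's Theorem~\ref{thm:tree-decomp}. Either way the $O(\log n)$ query time stands and the argument is sound.
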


We need one more data-structure for technical purpose.
\begin{lemma} \label{lem:seq-b-tree}
Consider a family $\cS$ of $n$ many sets of a universe. There is a data-structure which allows the following operations: \begin{enumerate}
\item \textbf{Insertion.} Given a tuple $(S,T,x)$ where $S$ and $T$ are sets in $\cS$ and $x$ is an element in $S$, the data-structure inserts $(S,T, x)$ in time $O(\log n)$.
\item \textbf{Sequential access.} At any point, provides sequential access to a set of tuples of the form $(S,\{x_1, \cdots, x_s\},$ $T, \{y_1, \cdots, y_t\})$ in $O(1)$ amortized time where, for each $x_i$, there was an insertion $(S,T, x_i)$ and, for each $y_i$, there was an insertion $(T,S, y_i)$.
\end{enumerate}
\end{lemma}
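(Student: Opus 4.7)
The plan is to maintain a balanced binary search tree (say a red--black tree or B-tree) whose nodes are indexed by \emph{unordered} pairs $\{S, T\}$ of elements of $\cS$. Fix an arbitrary total order $\prec$ on $\cS$; the unordered pair $\{S, T\}$ is then represented canonically as the ordered pair $(\min_\prec(S, T),\, \max_\prec(S, T))$, and such ordered pairs are compared lexicographically. Since $|\cS| = n$, the tree holds at most $\binom{n}{2}$ distinct keys, so its height is $O(\log n)$.

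At each node, corresponding to some canonical key $\{A, B\}$ with $A \prec B$, I would store two dynamic arrays with $O(1)$-amortized append (e.g., doubling arrays): a list $L_{A \to B}$ containing every $x$ such that $(A, B, x)$ was inserted, and a list $L_{B \to A}$ containing every $y$ such that $(B, A, y)$ was inserted. In this way all the information about the unordered pair $\{A, B\}$ accumulates at one tree node.

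To implement \textbf{Insertion}$(S, T, x)$, canonicalize $\{S, T\}$, search for the corresponding node in $O(\log n)$ time, creating it if absent (using the standard rebalancing of the BST), and append $x$ to whichever of the two stored lists matches the original direction of insertion; the append costs $O(1)$ amortized. The total time is $O(\log n)$, as required. To implement \textbf{Sequential access}, perform an in-order traversal of the BST using an explicit stack; at each visited node $\{A, B\}$ output the tuple $(A, L_{A \to B}, B, L_{B \to A})$, simply by handing back pointers to the two lists. Moving between successive tree nodes during such a DFS costs $O(1)$ amortized, and each element inside a list is produced in $O(1)$ time when the lists are iterated sequentially, giving the claimed $O(1)$-amortized cost per output item.

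There is no real obstacle here: the only point that needs care is ensuring that insertions $(S, T, \cdot)$ and $(T, S, \cdot)$ route to the \emph{same} tree node so that they can be merged into one output tuple, which is exactly what the canonicalization step guarantees; everything else is a standard augmentation of a balanced BST.
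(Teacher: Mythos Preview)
Your proposal is correct and essentially identical to the paper's own proof sketch: both fix a total order on $\cS$, canonicalize the pair $\{S,T\}$ so that $(S,T,\cdot)$ and $(T,S,\cdot)$ route to the same key, store two lists at that key, and maintain the keys in a balanced search structure (the paper names a B+ tree, you name a red--black/B-tree) to get $O(\log n)$ insertion and $O(1)$-amortized sequential traversal.
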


\begin{proof}[Proof sketch.]
The data-structure maintains an inherent ordering of the sets in $\cS$. Given any tuple $(S, T, x)$, where $S \prec T$ w.r.t. that ordering, the data-structure first searches whether there is any entry corresponding to $(S,T)$ in it. This can be searched in $O(\log n)$ time by binary-search. If there is such an entry, the data-structure adds $x$ to the set corresponding to $S$ in that entry. Otherwise, the data-structure creates an entry corresponding to $(S,T)$ at the lexicographically correct position (w.r.t. the ordering of $\cS$) and adds $x$ to the set corresponding to $S$ in that entry. When $T \prec S$, the data-structure works exactly in the same way, but does the search with the tuple $(T,S)$ instead of $(S,T)$ for not creating double entry. This data-structure can be very efficiently maintained by traditional data-structure, such as, B+ tree which also provides the quick sequential access as claimed.
\end{proof}

At this point, we look at the implementation of Algorithm \ref{alg:tree} in the sequential model. The algorithm, given a weighted graph $G$ and a spanning tree $T$, will construct a few data-structures to efficiently perform the computations, but the schematics will remain more-or-less the same. We explain the steps as follows:

\begin{description}
\item[Initial data-strucure creation:] The algorithm first creates a data-structure $D_1$ as in \Cref{thm:seq:convert to point}. This can be done in time $t_p^{\sg}(m)$. For unweighted graph, the time needed is $t_p(m)$. The algorithm also initializes a min-heap $\heap$.

\item[Line \ref{algline:1-resp-s} to \ref{algline:1-resp-e}:] For each edge $e$ of $T$, the algorithm issues query $\deg(u^\da)$ to $D_1$ where $e$ is the parent edge of $e$, and inserts it in \heap. This can be done in time $O(n \cdot t_c^{\sg}(m))$ for weighted graphs, and $O(n \cdot t_c(m))$ for unweighted graphs.

\item[Line \ref{algline:heavy-light}]: This can be done in time $O(n)$. At this point, the algorithm also creates a data-structure $D_2$ as in Lemma \ref{lem:seq-decomp-ds}.

\item[Line \ref{algline:2-resp-path-s} to \ref{algline:2-resp-path-e}:] The algorithm uses $D_1$, as in Lemma \ref{lem:seq-2-resp-path}, to find out the minimum exact-2-respecting cut in each path $p \in \cP$ and inserts it into \heap. The total amount of time needed, by Lemma \ref{lem:seq-2-resp-path}, is $O(n \log^2 n \cdot t_c(m))$ for unweighted graphs, and $O(n \log^2 n \cdot t_c^{\sg}(m))$ for weighted graphs.

\item[Line \ref{algline:find-interest-s} to \ref{algline:find-interest-e}:] \begin{enumerate}
\item The algorithm first creates data-structures $D_3^{cross}$ as in Lemma \ref{lem:seq:estimate interesting} and $D_3^{down}$ as in Lemma \ref{lem:seq:estimate interesting_down}. The algorithm also initiates a data-structures $D_4^{cross}$ and $D_4^{down}$ as in Lemma \ref{lem:seq-b-tree}.

\item The algorithm iterates over each vertex $u \in V$. \begin{enumerate}
	\item The algorithm queries $u$ in $D_3^{cross}$ to obtain the set $\{F_i(u)\}_i$. Note that the number of edges in $\bigcup_i F_i(u)$ is at most $O(\log^2 n)$. This requires time $O(\log^2 n \cdot t_s(m))$.
	
	\item For each edge $e \in \bigcup_i F_i(u)$, let $v$ be the end-point of $e$ not in $u^\da$. The algorithm queries $v$  in $D_2$ to obtain a set of edges $\{e'_1, \cdots, e'_k\}$ of size $O(\log n)$ along with the paths $\{p'_1, \cdots, p'_k\}$ where $e'_i \in p'_i$. Hence, in total, $O(\log^3 n)$ many edges corresponding to $e$ are collected in this process---we call this set $E^T_e$. Each query takes time $O(\log n)$, and hence in total $O(\log^3 n)$ time is required.
	
	\item The algorithm then inserts $O(\log^3 n)$ many tuples of the form $(p, p',e')$ in $D_4^{cross}$ where $e \in P$, $e' \in E^T_e$, and $p'$ is the path in $\cP$ which contains $e'$. Each insertion takes time $O(\log n)$, and hence $O(\log^4 n)$ time is required in total.
	
	\item The algorithm also follows Lemma \ref{lem:seq:estimate interesting_down} to query $D_3^{down}$ to obtain $O(\log n)$ many sets $\{F'_i(u)\}_i$, and executes exactly similar steps: For every edge $e \in \bigcup_i F'_i(u)$, the algorithm queries $D_2$ to obtain all potential interesting set of edges and path (up to $u$), and inserts all these $O(\log^3 n)$ tuples in $D_4^{down}$. The algorithm also inserts all tuples of the form $(p,p',e')$ in $D_4^{down}$ where $e \in p$, $e'$ is an ancestor of $e$ and $e' \in p'$. Total time required is, as before, $O(\log^4 n)$. 
	
	\item As the algorithm iterates over each edge $u \in V$, the total time required in $O(n (\log^4 n + \log^2 n \cdot t_s(m)))$.
\end{enumerate}
\end{enumerate} 

\item[Line \ref{algline:pairing-s} to \ref{algline:pairing-e}:] The algorithm uses $D_1$, as in Lemma \ref{lem:seq-pairing}, to find the minimum 2-respecting-cuts for each tuple in $D_4^{cross}$ and $D_4^{down}$. A similar amortized analysis as that in Section \ref{sec:q-implement} or \ref{sec:s-implement} will yield a running time of $O(n \log n \cdot t_c(m))$ for unweighted graphs, and $O(n \log n \cdot t_c^{\sg}(m))$ for weighted graph.
\end{description}

At the end, the algorithm finds the minimum entry from $\heap$ and outputs it. Let us compute the total running time for this implementation when the graph is weighted. The total running time of Algorithm \ref{alg:tree} is (ignoring lower order terms) 

$$O(t_p^{sg}(m) + t_p(m) + n \log^2 n \cdot t_s(m) + n \log^2 n \cdot t_c^{\sg}(m) + n \log^4 n),$$ 

which is $O(m \log n + n \log^4 n)$ if we plug in the value of $t_p^{\sg}(m), t_c^{\sg}(m), t_p(m), t_s(m)$ from \Cref{thm:seq-data-structure} and \Cref{thm:chazellle-semigroup}. This implementation does not use any bit operation. Note that we can use the range-sampling data-structure where bit-operations are allowed instead of the one where bit-operations are not allowed---Unfortunately this does not improve the running time as $t^{\sg}_p$ dominates $t_p$. In the unweighted case, however, we can replace the semigroup range searching data structure by the range counting data structure, and so we do not need to use the $(t_p^{\sg}, t_c^{\sg})$-data-structure. Hence the running time in this case is

$$O(t_p(m) + n \log^2 n \cdot t_s(m) + n \log^2 n \cdot t_c(m) + n \log^4 n),$$ 

which is $O(m \sqrt{\log n} + n \log^4 n)$ if we put values of $t_p(m), t_c(m), t_s(m)$ from Theorem \ref{thm:seq-data-structure} and \ref{thm:seq-data-structure}. Clearly, this case requires bit operations. 

We will now look at the correctness of this implementation. The most interesting part of this algorithm is Line \ref{algline:find-interest-s} to \ref{algline:find-interest-e}. From Lemma \ref{lem:seq:estimate interesting}, we know that if $v$ is cross-\interest to $u$ in $G$, then there is an $i$ such that $|F_i(u) \cap \cC(u^\da, v^\da)| \geq |F_i(u)|/16$. Let pick such an edge $e' \in F_i(u) \cap \cC(u^\da, v^\da)$ and let $v'$ be the vertex of $e'$ in $v^\da$. Note that $v$ belongs to the same root-to-leaf path as $v'$. Hence, all edges of the root-to-leaf path which are cross-\interest w.r.t. $e$ will be discovered by a query to $D_2$ with $v'$ \footnote{In fact, we do not need such a strong guarantee on $F_i(u) \cap \cC(u^\da, v^\da)$ for this purpose. We just need that $|F_i(u) \cap \cC(u^\da, v^\da)| \neq \emptyset$.}. As we iterate over all $i$, the set $\cP^{cross}_e$ will be discovered w.h.p. A similar argument can be made for $\cP^{down}_e$ using Lemma \ref{lem:seq:estimate interesting_down}. This immediately implies the following claim. 

\begin{claim}
In Line \ref{algline:find-interest-s} to \ref{algline:find-interest-e}, the algorithm finds, for every edge $e$, the set $\cP^{cross}_e$ and $\cP^{down}_e$.
\end{claim}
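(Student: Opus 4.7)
The plan is to prove the cross-interesting case; the down-interesting case follows by the symmetric argument, replacing Lemma~\ref{lem:seq:estimate interesting} with Lemma~\ref{lem:seq:estimate interesting_down} and the quantity $C(u^\da, v^\da)$ with $C(v^\da, V - u^\da)$. Fix a tree edge $e$ with child endpoint $u$, let $p \in \cP$ be the unique path in the heavy-light decomposition containing $e$, and fix an arbitrary target $p' \in \cP^{cross}_e$. I will show that, with high probability over the randomness of $D_3^{cross}$, a tuple of the form $(p, p', \cdot)$ is inserted into $D_4^{cross}$ during the algorithm's processing of $u$.

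First I would isolate a concrete witness edge inside $p'$. By definition of $\cP^{cross}_e$, there exists some edge $e''$ in $p'$ that is cross-interesting w.r.t.~$e$. Let $v$ be the child endpoint of $e''$; Definition~\ref{def:cross-int-vertex} then gives $v \bot u$ and $C(u^\da, v^\da) > \deg(u^\da)/2$. Next I would invoke Lemma~\ref{lem:seq:estimate interesting}(II) on $u$: with high probability there is some index $i$ and a sampled edge $e_s \in F_i(u)$ with $e_s \in \cC(u^\da, v^\da)$. Let $v'$ be the endpoint of $e_s$ lying in $v^\da$. The decisive geometric observation is that $v' \in v^\da$ forces $v$ to be an ancestor of $v'$ in $T$, so the edge $e''$ sits on the unique root-to-$v'$ path of $T$.

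It remains to trace the algorithm's action. While processing $u$, the algorithm enumerates every sampled edge in $\bigcup_i F_i(u)$ and, for each such edge, queries $D_2$ on the endpoint not in $u^\da$. For $e_s$ this endpoint is exactly $v'$. By Lemma~\ref{lem:seq-decomp-ds}, the $D_2$-query on $v'$ returns a representative edge of every path in $\cP$ that shares an edge with the root-to-$v'$ path. Since $e'' \in p'$ lies on that path, $p'$ is one of the returned paths, so the algorithm inserts some edge of $p'$ into $E^T_e$ and records the tuple $(p, p', \cdot)$ in $D_4^{cross}$. Hence $p'$ is recognized as a candidate cross-interesting path for $e$, which is what was required.

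Finally, a union bound over the $O(n)$ vertices at which Lemma~\ref{lem:seq:estimate interesting} is invoked upgrades the per-edge guarantee to a simultaneous high-probability statement for all $e \in T$; the down-interesting case is handled identically using Lemma~\ref{lem:seq:estimate interesting_down}, together with the fact that the down-interesting edges w.r.t.~$e$ form a path in $u^\da$ (Observation~\ref{obs:int-ancestor}). The only step deserving real attention is the geometric linkage in the second paragraph---everything else is bookkeeping across $D_2$, $D_3^{cross}/D_3^{down}$, and $D_4^{cross}/D_4^{down}$---so the main obstacle is simply making explicit that the ``witness'' endpoint $v'$ provided by the sampling lemma is guaranteed to lie below $v$ in $T$, which is exactly what allows $D_2$ to expose the path $p'$.
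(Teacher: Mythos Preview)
Your proposal is correct and follows essentially the same approach as the paper: pick a witness edge $e''\in p'$ that is cross-interesting to $e$, use Lemma~\ref{lem:seq:estimate interesting} to obtain a sampled edge with an endpoint $v'\in v^\da$, observe that $v$ (and hence $e''$) lies on the root-to-$v'$ path, and conclude that the $D_2$-query on $v'$ returns $p'$. Your write-up is in fact slightly more explicit than the paper's (you spell out the ancestor relation and the union bound), but the argument is the same.
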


The correctness of the rest of the implementation follows from the correctness of the data-structures (\Cref{thm:seq:convert to point}, \ref{lem:seq-pairing} and \ref{lem:seq-2-resp-path}).

\section{Open problems}\label{sec:open}

In this work we show that weighted min-cut can be solved efficiently in many settings. One setting that remains difficult is for dynamic graphs. Improved dynamic algorithms (or conditional lower bounds) for maintaining min-cut exactly and approximately are both very challenging. For exact min-cut, the tree packing can be maintained using existing dynamic minimum spanning tree algorithms (e.g. \cite{HenzingerK99,HolmLT01,Wulff-Nilsen13a,KapronKM13,NanongkaiSW17}). The challenge is again maintaining the minimum 2-respecting cut. See \cite{Thorup07} for the state of the art of approximately maintaining the min-cut.
Additionally, it is interesting to see if the number of passes can be improved in the streaming setting.

All efficient min-cut algorithms are {\em randomized} (there is a near-linear time deterministic algorithm in the sequential setting, but it works only on simple graphs \cite{KawarabayashiT19}). Can we obtain an efficient {\em deterministic} algorithm in any of the settings considered in this paper? Note that our algorithm is randomized due to the need of a cut sparsifier. A cut sparsifier can be obtained deterministically in almost-linear time using the recent result in \cite{ChuzhoyGLNPS19}, but the approximation ratio is too high ($n^{o(1)}$). Improving the approximation ratio is the first important step.

Two problems related to min-cut, namely directed min-cut and vertex connectivity, are still widely open. We leave as a major open problem designing an efficient algorithm for any of these problems either in the sequential, streaming, and cut-query setting, or any other interesting computational model. One model that is of our interest is the two-party communication model where edges are partitioned into two parties. In this setting, the two party can compute the min-cut by communicating $\tilde O(n)$ bits. Can they use the same amount of communication for computing the minimum directed cut or vertex connectivity? In fact, this question is also open and interesting for other graph problems such as the maximum bipartite matching.

\section*{Acknowledgement} 

We thank Michal Dori and Sorrachai Yingchareonthawornchai for actively participating in the discussions during their visits at KTH, when we were trying to prove a communication complexity lower bound for the interval problem (which turns out to be impossible due to our results). We would also like to thank Yuval Efron, Joseph Swernofsky and Jan van den Brand for fruitful discussions on different stages of this work.

We thank Timothy Chan for pointing out relevant results on range searching data structures. We also thank Paweł Gawrychowski for pointing out an error in the earlier version, where we used Chazelle's range counting instead of semigroup range searching data structure. (This change does not affect our results.) 
\danupon{SAKNIK: Check}

This project has received funding from the European Research Council (ERC) under the European Union's Horizon 2020 research and innovation programme under grant agreement No 715672. Authors are also supported by the Swedish Research Council (Reg. No. 2015-04659 and 2019-05622).

\bibliography{biblio}

\newcommand{\etalchar}[1]{$^{#1}$}
\begin{thebibliography}{AKM{\etalchar{+}}87}

\bibitem[ACK19]{AssadiCK19}
Sepehr Assadi, Yu~Chen, and Sanjeev Khanna.
\newblock Polynomial pass lower bounds for graph streaming algorithms.
\newblock In {\em {STOC}}, pages 265--276. {ACM}, 2019.

\bibitem[AGM12a]{AhnGM12-soda}
Kook~Jin Ahn, Sudipto Guha, and Andrew McGregor.
\newblock Analyzing graph structure via linear measurements.
\newblock In {\em {SODA}}, pages 459--467. {SIAM}, 2012.

\bibitem[AGM12b]{AhnGM12-pods}
Kook~Jin Ahn, Sudipto Guha, and Andrew McGregor.
\newblock Graph sketches: sparsification, spanners, and subgraphs.
\newblock In {\em {PODS}}, pages 5--14. {ACM}, 2012.

\bibitem[AKM{\etalchar{+}}87]{AggarwalKMSW87}
Alok Aggarwal, Maria~M. Klawe, Shlomo Moran, Peter~W. Shor, and Robert~E.
  Wilber.
\newblock Geometric applications of a matrix-searching algorithm.
\newblock {\em Algorithmica}, 2:195--208, 1987.

\bibitem[BGKS15]{BGKS15}
Maxim~A. Babenko, Pawel Gawrychowski, Tomasz Kociumaka, and Tatiana
  Starikovskaya.
\newblock Wavelet trees meet suffix trees.
\newblock In {\em Proceedings of the 26th SODA}, pages 572--591, 2015.

\bibitem[BKR96]{BurkardKR96}
Rainer~E. Burkard, Bettina Klinz, and R{\"{u}}diger Rudolf.
\newblock Perspectives of monge properties in optimization.
\newblock {\em Discret. Appl. Math.}, 70(2):95--161, 1996.

\bibitem[CGL{\etalchar{+}}19]{ChuzhoyGLNPS19}
Julia Chuzhoy, Yu~Gao, Jason Li, Danupon Nanongkai, Richard Peng, and
  Thatchaphol Saranurak.
\newblock A deterministic algorithm for balanced cut with applications to
  dynamic connectivity, flows, and beyond.
\newblock {\em CoRR}, abs/1910.08025, 2019.

\bibitem[Cha88]{Chaz88}
Bernard Chazelle.
\newblock A functional approach to data structures and its use in
  multidimensional searching.
\newblock {\em {SIAM} J. Comput.}, 17(3):427--462, 1988.

\bibitem[CP10]{CP10}
Timothy~M. Chan and Mihai Patrascu.
\newblock Counting inversions, offline orthogonal range counting, and related
  problems.
\newblock In {\em Proceedings of the 21st {SODA}}, pages 161--173, 2010.

\bibitem[DEMN21]{DoryEMN21}
Michal Dory, Yuval Efron, Sagnik Mukhopadhyay, and Danupon Nanongkai.
\newblock Distributed weighted min-cut in nearly-optimal time.
\newblock {\em STOC}, 2021.

\bibitem[FKM{\etalchar{+}}05]{FeigenbaumKMSZ05}
Joan Feigenbaum, Sampath Kannan, Andrew McGregor, Siddharth Suri, and Jian
  Zhang.
\newblock On graph problems in a semi-streaming model.
\newblock {\em Theor. Comput. Sci.}, 348(2-3):207--216, 2005.

\bibitem[GKP12]{GoelKP12}
Ashish Goel, Michael Kapralov, and Ian Post.
\newblock Single pass sparsification in the streaming model with edge
  deletions.
\newblock {\em CoRR}, abs/1203.4900, 2012.

\bibitem[GMW20]{GMW19}
Pawel Gawrychowski, Shay Mozes, and Oren Weimann.
\newblock Minimum cut in o(m log{\({^2}\)} n) time.
\newblock In {\em {ICALP}}, volume 168 of {\em LIPIcs}, pages 57:1--57:15.
  Schloss Dagstuhl - Leibniz-Zentrum f{\"{u}}r Informatik, 2020.

\bibitem[GMW21]{gawrychowski2021note}
Pawe{\l} Gawrychowski, Shay Mozes, and Oren Weimann.
\newblock A note on a recent algorithm for minimum cut.
\newblock In {\em Symposium on Simplicity in Algorithms (SOSA)}, pages 74--79.
  SIAM, 2021.

\bibitem[GNT20]{GhaffariNT20}
Mohsen Ghaffari, Krzysztof Nowicki, and Mikkel Thorup.
\newblock Faster algorithms for edge connectivity via random 2-out
  contractions.
\newblock In {\em {SODA}}. {SIAM}, 2020.

\bibitem[HdLT01]{HolmLT01}
Jacob Holm, Kristian de~Lichtenberg, and Mikkel Thorup.
\newblock Poly-logarithmic deterministic fully-dynamic algorithms for
  connectivity, minimum spanning tree, 2-edge, and biconnectivity.
\newblock {\em J. {ACM}}, 48(4):723--760, 2001.
\newblock Announced at STOC 1998.

\bibitem[HK99]{HenzingerK99}
Monika~Rauch Henzinger and Valerie King.
\newblock Randomized fully dynamic graph algorithms with polylogarithmic time
  per operation.
\newblock {\em J. {ACM}}, 46(4):502--516, 1999.
\newblock Announced at STOC 1995.

\bibitem[HRW17]{HenzingerRW17}
Monika Henzinger, Satish Rao, and Di~Wang.
\newblock Local flow partitioning for faster edge connectivity.
\newblock In {\em {SODA}}, pages 1919--1938. {SIAM}, 2017.

\bibitem[HT84]{HT84}
Dov Harel and Robert~Endre Tarjan.
\newblock Fast algorithms for finding nearest common ancestors.
\newblock {\em {SIAM} J. Comput.}, 13(2):338--355, 1984.

\bibitem[Kar95]{Karger_thesis}
David~Ron Karger.
\newblock {\em Random Sampling in Graph Optimization Problems}.
\newblock PhD thesis, Stanford, CA, USA, 1995.
\newblock UMI Order No. GAX95-16851.

\bibitem[Kar99]{Karger99-skeleton}
David~R. Karger.
\newblock Random sampling in cut, flow, and network design problems.
\newblock {\em Math. Oper. Res.}, 24(2):383--413, 1999.
\newblock announced at STOC'94.

\bibitem[Kar00]{Kar00}
David~R. Karger.
\newblock Minimum cuts in near-linear time.
\newblock {\em J. {ACM}}, 47(1):46--76, 2000.
\newblock announced at STOC'96.

\bibitem[KKM13]{KapronKM13}
Bruce~M. Kapron, Valerie King, and Ben Mountjoy.
\newblock Dynamic graph connectivity in polylogarithmic worst case time.
\newblock In {\em Proceedings of the Twenty-Fourth Annual {ACM-SIAM} Symposium
  on Discrete Algorithms, {SODA} 2013, New Orleans, Louisiana, USA, January
  6-8, 2013}, pages 1131--1142, 2013.

\bibitem[KLM{\etalchar{+}}17]{KLMMS17}
Michael Kapralov, Yin~Tat Lee, Cameron Musco, Christopher Musco, and Aaron
  Sidford.
\newblock Single pass spectral sparsification in dynamic streams.
\newblock {\em {SIAM} J. Comput.}, 46(1):456--477, 2017.

\bibitem[KT19]{KawarabayashiT19}
Ken{-}ichi Kawarabayashi and Mikkel Thorup.
\newblock Deterministic edge connectivity in near-linear time.
\newblock {\em J. {ACM}}, 66(1):4:1--4:50, 2019.
\newblock announced at STOC'15.

\bibitem[LMN21]{lopezmn21}
Andrés {López-Martínez}, Sagnik Mukhopadhyay, and Danupon Nanongkai.
\newblock Work-optimal parallel minimum cuts for non-sparse graphs, 2021.

\bibitem[LS19]{LovettS19}
Antonio~Molina Lovett and Bryce Sandlund.
\newblock A simple algorithm for minimum cuts in near-linear time.
\newblock {\em CoRR}, abs/1908.11829, 2019.

\bibitem[Mat93]{Matula93}
David~W. Matula.
\newblock A linear time 2+epsilon approximation algorithm for edge
  connectivity.
\newblock In {\em {SODA}}, pages 500--504. {ACM/SIAM}, 1993.

\bibitem[MNV16]{MunroNV16}
J.~Ian Munro, Yakov Nekrich, and Jeffrey~Scott Vitter.
\newblock Fast construction of wavelet trees.
\newblock {\em Theor. Comput. Sci.}, 638:91--97, 2016.

\bibitem[NSW17]{NanongkaiSW17}
Danupon Nanongkai, Thatchaphol Saranurak, and Christian Wulff{-}Nilsen.
\newblock Dynamic minimum spanning forest with subpolynomial worst-case update
  time.
\newblock In {\em {FOCS}}, pages 950--961. {IEEE} Computer Society, 2017.

\bibitem[Ove88]{Overmars88}
Mark~H. Overmars.
\newblock Efficient data structures for range searching on a grid.
\newblock {\em J. Algorithms}, 9(2):254--275, 1988.

\bibitem[RSW18]{RubinsteinSW18}
Aviad Rubinstein, Tselil Schramm, and S.~Matthew Weinberg.
\newblock Computing exact minimum cuts without knowing the graph.
\newblock In {\em Proceedings of the 9th {ITCS}}, pages 39:1--39:16, 2018.

\bibitem[ST83]{ST83}
Daniel~Dominic Sleator and Robert~Endre Tarjan.
\newblock A data structure for dynamic trees.
\newblock {\em J. Comput. Syst. Sci.}, 26(3):362--391, 1983.

\bibitem[Tho07]{Thorup07}
Mikkel Thorup.
\newblock Fully-dynamic min-cut.
\newblock {\em Combinatorica}, 27(1):91--127, 2007.

\bibitem[Vit85]{Vit85}
Jeffrey~Scott Vitter.
\newblock Random sampling with a reservoir.
\newblock {\em ACM Transactions on Mathematical Software}, 11:37--57, 1985.

\bibitem[Wul13]{Wulff-Nilsen13a}
Christian Wulff{-}Nilsen.
\newblock Faster deterministic fully-dynamic graph connectivity.
\newblock In {\em Proceedings of the Twenty-Fourth Annual {ACM-SIAM} Symposium
  on Discrete Algorithms, {SODA} 2013, New Orleans, Louisiana, USA, January
  6-8, 2013}, pages 1757--1769, 2013.

\bibitem[Zel11]{Zelke11}
Mariano Zelke.
\newblock Intractability of min- and max-cut in streaming graphs.
\newblock {\em Inf. Process. Lett.}, 111(3):145--150, 2011.

\end{thebibliography}

\appendix
\section{Proof of Claim \ref{clm:interval-path-eqv}} \label{app:interval-path-eqv}
A path $T$ of length $n$ has $n-1$ edges. We consider the edges $\{e_1, \cdots, e_{n-1}\}$ of the path $T$ from left to right. This ordering naturally defines left and right vertex of an edge in the path. Given this ordering, we identify each edge $e_\ell$ with the point $\ell$ in the Interval problem. For an edge $e$ (which is not in the path) which connects the left vertex of edge $e_i$ and the right vertex of edge $e_j$, we identify $e$ with an interval $I_e= (i,j)$.

If we consider a 2-respecting cut that respects edges $e_i$ and $e_j$ ($i \leq j$), then we claim that the cost of this cut is equal to $\cost(i,j)$ in the Interval problem. This is because:
\begin{enumerate}
    \item For any edge $e$ to contribute to this cut, the edge has to connect a vertex $v$, which is between right-side of edge $e_i$ and left-side of edge $e_j$, with a vertex $u$ which is either on the left-side of edge $e_i$ or right-side of edge $e_j$. The interval $I_e$, will cover either $i$ or $j$, but nor both, and hence contributes to $\cost(i,j)$. In Figure \ref{fig:interval}, where we are interested in $\cost(2, 5)$, interval $I_3$ is such an interval.
    \item For any edge $e'$, which connects the left-side of edge $e_i$ to the right-side of edge $e_j$, $I_e$ covers both $i$ and $j$. For any edge $e''$, which is contained between the right-side of edge $e_i$ to the left-side of edge $e_j$, $I_{e'}$ covers none of $i$ and $j$. In both cases, the corresponding intervals do not contribute to $\cost(i,j)$. In Figure \ref{fig:interval}, interval $I_1$ is the first kind of interval, and $I_2$ is the second kind. Neither of $I_1$ and $I_2$ contributes to $\cost(2,5)$.
\end{enumerate}
Note that this argument goes both ways, \textit{i.e.,} $\cost(i',j')$ represents the cost of a 2-respecting min-cut which respects edge $e_{i'}$ and $e_{j'}$.
\end{document}